\tikzstyle{background}=[rectangle,fill=gray!10, inner sep=0.1cm, rounded corners=0mm]
\newcommand{\calR}{\mathcal{R}}
\newcommand{\proj}[1]{\pi_{#1}}
\newcommand{\varSet}{\mathcal{X}}
\newcommand{\intInterv}[2]{\llbracket #1 , #2 \rrbracket}
\newcommand{\expr}[1]{\textup{\textsf{Exp}}(#1)}
\newcommand{\linExpr}[1]{\textup{\textsf{Exp}}_\ell(#1)}
\newcommand{\affExpr}[1]{\textup{\textsf{Exp}}_a(#1)}
\newcommand{\subs}[1]{\textup{\textsf{Sub}}(#1)}
\newcommand{\valuations}[1]{\textup{\textsf{Val}}(#1)}
\newcommand{\outFctCRA}{o}
\newcommand{\linPart}[1]{#1^{(l)}}
\newcommand{\affPart}[1]{#1^{(a)}}
\newcommand{\expToMap}[1]{\underline{#1}}
\newcommand{\noUnderline}[1]{#1}
\newcommand{\vectSet}[2]{#1^{1 \times #2}}
\newcommand{\matrSet}[3]{#1^{#2 \times #3}}
\newcommand{\sqmatrSet}[2]{#1^{#2 \times #2}}
\newcommand{\genMono}[1]{\left< #1 \right>}
\newcommand{\linSpan}[1]{\textup{\textsf{span}}\left(#1\right)}
\newcommand{\affSpan}[1]{\textup{\textsf{aff}}\left(#1\right)}
\newcommand{\lReachSet}[1]{\textup{\textsf{LR}}\left(#1\right)}
\newcommand{\rReachSet}[1]{\textup{\textsf{RR}}\left(#1\right)}
\newcommand{\linClosure}[1]{\overline{#1}^\ell}
\newcommand{\affClosure}[1]{\overline{#1}^a}
\newcommand{\linHull}[1]{\linClosure{\lReachSet{#1}}}
\newcommand{\affHull}[1]{\affClosure{\lReachSet{#1}}}
\newcommand{\chgBaseMatr}[2]{P_{#1 \to #2}}
\newcommand{\resizIdMatr}[2]{I_{#1 \times #2}}
\newcommand{\canonBasis}[1]{E_{#1}}
\newcommand{\im}[1]{\textup{\textsf{im}}\left(#1\right)}
\renewcommand{\ker}[1]{\textup{\textsf{ker}}\left(#1\right)}
\newcommand{\LH}{strongest Z-linear invariant\xspace}
\newcommand{\AH}{strongest Z-affine invariant\xspace}
\newcommand{\LAH}{strongest Z-linear/affine invariant\xspace}
\newcommand{\LrespAH}{strongest Z-linear (\resp Z-affine) invariant\xspace}
\newcommand{\CRA}{\ensuremath{\mathrm{CRA}}\xspace}
\newcommand{\WA}{\ensuremath{\mathrm{WA}}\xspace}
\newcommand{\eg}{\textit{e.g.~}}
\newcommand{\ie}{\textit{i.e.~}}
\newcommand{\ea}{\textit{et al. }}
\newcommand{\resp}{resp. }
\newcommand{\permutMatrSet}[1]{\textup{\textsf{Perm}}_{#1}}
\newcommand{\set}[1]{\left\{#1\right\}}
\newcommand{\nptime}{\textsc{NPTime}\xspace}
\newcommand{\pspace}{\textsc{PSpace}\xspace}
\newcommand{\exptime}{\textsc{ExpTime}\xspace}
\newcommand{\nexptime}{\textsc{NExpTime}\xspace}
\newcommand{\texptime}{\textsc{2-ExpTime}\xspace}
\newcommand{\dWA}{d} 
\newcommand{\dInv}{k} 
\newcommand{\lInv}{n} 
\newcommand{\rCRA}{\dInv} 
\newcommand{\sCRA}{\lInv} 
\title{\hbox{Minimizing Cost Register Automata over a Field}}
\author{Yahia Idriss {Benalioua}}{Aix Marseille Univ, CNRS, LIS, Marseille, France}{yahia-idriss.benalioua@lis-lab.fr}{}{}
\author{Nathan Lhote}{Aix Marseille Univ, CNRS, LIS, Marseille, France}{nathan.lhote@lis-lab.fr}{}{}
\author{Pierre-Alain Reynier}{Aix Marseille Univ, CNRS, LIS, Marseille, France}{pierre-alain.reynier@lis-lab.fr}{}{}
\authorrunning{Y.I. Benalioua, N. Lhote and P.-A. Reynier} 
\keywords{Weighted automata, Cost Register automata, Zariski topology} 
\begin{document}

\maketitle

\begin{abstract}
Weighted automata (\WA) are an extension of finite automata that define functions
 from words to values in a given semiring.
 An alternative deterministic model, called Cost Register Automata ($\CRA$),
 was introduced by Alur \ea It enriches deterministic finite automata with
 a finite number of registers, which store values, updated at each transition
 using the operations of the semiring.
It is known that \CRA with register updates defined by linear maps
 have the same expressiveness as \WA.
 Previous works have studied the register minimization problem:
 given a function computable by a \WA and an integer $k$,
 is it possible to realize it using a \CRA with at most $k$ registers?

 In this paper, we solve this problem for $\CRA$ over a field
 with linear register updates, using the notion of linear hull,
 an algebraic invariant of $\WA$
 introduced recently by Bell and Smertnig.
We then generalise the approach to solve a more
challenging problem, that consists in minimizing simultaneously the number
of states and that of registers.
 In addition, we also lift our results to the setting of \CRA with affine updates.
Last, while the linear hull was recently shown to be computable by Bell and Smertnig, no complexity bounds were given.
 To fill this gap, we provide two new algorithms to compute invariants of \WA. This allows us to
 show that the register (resp. state-register) minimization problem can be solved in 2-\exptime
 (resp. in \nexptime).
\end{abstract}

\section{Introduction}

\textbf{Weighted automata (\WA)}
 are a quantitative extension of finite state automata and have been studied since the sixties~\cite{Schutzenberger61b}. These automata define functions from words to a given semiring: each transition
has a weight in the semiring and the weight of an execution is the product of the weights of the
transitions therein; the non-determinism of the model is handled using
the sum of the semiring: the weight associated with a word is the sum of
the weights of the different executions over this word. Functions realized by weighted
automata are called rational series.
This fundamental model has been widely studied
during the last decades~\cite{HBWA}.
While some expressiveness results can be
obtained in a general framework (such as the equivalence with
rational expressions), the decidability status of important problems
heavily depends on the considered semiring. Amongst the classical problems of interest,
one can mention \emph{equivalence}, \emph{sequentiality} (resp. \emph{unambiguity}), which aims at determining whether there exists
an equivalent deterministic (resp. unambiguous) \WA, and \emph{minimization}, which aims at minimizing the number of states.

Weighted automata over a field (\eg the field of rationals $\mathbb{Q}$) enjoy many nice properties: the equivalence of weighted
automata is decidable and they can be minimized, and both can be done efficiently
(see \eg \cite[Theorem 4.10 and Corollary 4.17 (Chapter III)]{Sakarovitch09}). The sequentiality and
unambiguity
are also
decidable, as shown  recently in~\cite{BellS21,BS23}, with no complexity bounds however.
The most studied semirings which are not fields are the tropical semirings
and the semiring of languages, and in both cases equivalence is undecidable
(see~\cite[Section 3]{Daviaud20} and~\cite[Theorem 8.4]{Berstel79})
and no minimization algorithm is known. Regarding sequentiality, partial
decidability results have been obtained for these semirings
using the notion of twinning property~\cite{Choffrut77,Mohri97}.

\textbf{Cost register automata (\CRA)} have been introduced more recently by Alur \ea \cite{AlurDDRY13}.
A cost register automaton is a deterministic finite state automaton endowed with a finite number of
registers storing values from the semiring.
The registers are initialized by some values, then at each
transition the values are updated using the operations and constants of the semiring.
Several fragments of \CRA can be considered by restricting the operations allowed.
For instance, an easy observation
is that \WA are exactly \CRA with one state (however, one can observe that adding states does not extend expressiveness) and linear updates, \emph{i.e.}
updates of the form $X \coloneqq \sum_{i=1}^k X_i * c_i$
(intuitively, the new values of the registers only depend
linearly on the previous ones).
Thus, the model of linear \CRA is an alternative to \WA which
allows to trade non-determinism for registers.

\textbf{The register minimization problem.} As \CRA are finite state automata
extended with registers storing elements from the semiring, it is natural to aim
at minimizing the number of registers used.
For a given class $\mathcal{C}$ of \CRA, this problem asks,
given a \WA and a number $\rCRA$, whether there exists an
equivalent \CRA in $\mathcal{C}$ with at most $\rCRA$ registers.
From a practical point of view,
reducing the number of registers allows to reduce the memory usage, since a register can require unbounded memory.
From a theoretical point of view, this problem can be understood as a refinement of the classical
problem of minimization of WA.
Indeed, a \WA can be translated into a linear CRA with a single state, and as many registers as the number
of states of the \WA.
This problem has been studied in~\cite{DBLP:conf/icalp/AlurR13,DRT16,DaviaudJRV17} for
three different models of \CRA but in all these works,
the additive law of the semiring is not allowed (\emph{i.e.} updates of the form
$X \coloneqq Y+Z$ are forbidden).
It is worth noticing that~\cite{DRT16} encompasses the
case of \CRA over a field, with only updates of the form
$X \coloneqq Y*c$, with $c$ an element of the field.

While the minimal number of registers needed to realise a \WA (also known as the \emph{register complexity}) is upper bounded by the number of states of a minimal \WA,
it may be possible to build an equivalent \CRA with fewer registers, but more states.
Hence there is a \emph{tradeoff} between the number of states and the number of registers.
This leads to the following \emph{state-register minimization problem for \CRA} which asks,
for a class $\mathcal{C}$ of \CRA,
given a \WA and integers $\sCRA,\rCRA$
whether an equivalent \CRA in $\mathcal{C}$ with $\sCRA$ states and $\rCRA$ registers can be constructed.
In this framework, the classical minimization of \WA corresponds to minimizing the number of registers
while using only one state, for the class of linear \CRA.

\textbf{The linear hull.} As mentioned before, the case of fields
is well-behaved to obtain decidability results.
In their recent work~\cite{BellS21},
Bell and Smertnig introduced the notion of \emph{linear hull} of a \WA.
This notion is inspired by the algebraic theory needed to study polynomial automata but cast
into a linear setting.
A linear algebraic set (aka linear Zariski closed set) is a finite union of vector
subspaces: we later call them \emph{Z-linear sets}.
Given a Z-linear set $S = \bigcup_{i=1}^p V_i$, the dimension of $S$ is the maximum of the dimensions of the $V_i$'s. In this work, the size of the union, $p$, is called the
length of $S$.
Observe that such Z-linear sets were also used in~\cite{ColcombetP17}
for a category-theoretic approach to minimization of weighted automata over a field.
We say such a set is an invariant if it contains the initial vector and is
stable under the updates of the automaton.
Then the linear hull of a weighted automaton is the strongest Z-linear invariant.
In~\cite{BellS21}, Bell \& Smertnig show that computing the
linear hull of a minimal automaton allows to decide sequentiality and unambiguity.
In addition, in~\cite{BS23}, they show that
the linear hull can effectively be computed, without providing complexity bounds however.

\textbf{Contributions.} In this work, we deepen the analysis of the linear hull of a \WA
in order to solve the register and state-register minimization problems
for linear \CRA.
In addition, we also provide new algorithms
to compute the linear hull which come with complexity upper bounds, which can
be used to derive complexity results for minimization problems as well as for sequentiality
and unambiguity of \WA.
More precisely, our contributions are as follows:
\begin{itemize}
\item Firstly, we show that \emph{the register minimization problem} for the class
of linear \CRA over a field can be solved in \texptime.
To this end, given a rational series $f$, we show that the minimal number of registers
needed to realize $f$ using a linear \CRA is exactly the dimension of the linear hull
of a minimal \WA of $f$.
We then show that the linear hull of a \WA can be computed in
\texptime.
We show that this complexity drops down to \exptime for the particular case of commuting transition matrices
(which includes the case of a single letter alphabet), with a matching lower bound.

\item As a consequence of the computation of the linear hull of a \WA
and of results proved in~\cite{BellS21}, we obtain a \texptime upper bound for the problems of
\emph{sequentiality and unambiguity of weighted automata} over a field, closing a question
raised in~\cite{BS23}.

\item Secondly, we prove that the \emph{state-register minimization problem}
for linear \CRA can be solved in \nexptime.
More precisely, given a minimal \WA $A$,
we show a correspondence between Z-linear invariants of $A$
and linear \CRA equivalent to $A$.
This correspondence maps the length (resp. dimension) of the
invariant to the number of states (resp. registers) of the equivalent linear \CRA.
We then provide a (constructive) \nexptime algorithm that,
given a minimal \WA and two integers $\sCRA,\rCRA$,
guesses a well-behaved invariant allowing to exhibit a satisfying equivalent \CRA.

\item Last, we actually present these results in a more general setting, by considering \emph{affine}
\CRA, which are a
slight extension of linear \CRA allowing to use affine maps in the updates of the registers.
\end{itemize}

\textbf{Outline of the paper.}
\todo{PA: new}
We present the models of weighted automata and cost register automata in Section~\ref{sec:prelim}.
We then formally define the two problems we consider, \emph{i.e.} register and state-register minimization
problems, and state our main results in
Section~\ref{sec:pbres}. In Section~\ref{sec:characterization}, we introduce the
necessary topological notions to define
Z-linear/Z-affine set and invariants of weighted automata,
and detail our characterizations of the register and state-register
complexities of a rational series. Finally in Section~\ref{sec:algos},
we present our algorithms, as well as their consequences
in terms of decidability and complexity for the two problems we
consider.
Omitted proofs and more details for Sections~\ref{sec:characterization} and~\ref{sec:algos}
can be found in the appendix.

\section{Weighted Automata and Cost Register Automata}
\label{sec:prelim}
\textbf{Basic concepts and notations.}
An alphabet $\Sigma$ is a finite set of letters.
The set of finite words over $\Sigma$ will be denoted by $\Sigma^*$,
the empty word by $\epsilon$ and, for two words $u$ and $v$, $uv$ will denote their concatenation.
For two sets $X$ and $Y$, we denote by $X \times Y$ their cartesian product and by
$\proj{X}\colon X \times Y \to X$ and $\proj{Y}\colon X \times Y \to Y$
we denote the canonical projection on $X$ and $Y$ respectively.
The set nonnegative integers will be denoted by $\mathbb{N}$.
For two integers $i,j$, we will denote by $\intInterv{i}{j}$
the interval of integers between $i$ and $j$ (both included).

A \emph{semigroup} $(S,*)$ is a set $S$ together with an associative binary operation $*$.
If $(S,*)$ has an identity element $e$, $(S,*,e)$ is called a \emph{monoid}
and if, moreover, every element has an inverse, $(S,*,e)$ is called a \emph{group}.
If there is no ambiguity, we will identify algebraic structures with the set that they are defined on.
A semigroup (or a monoid/group) is said to be \emph{commutative} if its law is.
A sub-semigroup (or submonoid/subgroup) of $S$ is a subset of $S$ that is a semigroup (or a monoid/group).
Given $E \subseteq S$, the monoid \emph{generated} by $E$, denoted
$\genMono{E}$, is the smallest sub-monoid of $S$ containing $E$.

A \emph{field} $(\mathbb{K},+,\cdot)$ is a structure where $(\mathbb{K},+,0)$ and
$(\mathbb{K}\setminus \left\{ 0 \right\}, \cdot, 1)$ are commutative groups and multiplication distributes
over addition.
In this work, we will consider $\mathbb{K}$ as the field of rational numbers $\mathbb{Q}$,
or any finite field extension of $\mathbb{Q}$, to perform basic operations in polynomial time.
For all $n \in \mathbb{N}$, $\mathbb{K}^n$ is an $n$-dimensional \emph{vector space}
over the field $\mathbb{K}$.
We will work with row vectors and apply matrices on the right, and we will identify
linear maps (\resp linear forms) with their corresponding matrices (\resp column vectors).
The set of $n$ by $m$ matrices over $\mathbb{K}$ will be denoted by $\matrSet{\mathbb{K}}{n}{m}$,
and $\vectSet{\mathbb{K}}{n}$ (or simply $\mathbb{K}^n$ when there is no ambiguity)
will denote the set of $n$-dimensional vectors.
For any matrix $M$ (\resp vector $v$), and indices $i$ and $j$,
$M_{i,j}$ (\resp $v_i$) will denote the value of the
entry in the $i$-th row and the $j$-th column of $M$ (\resp the $i$-th entry of $v$).
Matrix transposition will be denoted by $M^t$.
A \emph{vector subspace} of $\mathbb{K}^n$ is a subset of $\mathbb{K}^n$
stable by linear combinations and for all subsets $E$ of $\mathbb{K}^n$,
$\linSpan{E}$ will denote the smallest vector subspace of $\mathbb{K}^n$
containing $E$ (if $E$ contains a single vector $(x_1, \dots, x_n)$,
$\linSpan{E}$ will be denoted by $\linSpan{x_1, \dots, x_n}$).

$\mathbb{K}^n$ can also be seen as an $n$-dimensional \emph{affine space}.
Affine maps $f \colon \mathbb{K}^n \to \mathbb{K}^m$ are maps of the form
$f(u) = u \linPart{f} + \affPart{f}$ where $\linPart{f} \in \matrSet{\mathbb{K}}{n}{m}$
and $\affPart{f} \in \vectSet{\mathbb{K}}{m}$.
An \emph{affine subspace} $A$ of $\mathbb{K}^n$ is a subset
of $\mathbb{K}^n$ of the form $A = p + V$ with $p \in A$ and $V$ a vector subspace of $\mathbb{K}^n$.
They are stable by affine combinations (linear combinations with coefficients adding up to 1).
For all $E \subseteq \mathbb{K}^n$, $\affSpan{E}$ will denote the smallest
affine subspace of $\mathbb{K}^n$ containing $E$.

\textbf{Weighted Automata.}
Let $\Sigma$ be a finite alphabet and $(\mathbb{K},+,\cdot)$ be a field.

\begin{definition}[Weighted Automaton]
  A \emph{Weighted Automaton} (\WA for short) of dimension $\dWA$,
  on $\Sigma$ over $\mathbb{K}$, is a triple $\mathcal{R} = (u,\mu,v)$,
  where $u \in \vectSet{\mathbb{K}}{\dWA}$, $v \in \matrSet{\mathbb{K}}{\dWA}{1}$
  and $\mu \colon \Sigma^* \to \sqmatrSet{\mathbb{K}}{\dWA}$ is a monoid morphism.
  We will call $u$ and $v$ the \emph{initial} and \emph{terminal} vectors respectively and
   $\mu(a)$, for $a \in \Sigma$, will be called a \emph{transition matrix}.
  A \WA realizes a \emph{formal power series} over $\Sigma^*$ with coefficients in $\mathbb{K}$
  (a function from $\Sigma^*$ to $\mathbb{K}$) defined, for all $w \in \Sigma^*$,
  by $ \left\llbracket \mathcal{R} \right\rrbracket (w) = u \mu(w) v$.
  Any series that can be realized by a \WA will be called \emph{rational}.
\end{definition}


\begin{wrapfigure}{r}{.35\textwidth}
\vspace{-.5cm}
  \centering
  \scalebox{.9}{
    \begin{tikzpicture}[->,>=stealth',shorten >=1pt,auto,node distance=1.8cm]
      \tikzstyle{every state}= [minimum size=5mm]

      \node[state] (p) at (0,0) {$q_1$};

      \node[state,draw=none] (ps) at (-1,.3) {};
      \node[state,draw=none] (pp) at (-1,-.3) {};

      \node[state] (q) [right of=p] {$q_2$};

      \draw
      (p) edge [bend left] node{$a : 2$}(q)
      (q) edge [bend left] node{$a : 2$}(p)
      (ps) edge [pos=.1] node {$1$}(p)
      (p) edge [pos=.8] node{$1$}(pp);
    \end{tikzpicture}
  }
  \caption{The WA of Example~\ref{ex:WA}.}
  \label{fig:WA}
\end{wrapfigure}

\WA also have a representation in terms of finite-state automata,
in which transitions are equipped with weights. We then say that a \WA
is sequential (resp. unambiguous) when its underlying automaton is.
Formally, we say that a \WA $\mathcal{R} = (u,\mu,v)$ is \emph{sequential} when
$u$ has a single non-zero entry and, for each letter $a$, and each index $i$,
there is at most one index $j$ such that  $\mu(a)_{i,j} \neq 0$.

\begin{example}
  \label{ex:WA}
  We consider the \WA, on the alphabet $\{a\}$ and over the field of real
  numbers, $\mathcal{R} =(u,\mu,v)$ with $u=(1, 0)$, $v = (1, 0)^{t}$, and
  $\mu(a) = \begin{pmatrix}
              0 & 2 \\
              2 & 0
  \end{pmatrix}$.
  One can verify that the function realized by this \WA maps the
  word $a^n$ to $2^n$ if $n$ is even, and to $0$ otherwise.
  It can be represented graphically by the automaton depicted on Figure~\ref{fig:WA}.
\end{example}

A $\WA$ realizing a rational series $f$ is said to be \emph{minimal}
if its dimension is minimal among all the $\WA$ realizing $f$.
We also have the following characterization of minimal \WA
(see~\cite[Proposition 4.8 (Chapter III)]{Sakarovitch09}):
\begin{proposition}
  \label{prop:caracMinRep}
  Let $\mathcal{R} = (u,\mu,v)$ be a $\dWA$-dimensional \WA
  and let $\lReachSet{\mathcal{R}} = u \mu(\Sigma^*) = \left\{ u \mu(w) \,\middle|\, w \in \Sigma^* \right\}$
  be its (left) \emph{reachability set} and $\rReachSet{\mathcal{R}} = \mu(\Sigma^*) v$
  be its right reachability set.

  $\mathcal{R}$ is a minimal \WA if and only if
  $\linSpan{\lReachSet{\mathcal{R}}} = \vectSet{\mathbb{K}}{\dWA}$
  and $\linSpan{\rReachSet{\mathcal{R}}} = \matrSet{\mathbb{K}}{\dWA}{1}$.
\end{proposition}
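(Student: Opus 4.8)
The plan is to prove both directions of the equivalence separately, connecting the minimality of a \WA to the spanning properties of its left and right reachability sets. The statement characterizes a $\dWA$-dimensional \WA $\mathcal{R}=(u,\mu,v)$ as minimal precisely when $\linSpan{\lReachSet{\mathcal{R}}}=\vectSet{\mathbb{K}}{\dWA}$ and $\linSpan{\rReachSet{\mathcal{R}}}=\matrSet{\mathbb{K}}{\dWA}{1}$. Since this is cited from Sakarovitch, I would expect the cleanest route to be via a dimension/rank argument on these two spanning sets.

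**First I would prove the contrapositive of the "if" direction**, namely that if one of the spanning conditions fails, then $\mathcal{R}$ is not minimal. Suppose $\linSpan{\lReachSet{\mathcal{R}}}\subsetneq\vectSet{\mathbb{K}}{\dWA}$, a proper subspace $W$ of dimension $d'<\dWA$. The key observation is that $W$ is \emph{stable} under every transition matrix $\mu(a)$: indeed for any $x\in\lReachSet{\mathcal{R}}$ we have $x\mu(a)\in\lReachSet{\mathcal{R}}\subseteq W$, and stability extends to all of $W$ by linearity. Moreover $u\in W$ (taking $w=\epsilon$). I would then choose a basis of $W$, extend it to a basis of $\mathbb{K}^{\dWA}$, and express $u$, the $\mu(a)$ restricted to $W$, and the restriction of $v$ in these coordinates. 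Because $W$ is $\mu$-stable and contains $u$, the bottom $\dWA-d'$ coordinates never contribute to any product $u\mu(w)v$, yielding an equivalent \WA of dimension $d'<\dWA$. The symmetric argument handles the case where $\linSpan{\rReachSet{\mathcal{R}}}$ is a proper subspace of $\matrSet{\mathbb{K}}{\dWA}{1}$, working with the right reachability set instead.

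**For the "only if" direction** I would again argue by contraposition: if both spanning conditions hold, I must show $\mathcal{R}$ is minimal, i.e. that no equivalent \WA has dimension strictly below $\dWA$. The standard tool here is the Hankel matrix of the series $f=\llbracket\mathcal{R}\rrbracket$, whose rows and columns are indexed by words, with entry $(u,v)\mapsto f(uv)$. One shows that the rank of the Hankel matrix is a lower bound on the dimension of any \WA realizing $f$, and that under the two spanning hypotheses this rank equals $\dWA$. Concretely, the map sending $w$ to the pair of vectors $(u\mu(w),\mu(w)v)$ factors the Hankel matrix through $\mathbb{K}^{\dWA}$; the left-spanning condition gives that the left factor is surjective and the right-spanning condition that the right factor has full column rank, so the Hankel rank is exactly $\dWA$. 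Since any realizing \WA has dimension at least the Hankel rank, $\mathcal{R}$ of dimension $\dWA$ is minimal.

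**The main obstacle** I anticipate is the clean justification that the Hankel rank lower-bounds the dimension of \emph{every} realizing \WA, together with proving the factorization has exactly rank $\dWA$ under our two hypotheses; this is where the two spanning conditions must be used together rather than separately. Equivalently, one could bypass the Hankel matrix entirely and give a purely linear-algebraic self-contained argument: the reachability-set spanning conditions say the natural left and right representation maps are respectively surjective and injective, and any smaller-dimensional realization would force a dimension drop in one of these maps, contradicting the spans filling all of $\vectSet{\mathbb{K}}{\dWA}$ and $\matrSet{\mathbb{K}}{\dWA}{1}$. Since the result is quoted from \cite{Sakarovitch09}, I would ultimately keep the write-up short, presenting the stability-and-restriction construction for one direction and citing the Hankel-rank characterization for the tightness in the other.
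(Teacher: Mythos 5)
Your proposal is correct. Note that the paper does not prove this statement at all: it is quoted directly from Sakarovitch (Proposition 4.8, Chapter III), so there is no in-paper proof to compare against. Your first half (restricting the automaton to the $\mu$-stable subspace $\linSpan{\lReachSet{\mathcal{R}}}$ via a change of basis, and dually for $\rReachSet{\mathcal{R}}$) is essentially the same construction the authors do carry out explicitly in their Lemma~\ref{lem:leftRightMin}, and your second half is the classical Hankel-rank argument: writing $H=LR$ with $L$ of full column rank (left spanning) and $R$ of full row rank (right spanning) gives $\mathrm{rank}(H)=\dWA$, while any realization of dimension $d'$ factors $H$ through $\mathbb{K}^{d'}$, so $d'\geq\dWA$. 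One small slip: the implication you prove first (spanning fails $\Rightarrow$ not minimal) is the contrapositive of the \emph{only if} direction, not of the \emph{if} direction as you label it; since your two parts together still cover both implications, this is a labelling issue only, not a gap.
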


\textbf{Expressions, substitutions and Cost Register Automata.}
For a field $(\mathbb{K},+,\cdot)$ and a finite set of variables $\varSet$
disjoint from $\mathbb{K}$,
let $\expr{\varSet}$ denote the set of expressions generated by the grammar
$e \Coloneqq k \,|\, X \,|\, e + e \,|\, e \cdot e$, where $k \in \mathbb{K}$
and $X \in \varSet$.
A \emph{substitution} over $\varSet$ is a map $s \colon \varSet \to \expr{\varSet}$.
It can be extended to a map $\expr{\varSet} \to \expr{\varSet}$ by substituting each variable $X$ in the
expression given as an input by $s(X)$.
By identifying $s$ with its extension, we can compose substitutions.
We call \emph{valuations} the substitutions of the form $v \colon \varSet \to \mathbb{K}$.
The set of substitutions over $\varSet$ will be denoted by $\subs{\varSet}$
and the set of valuations $\valuations{\varSet}$.

\begin{definition}[Cost Register Automaton]
  A \emph{cost register automaton} ($\CRA$ for short), on the alphabet $\Sigma$ over the field $\mathbb{K}$,
  is a tuple $\mathcal{A} = (Q, q_0, \varSet, v_0, \outFctCRA, \delta)$
  where $Q$ is a finite set of \emph{states}, $q_0 \in Q$ is the initial state,
  $\varSet$ is a finite set of registers (variables),
  $v_0 \in \valuations{\varSet}$ is the registers' initial valuation,
  $\outFctCRA \colon Q \to \expr{\varSet}$ is the output function,
  and $\delta \colon Q \times \Sigma \to Q \times \subs{\varSet}$ is the transition function.
  We will denote by $\delta_Q \coloneqq \proj{Q} \circ \delta$ the transition function of the underlying
  automaton of the $\CRA$ and $\delta_{\varSet} \coloneqq \proj{\subs{\varSet}} \circ \delta$
  its register update function.

  $\mathcal{A}$ computes a function $\llbracket \mathcal{A} \rrbracket \colon \Sigma^* \to \mathbb{K}$
  defined as follows: the configurations of $\mathcal{A}$ are pairs $(q,v) \in Q \times \valuations{
    \varSet}$.
  The run of $\mathcal{A}$ on a word $w = a_1 \dots a_n \in \Sigma^*$ is the sequence of
  configurations $(q_i,v_i)_{i \in \intInterv{0}{n}}$
  where, $q_0$ is the initial state, $v_0$ is the initial valuation and,
  for all $i \in \intInterv{1}{n}$, $q_i = \delta_Q (q_{i-1},a_i)$
  and $v_i = v_{i-1} \circ \delta_{\varSet}(q_{i-1},a_i)$.
  We then define $\llbracket \mathcal{A} \rrbracket (w) = v_n (\outFctCRA(q_n))$.

  $\delta$ can be extended to words by setting, for all $q \in Q$,
  $\delta (q,\epsilon) = (q , id_{\varSet})$, where $id_{\varSet}$
  is the substitution such that $id_{\varSet}(X) = X$ for all $X \in \varSet$,
  and, for all $a \in \Sigma$ and $w \in \Sigma^*$,
  $\delta_Q (q,aw) = \delta_Q(\delta_Q(q,a),w)$
  and $\delta_{\varSet} (q,aw) = \delta_{\varSet}(q,a)
  \circ \delta_{\varSet}(\delta_Q(q,a),w)$.
  We then have
  \[\llbracket \mathcal{A} \rrbracket (w)
  = v_0 \circ \delta_{\varSet} (q_0, w) (\outFctCRA(\delta_Q(q_0,w)))\]

\end{definition}

\begin{example}[Example~\ref{ex:WA} continued]
  \label{ex:CRA}
  Two \CRA are depicted on Figure~\ref{fig:CRA}.
  They are both on the alphabet
  $\{a\}$ and over the field of real numbers, and both realize the same function as the $\WA$ considered in
  Example~\ref{ex:WA}.
\end{example}


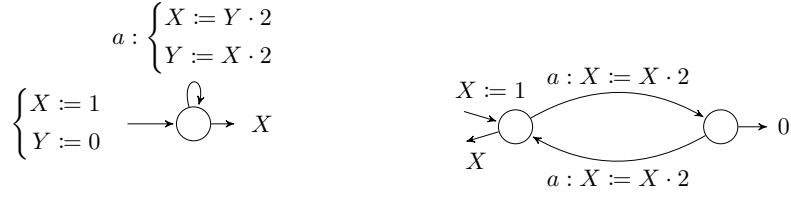
\begin{figure}
  \centering
  \scalebox{.9}{
    \begin{tikzpicture}[->,>=stealth',shorten >=1pt,auto,node distance=1.8cm]
      \tikzstyle{every state}= [minimum size=5mm]

      \node[state] (p) at (0,0) {};

      \node[state,draw=none] (ps) at (-2,0) {$\left \{\begin{aligned}
                                                        X \coloneqq 1\\Y \coloneqq 0
      \end{aligned}\right. $};

      \node[state,draw=none] (pp) at (1,0) {$X$};

      \draw
      (p) edge [loop above] node {$a :\left \{\begin{aligned}
                                                X \coloneqq Y \cdot 2\\Y \coloneqq X \cdot 2
      \end{aligned}\right.$} (p)
      (ps) edge (p)
      (p) edge (pp);
    \end{tikzpicture}
  }
  \hfil
  \scalebox{.9}{
    \begin{tikzpicture}[->,>=stealth',shorten >=1pt,auto,node distance=3cm]
      \tikzstyle{every state}= [minimum size=5mm]

      \node[state] (p) at (0,1) {};

      \node[state,draw=none] (ps) at (-1,1.3) {};
      \node[state,draw=none] (pp) at (-1,.7) {};

      \node[state, accepting right, accepting text = $0$] (q) [right of=p] {};

      \draw
      (p) edge [bend left] node{$a : X \coloneqq X \cdot 2$}(q)
      (q) edge [bend left] node{$a : X \coloneqq X \cdot 2$}(p)
      (ps) edge [pos=-.5] node {$X\coloneqq 1$}(p)
      (p) edge [pos=1.2] node{$X$}(pp);
    \end{tikzpicture}
  }

  \caption{Two CRA detailed in Example~\ref{ex:CRA}. Registers are denoted
  by letters $X,Y$.}
  \label{fig:CRA}
\end{figure}

An expression is called \emph{linear} if it has the form
$\sum_{i=1}^{\rCRA} \alpha_i X_i$, for some family of $\alpha_i \in \mathbb{K}$ and $X_i \in \varSet$,
and if it has the form $\sum_{i=1}^{\rCRA} \alpha_i X_i + \beta$, for some $\beta \in \mathbb{K}$,
it is called \emph{affine}.
We will denote by $\linExpr{\varSet}$ (\resp $\affExpr{\varSet}$) the set of linear
(\resp affine) expressions.

\begin{definition}[Linear/Affine \CRA]
  A $\CRA$ $\mathcal{A} = (Q, q_0, \varSet, \allowbreak v_0, \outFctCRA, \delta)$ is called \emph{linear}
  if, $\delta_{\varSet}(q,a)(X) \in \linExpr{\varSet}$ and $\outFctCRA(q)\in \linExpr{\varSet}$,
  for all $q \in Q, a \in \Sigma$ and $X \in \varSet$,
  and if $\delta_{\varSet}(q,a)(X) \in \affExpr{\varSet}$ and $\outFctCRA(q)\in \affExpr{\varSet}$,
  the \CRA is called \emph{affine}.
\end{definition}

Linear \CRA are a particular case of affine \CRA and, given an affine \CRA
it is always possible to define an equivalent linear \CRA using one more register
with a constant value of $1$ to realize affine register updates in a linear way, thus :
\begin{remark}
  Linear and affine \CRA have the same expressiveness.
\end{remark}
The added cost of a register will however become relevant when we will consider
minimization problems in the next sections.

Observe that we can assume that $\varSet=\left\{X_1, \dots, X_\rCRA \right\}$ is ordered,
and identify any linear expression $e =  \sum_{i=1}^{\rCRA} \alpha_i X_i$
(with the $\alpha_i$ not present in the expression assumed to be 0)
with the linear form $\expToMap{e} \colon \mathbb{K}^\rCRA \to \mathbb{K}$
defined by the column vector $(\alpha_1, \dots, \alpha_\rCRA)^t$.
We can then identify any linear substitution $s \colon \varSet \to \linExpr{\varSet}$
with the linear map $\expToMap{s} \colon \mathbb{K}^\rCRA \to \mathbb{K}^\rCRA$
defined by the block matrix $(\expToMap{s(X_1)} | \cdots | \expToMap{s(X_\rCRA)})$,
and we can identify any valuation $v \colon \varSet \to \mathbb{K}$ with the vector
$\expToMap{v}=(v(X_1), \cdots, v(X_\rCRA))$ of the vector space $\mathbb{K}^\rCRA$.

In the following, we will drop the underline notation and make the identifications implicitly.

Thanks to these observations, the registers of a linear
\CRA and their updates can be characterized by the values of the vector
associated with $\noUnderline{v_0}$, and the linear maps associated with the
$\noUnderline{\delta_{\mathcal{X}}(q,a)}$ and $\noUnderline{\outFctCRA(q)}$,
for all $q \in Q$ and $a \in \Sigma$, and we can check that
\[
  \llbracket \mathcal{A} \rrbracket (w) = \noUnderline{v_0}\ \noUnderline{\delta_{\varSet}(q_0,w)}\
  \noUnderline{\outFctCRA(\delta_Q(q_0,w))}
\]

We can also identify affine expressions with affine forms and affine substitutions with affine maps
to simplify dealing with affine \CRA.
We will define and use these identifications in Appendix~\ref{apx:proofChar}.

\begin{proposition}[\cite{AlurDDRY13}]
  \label{prop:linRepEquivCRA1Stt}
  There is a bijection between \WA and linear \CRA with a single state.
\end{proposition}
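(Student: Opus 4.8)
The plan is to construct explicit maps in both directions, relying on the identifications introduced just above the statement (a linear expression with a column vector, a linear substitution with a square matrix, and a valuation with a row vector). Since these identifications are bijective at each level, it suffices to exhibit two constructions that are mutually inverse and preserve the realized series. Let $\mathcal{A} = (Q, q_0, \varSet, v_0, \outFctCRA, \delta)$ be a linear \CRA with a single state, so that $Q = \set{q_0}$ and $\varSet = \set{X_1, \dots, X_\rCRA}$; since there is a single state, $\delta_Q(q_0, w) = q_0$ for every $w$, so the output expression read at the end of a run is always $\outFctCRA(q_0)$ and the underlying automaton plays no role. I would associate with $\mathcal{A}$ the \WA $\mathcal{R}_\mathcal{A} = (u, \mu, v)$ of dimension $\dWA = \rCRA$, where $u$ is the row vector identified with $v_0$, $v$ is the column vector identified with $\outFctCRA(q_0)$, and $\mu$ is the monoid morphism determined by $\mu(a) = \delta_{\varSet}(q_0,a)$ (seen as a square matrix) for each $a \in \Sigma$. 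Conversely, to a \WA $\mathcal{R} = (u, \mu, v)$ of dimension $\dWA$ I would associate the single-state linear \CRA with $\rCRA = \dWA$ registers whose initial valuation is $u$, whose output at $q_0$ is $v$, and whose update on each letter $a$ is the linear substitution identified with the matrix $\mu(a)$.

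To check that the semantics coincide, the key point is that the identification turns composition of linear substitutions into the matrix product taken in the same order: writing $s(X_j) = \sum_i \alpha_{ij} X_i$ and $t(X_j) = \sum_l \beta_{lj} X_l$, the expression $(s \circ t)(X_j) = s(t(X_j))$ has coefficient $\sum_l \alpha_{il}\beta_{lj}$ in front of $X_i$, which is exactly the $(i,j)$-entry of the product of the associated matrices, while $id_{\varSet}$ is identified with the identity matrix. Consequently, from $\delta_{\varSet}(q_0, aw) = \delta_{\varSet}(q_0,a) \circ \delta_{\varSet}(q_0, w)$ (valid here because $\delta_Q(q_0,a) = q_0$) the map $w \mapsto \delta_{\varSet}(q_0, w)$ is a monoid morphism taking value $\mu(a)$ on each letter, hence it equals $\mu$. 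Substituting this into the formula $\llbracket \mathcal{A} \rrbracket(w) = v_0\, \delta_{\varSet}(q_0,w)\, \outFctCRA(\delta_Q(q_0,w))$ recalled above, and using the row-vector/right-action convention under which a valuation composed with a substitution corresponds to a row vector times a matrix, yields $\llbracket \mathcal{A} \rrbracket(w) = u\,\mu(w)\,v = \llbracket \mathcal{R}_\mathcal{A} \rrbracket(w)$, which is precisely the \WA semantics; reading the same computation backwards shows that the \CRA built from a \WA realizes that \WA's series.

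It remains to argue bijectivity. Since $\Sigma^*$ is the free monoid on $\Sigma$, a monoid morphism $\mu$ is freely and uniquely determined by the family $(\mu(a))_{a \in \Sigma}$, which matches exactly the finite data $(\delta_{\varSet}(q_0,a))_{a\in\Sigma}$ of a single-state linear \CRA; together with the bijective identifications of $v_0$ with $u$ and of $\outFctCRA(q_0)$ with $v$, this makes the two constructions mutually inverse, hence a bijection. I expect the only genuinely non-routine step to be the bookkeeping of conventions in the previous paragraph: one must verify that composition of substitutions maps to matrix multiplication in the correct order and is compatible with the composition of valuations with substitutions, so that $w \mapsto \delta_{\varSet}(q_0,w)$ comes out as a morphism rather than an anti-morphism. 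Once the row-vector/right-multiplication convention fixed in the preamble to the statement is taken into account, this reduces to the index computation above, and everything else is a direct unfolding of the definitions.
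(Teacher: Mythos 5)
Your proposal is correct and takes essentially the same route as the paper, which only sketches this construction: identify the initial valuation with $u$, the single output expression with $v$, and each one-letter update substitution with the transition matrix $\mu(a)$, then check that composition of linear substitutions corresponds to matrix multiplication in the same order so that $w \mapsto \delta_{\varSet}(q_0,w)$ is the morphism $\mu$. Your explicit index computation for the order of composition and the row-vector/right-action convention is exactly the bookkeeping the paper leaves implicit.
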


Given a \WA, one can build an equivalent \CRA with as many registers as states of the \WA:
for each letter $a$, the transition matrix $\mu(a)$ can be interpreted as a (linear) substitution,
associated with the self-loop of label $a$.
The converse easily follows from the previous observations when the \CRA has a single state.

\begin{example}[Example~\ref{ex:WA} continued]
  The \CRA depicted on the left of Figure~\ref{fig:CRA} is obtained by the translation of
  the \WA of Figure~\ref{fig:WA} into \CRA with a single state.
\end{example}

\begin{remark}\label{rk:seq}Sequential \WA are exactly linear \CRA with a single register.
\end{remark}

Indeed, both sequential \WA and linear \CRA with only one register are
deterministic finite automata that can also store a single value updated at each
transition using only products.
They can then be identified.

\section{Problems and Main Results}
\label{sec:pbres}

\begin{definition}[Register minimization problem]
  \label{def:min-prob}
  Given a class $\mathcal{C}$ of \CRA, we ask:
  \begin{itemize}
    \item \textbf{Input:} a rational series $f$ realized by a given \WA,
    and an integer $\rCRA \in \mathbb{N}$
    \item \textbf{Question:} Does there exist a \CRA realizing $f$
    in the class $\mathcal{C}$ with at most $\rCRA$ registers?
  \end{itemize}
\end{definition}

We will show this problem is decidable for the classes of linear and affine \CRA:
\begin{restatable}{theorem}{thmRegMin}
\label{thm:reg-min}
The register minimization problem is decidable for the classes of linear
and affine \CRA
in \texptime.
Furthermore, the algorithm exhibits a solution when it exists.
\end{restatable}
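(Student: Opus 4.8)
The plan is to reduce the problem to the computation of a single algebraic invariant of a minimal \WA. First I would preprocess: since \WA over $\mathbb{K}$ can be minimized in polynomial time, I compute from the input a minimal \WA $\mathcal{R} = (u,\mu,v)$ for $f$, so that by Proposition~\ref{prop:caracMinRep} both $\linSpan{\lReachSet{\mathcal{R}}}$ and $\linSpan{\rReachSet{\mathcal{R}}}$ are full. The heart of the argument is the following characterization, which I would establish in Section~\ref{sec:characterization}: the minimal number of registers of a linear \CRA realizing $f$ equals $\dim \linHull{\mathcal{R}}$, the dimension of the linear hull (strongest Z-linear invariant) of $\mathcal{R}$. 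Given this, the algorithm is: compute $\linHull{\mathcal{R}}$, read off its dimension $d$, and answer yes iff $\rCRA \ge d$.

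For the lower bound of the characterization (any realizing \CRA needs at least $d$ registers), I would take a linear \CRA $\mathcal{A}$ with register set of size $r$ realizing $f$ and extract from it a Z-linear invariant of $\mathcal{R}$ of dimension at most $r$. For a word $w$, write $\text{val}_w \in \mathbb{K}^r$ for the reached valuation and $q_w$ for the reached state; then $f(wx) = \text{val}_w\, \delta_{\varSet}(q_w,x)\, \outFctCRA(\delta_Q(q_w,x))$ is linear in $\text{val}_w$. Since $\mathcal{R}$ is minimal, $\mu(\Sigma^*)v$ spans the column space, so the map $x \mapsto f(wx)$ determines $u\mu(w)$; reading off the relevant coordinates shows that for each state $q$ there is a linear map $\psi_q$ with $\psi_{q_w}(\text{val}_w) = u\mu(w)$. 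Setting $V_q = \linSpan{\{\text{val}_w : q_w = q\}}$, the set $\bigcup_q \psi_q(V_q)$ contains $u$, is stable under every $\mu(a)$ (using $\psi_{q'}(\eta\, \delta_{\varSet}(q,a)) = \psi_q(\eta)\mu(a)$ on $V_q$, where $q' = \delta_Q(q,a)$), and each piece has dimension at most $r$. As $\linHull{\mathcal{R}}$ is the smallest such invariant, $d = \dim \linHull{\mathcal{R}} \le r$.

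For the upper bound (a realizing \CRA with exactly $d$ registers exists, which simultaneously yields the ``exhibit a solution'' guarantee), I would build a \CRA directly from $\linHull{\mathcal{R}} = \bigcup_{i=1}^p V_i$. Because $\mathbb{K}$ is infinite, any subspace contained in a finite union of subspaces lies in a single one of them; combined with invariance, for each $i$ and each letter $a$ there is an index $j(i,a)$ with $V_i \mu(a) \subseteq V_{j(i,a)}$. Taking the states to be the components $V_i$, the registers to hold the coordinates of the current vector in a fixed basis of the relevant $V_i$ (padded to length $d$), the update for $a$ to be $\mu(a)$ restricted to $V_i \to V_{j(i,a)}$ expressed in coordinates, and the output at $V_i$ to be $\eta \mapsto (\text{decoded vector}) \cdot v$, an induction gives that after reading $w$ the register holds the coordinates of $u\mu(w)$ in some $V_i$ with $u\mu(w) \in V_i$, hence the output is $u\mu(w)v = f(w)$. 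This \CRA has $\le p$ states and exactly $d$ registers.

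Finally, to obtain the \texptime bound, the remaining ingredient --- and the main obstacle --- is to compute $\linHull{\mathcal{R}}$ explicitly (as an annotated finite union of subspaces) within \texptime; this is exactly the new invariant-computation algorithm developed in Section~\ref{sec:algos}, while everything else (minimization, dimension extraction, and the coordinate bookkeeping of the construction above) is polynomial in the size of that output. The affine case is handled analogously by replacing the linear hull with the affine hull $\affHull{\mathcal{R}}$ (the strongest Z-affine invariant): the same two-sided argument shows the affine register complexity equals $\dim \affHull{\mathcal{R}}$, and the affine hull is computed within the same bound, so both the \texptime upper bound and the explicit construction carry over.
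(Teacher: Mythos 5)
Your proposal is correct and follows the same overall architecture as the paper: minimize the input \WA in polynomial time, characterize the register complexity as the dimension of the strongest Z-linear (\resp Z-affine) invariant of the minimal \WA, compute that invariant with the \texptime algorithm of Section~\ref{sec:algos}, and compare its dimension with $\rCRA$. Your construction of a \CRA with exactly $\dim\linHull{\mathcal{R}}$ registers from the irreducible components of the hull is essentially Proposition~\ref{prop:linRepToCRA}, and your complexity accounting matches the paper's proof, which likewise defers the \texptime bound to the invariant-computation algorithm.

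The one place where you genuinely diverge is the lower-bound direction of the characterization. The paper converts a $\rCRA$-register \CRA into an equivalent \WA of larger dimension carrying a Z-linear invariant of dimension $\rCRA$ (Proposition~\ref{prop:CRATolinRep}) and then transfers that invariant to a minimal \WA by an explicit change-of-basis argument (Proposition~\ref{prop:dimLHMin} via Lemma~\ref{lem:leftRightMin}). You instead build the invariant directly on the minimal \WA: since $\mu(\Sigma^*)v$ spans the column space, $u\mu(w)$ is recovered linearly from the residual $x\mapsto f(wx)$, which is itself linear in the reached valuation, yielding per-state linear maps $\psi_q$ and the invariant $\bigcup_q \psi_q(V_q)$ of dimension at most $\rCRA$. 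This is a clean shortcut that bypasses Lemma~\ref{lem:leftRightMin} for this theorem (and, since its length is bounded by the number of states, it would also serve the state-register characterization); what it gives up is the paper's more reusable transfer statement between arbitrary and minimal \WA. Both arguments are sound; just make sure $\psi_q$ is defined as a total linear map on $\mathbb{K}^{\rCRA}$ --- e.g.\ by fixing words $x_1,\dots,x_{\dWA}$ such that $\mu(x_1)v,\dots,\mu(x_{\dWA})v$ form a basis and inverting the corresponding matrix --- rather than only on the set of reachable valuations.
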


For a rational series $f$, the minimal number of registers needed
to realize $f$ using \CRA in some class $\mathcal{C}$ 
is called its \emph{register complexity with respect to class $\mathcal{C}$}.
\todo{PA: new}
Dually, if one wants to minimize the number of states, then we know we can always build
a linear (hence affine) \CRA with a single state (Proposition~\ref{prop:linRepEquivCRA1Stt}).
A more ambitious goal is to try to reduce simultaneously the number of states 
and of registers, in some given class $\mathcal{C}$ of \CRA.
Observe that, in general, there is no \CRA with minimal numbers of both states 
and registers (see Example~\ref{ex:CRA}).
Given a rational series $f$, we say that a pair $(\sCRA,\rCRA)$ is \emph{optimal} if
$f$ can be realized by a \CRA
in class $\mathcal{C}$ with $\sCRA$ states and $\rCRA$ registers
and no \CRA of $\mathcal{C}$ realizing $f$ with at most $\sCRA$ states can have strictly
less than $\rCRA$ registers and vice-versa.

Formally, we call the \emph{state-register complexity with respect to class $\mathcal{C}$}
of a rational series $f$,
the set of optimal pairs of integers $(\sCRA,\rCRA)$.

This leads to the definition of a second minimization problem:
\begin{definition}[State-Register minimization problem]
  Given a class $\mathcal{C}$ of \CRA, we ask:
  \begin{itemize}
    \item \textbf{Input:} a rational series $f$ realized by a given \WA,
    and two integers $\sCRA, \rCRA \in \mathbb{N}$
    \item \textbf{Question:} Does there exist a \CRA realizing $f$
    in the class $\mathcal{C}$ with at most $\sCRA$ states and at most $\rCRA$ registers?
  \end{itemize}
\end{definition}

In the sequel, we solve this problem for linear and affine \CRA:
\begin{restatable}{theorem}{thmStateRegMin}
\label{thm:state-reg-min}
The state-register minimization problem is decidable for the classes of linear and affine \CRA
in \nexptime.
Furthermore, the algorithm exhibits a solution when it exists.
\end{restatable}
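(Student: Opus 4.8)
The plan is to reduce the state-register minimization problem to a purely algebraic search for an invariant, and then to show that a witnessing invariant can be guessed and checked within \nexptime. First I would reduce to the case of a minimal \WA $\mathcal{R}=(u,\mu,v)$ of dimension $\dWA$, which can be computed from the input \WA in polynomial time via Proposition~\ref{prop:caracMinRep} and which realizes the same series. Next I would rely on the correspondence between Z-linear invariants and linear \CRA underlying Section~\ref{sec:characterization}: a linear \CRA equivalent to $\mathcal{R}$ with $\sCRA$ states and $\rCRA$ registers exists if and only if $\mathcal{R}$ admits a Z-linear invariant $S=\bigcup_{i=1}^{p}V_i$ (with $u\in S$ and $S\mu(a)\subseteq S$ for every letter $a$) of length $p\le \sCRA$ and dimension $\max_i\dim V_i\le \rCRA$. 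Thus the decision problem becomes: \emph{does $\mathcal{R}$ admit a Z-linear invariant of length at most $\sCRA$ and dimension at most $\rCRA$?}

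The second step is to reduce the search over all invariants to a search over a finite combinatorial object. The crucial observation is that, since $\mathbb{K}$ is infinite, any vector subspace contained in a finite union of subspaces is contained in one of them; hence for each component $V_i$ and each letter $a$ the subspace $V_i\mu(a)$ lies in a single component $V_j$. Consequently every invariant induces a deterministic automaton (a coloring) $\tau$ on $\intInterv{1}{p}$ with $\tau(i,a)=j$, together with the index $i_0$ of the component containing $u$. Conversely, given such an automaton, the \emph{canonical} choice $V_i \coloneqq \linSpan{\set{u\mu(w)\mid \tau^*(i_0,w)=i}}$ yields the smallest invariant compatible with $\tau$: it automatically contains $u$, is stable, and has minimal dimension among all invariants sharing that automaton. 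Therefore it suffices to guess $(\tau,i_0)$ and to verify that the canonical components satisfy $\dim V_i\le \rCRA$; the induced \CRA is then built as in the characterization, with $p\le \sCRA$ states and $\rCRA$ registers, and $u\in S$ together with stability guarantee equivalence with $\mathcal{R}$.

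The main obstacle is to bound the size of the guessed object so that the procedure stays in \nexptime, and two bounds are needed. First, I would bound the number of components that can ever be useful: since the smallest invariant (the linear hull, obtained as an intersection of invariants) already realizes the minimal achievable dimension, no gain is obtained beyond its length, which I would argue is at most $M=2^{\mathrm{poly}(\dWA)}$; one may then replace $\sCRA$ by $\min(\sCRA,M)$ and guess an automaton $(\tau,i_0)$ of at most exponential size. Second, each canonical component is spanned by reachability vectors $u\mu(w)$, and a closure (fixpoint) argument shows that words of length at most $pd$ suffice: starting from $u\in V_{i_0}$ and saturating under the rules $V_i\mu(a)\subseteq V_{\tau(i,a)}$ stabilizes after at most $pd$ rounds, and the entries of $\mu(w)$ for such $w$ have bit-size polynomial in $pd$ and in the size of $\mathcal{R}$, hence at most exponential in the input. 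The algorithm then guesses $(\tau,i_0)$, computes each $V_i$ by this saturation (polynomially many linear-algebra steps in $p$ and $\dWA$), checks $\dim V_i\le \rCRA$ for all $i$, and outputs the induced \CRA. Establishing the exponential bound on the useful number of components is the delicate point, and is exactly where the structural analysis of the linear hull is invoked.

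Finally, the affine case is handled by the same scheme after homogenization: embedding $\mathbb{K}^{\dWA}$ into $\mathbb{K}^{\dWA+1}$ via $x\mapsto(x,1)$ turns affine updates into linear ones, so that Z-affine invariants (finite unions of affine subspaces) of affine dimension at most $\rCRA$ correspond to Z-linear invariants in the homogenized space, with the length/dimension to states/registers dictionary preserved. Since every step is constructive, the algorithm also exhibits a witnessing \CRA whenever one exists, which yields the \nexptime upper bound claimed in Theorem~\ref{thm:state-reg-min}.
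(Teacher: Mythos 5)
Your proposal is correct in its essentials but takes a genuinely different route from the paper. The paper's algorithm guesses a small \emph{word-based} witness: an $\mathcal{R}$-representation, i.e.\ $\lInv$ finite sets of words whose reachability vectors span the components, and Lemma~\ref{lem:rep} shows that some invariant $J\subseteq I$ always admits such a representation of size $O(\lInv^2\dInv^2)$; invariance is then checked by linear algebra. You instead guess the \emph{combinatorial skeleton} of the invariant --- the map $\tau(i,a)=j$ recording into which component each $V_i\mu(a)$ falls, together with $i_0$ --- and then deterministically compute the canonical (smallest) components by saturation. The equivalence you invoke is sound: irreducibility of subspaces over an infinite field guarantees every invariant induces such a $\tau$, and conversely the canonical $V_i=\linSpan{\set{u\mu(w)\mid \tau^*(i_0,w)=i}}$ sits inside the corresponding component of any compatible invariant, so its dimensions can only be smaller. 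The underlying counting argument (total dimension increases at each saturation step, hence stabilization after at most $O(\lInv\dWA)$ rounds, with polynomial bit-size growth) is essentially the same one that powers the paper's Lemma~\ref{lem:rep}. What your version buys is a more canonical verification phase (no need to check that the guessed object is an invariant --- it is one by construction); what the paper's version buys is a smaller, more self-contained certificate.

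Two caveats. First, your claim that the number of useful components is bounded by $2^{\mathrm{poly}(\dWA)}$ overshoots what is known: the paper only proves a \emph{doubly}-exponential upper bound on the length of the strongest invariant (Theorem~\ref{thm:lengths}) and explicitly leaves the gap to the single-exponential lower bound open. Fortunately this cap is not needed at all for the \nexptime{} bound: since $\sCRA$ is part of the input in binary, one always has $p\leq \sCRA\leq 2^{|\mathrm{input}|}$, so guessing $(\tau,i_0)$ on $p$ states is an exponential-size guess regardless; you should simply drop the appeal to the length of the hull. (The soundness of capping at $\min(\sCRA,\mathrm{length}(\linHull{\mathcal{R}}))$ is fine --- beyond that length the answer depends only on whether $\rCRA\geq\dim(\linHull{\mathcal{R}})$ --- but the bound you quote for it is not established.) Second, in the affine case the homogenization $x\mapsto(x,1)$ shifts dimensions by one ($\affSpan{}$ of a set of points has dimension one less than the $\linSpan{}$ of their homogenizations), so the ``dictionary'' is not literally preserved; it is cleaner to rerun your argument directly with $\affSpan{}$ in place of $\linSpan{}$, exactly as the paper does by working with Z-affine sets throughout.
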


\begin{remark}
  The complexities we give are valid for fields where it is possible 
  to perform elementary operations efficiently (\eg $\mathbb{Q}$).
  See Remark~\ref{rmk:cpxOnField} for a more detailed discussion on the matter.
\end{remark}

\section{Characterizing the state-register complexity using invariants of WA}
\label{sec:characterization}

\subsection{Zariski topologies and invariants of WA}
\label{subsec:Zariski}

Let $\mathbb{K}$ be a field and $n\in \mathbb{N}$.
The \emph{Zariski topology} on $\mathbb{K}^n$ is defined
as the topology whose closed sets are the sets of common roots of a finite collection of polynomials of $\mathbb{K}[X_1, \dots, X_n]$.
A linear version of this topology, called the \emph{linear Zariski topology},
was introduced by Bell and Smertnig in~\cite{BellS21}.
Its closed sets, which we will call \emph{Z-linear sets},
are finite unions of vector subspaces of $\mathbb{K}^n$.

A set $S \subseteq \mathbb{K}^n$ is called \emph{irreducible} if,
for all closed sets $C_1$ and $C_2$,
such that $S \subseteq C_1 \cup C_2$, we have either $S \subseteq C_1$ or $S \subseteq C_2$.
The Zariski topologies defined above are Noetherian topologies in which every closed set
can be written as a finite union of irreducible components.
We then define the \emph{dimension} of a Z-linear set as the maximum dimension of
its irreducible components and their number will be called its \emph{length}.

For a set $S \subseteq \mathbb{K}^n$, $\linClosure{S}$
will denote its closure in the linear Zariski topology.
In this topology, closed irreducible sets are vector subspaces of $\mathbb{K}^n$
and linear maps are continuous and closed maps (mapping closed sets to closed sets).
In particular, for all $S \subseteq \mathbb{K}^n$ and linear map
$f : \mathbb{K}^n \to \mathbb{K}^n$,$\linClosure{f(S)}= f(\linClosure{S})$.
Moreover, if $S \subseteq \mathbb{K}^n$ is irreducible and $f : \mathbb{K}^n \to \mathbb{K}^n$
is continuous, then $f(S)$ is irreducible.
These properties will be used implicitly in the following
(see~\cite[Lemma 3.5]{BellS21} for more details and references).

We will also define an affine version of this topology that enjoy the same properties
in Subsection~\ref{subsec:affine}.

\begin{definition}
  Let $\Sigma$ be a finite alphabet and let $\mathcal{R} = (u,\mu,v)$
  be a $\dWA$-dimensional \WA on $\Sigma$ over $\mathbb{K}$.
  A subset $I \subseteq \mathbb{K}^\dWA$ is called an \emph{invariant} of $\mathcal{R}$
  if $u \in I$ and, for all $w \in I$ and $a \in \Sigma$, $w \mu(a) \in I$.
  For two invariants $I_1$ and $I_2$, we say that $I_1$ is \emph{stronger} than $I_2$
  if $I_1 \subseteq I_2$.
  In particular, the strongest invariant of $\mathcal{R}$
  is its \emph{reachability set} $\lReachSet{\mathcal{R}} = u \mu(\Sigma^*)$.

  An invariant that is also a Z-linear set will be called a \emph{Z-linear} invariant.
  The strongest Z-linear invariant of $\mathcal{R}$ is the closure of
  $\lReachSet{\mathcal{R}}$ in the linear Zariski topology
  (which is well-defined since the topology is Noetherian).
\end{definition}

\begin{example}[Example~\ref{ex:WA} continued]
  \label{ex:linHullWA}
  The reachability set of the $\WA$ considered in Example~\ref{ex:WA}
  is $\lReachSet{\mathcal{R}} = \big\{ (2^{2n}, 0) \,\big|\, \allowbreak n \in \mathbb{N} \big\}
  \cup \left\{ (0, 2^{2n+1}) \,\middle|\, n \in \mathbb{N}\right\}$.
  Its \LH is then the union of the two coordinate axes of the plane
  $\linHull{\mathcal{R}} = \linSpan{1,0} \cup \linSpan{0,1}$.

  Indeed, the inclusion $\subseteq$ comes from the fact that $u = (1,0) \in \linSpan{1,0}$
  and $\linSpan{1,0} \cup \linSpan{0,1}$ is stable by multiplication by $\mu(a)$
  and the inclusion $\supseteq$ comes from the fact that, for the linear Zariski topology,
  $\left\{ (1,0) \right\}$ is dense in $\linSpan{1,0}$
  and $\left\{ (0,2) \right\}$ is dense in $\linSpan{0,1}$.
\end{example}

\begin{remark}\label{rk:lin_not_alg}
In the previous example, the strongest Z-linear invariant is actually the strongest
algebraic invariant (\emph{i.e.} closed in the Zariski topology).
Of course, this is not always the case.
\end{remark}

The Z-linear invariants of two \WA
realizing the same function do not necessarily coincide but,
since $\mathbb{K}$ is a field, it is well-known that for every rational series $f$,
there exists a (computable) minimal \WA realizing $f$
that is unique up to similarity in the following sense
(see~\cite[Proposition 4.10 (Chapter III)]{Sakarovitch09}):

\begin{definition}
  Let $\mathcal{R} = (u,\mu,v)$ and $\mathcal{R}' = (u',\mu',v')$
  be two $\dWA$-dimensional \WA over $\mathbb{K}$.

  $\mathcal{R}$ and $\mathcal{R}'$ are said to be \emph{similar}
  if there exists an invertible (change of basis) matrix $P \in \sqmatrSet{\mathbb{K}}{\dWA}$
  such that $u' = u P$, $\mu'(a) = P^{-1} \mu(a) P$ for all $a \in \Sigma$
  and $v' = P^{-1}v$.
\end{definition}

\begin{remark}
  \label{rmk:similarRep}
  The Z-linear invariants of two similar \WA $\mathcal{R}$ and $\mathcal{R}'$
  only differ by a change of basis.
  \ie there is a bijection between the Z-linear invariants
  of $\mathcal{R}$ and those of $\mathcal{R}'$ that, in particular, preserves the length and dimension.
\end{remark}

\subsection{Strongest invariants and  characterization}

The notion of \LH was introduced by Bell and Smertnig in~\cite{BellS21}, under the name ``linear hull''.
They showed, in~\cite{BS23}, that it is computable and can be used to decide
whether a \WA is equivalent to a deterministic (or an unambiguous) one.

\begin{theorem}[{\cite[Theorem 1.3]{BellS21}}]
  \label{thm:seqWA}
  A rational series $f$ can be realized by a sequential $\WA$
  iff the \LH of a minimal \WA realizing $f$ has dimension at most 1.
\end{theorem}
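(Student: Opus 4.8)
The plan is to prove both implications directly, working with the description of the \LH as the linear Zariski closure $\linHull{\mathcal{R}}$ of the reachability set and exploiting the topological facts recalled in the excerpt: linear maps are continuous and closed, so $\linClosure{f(S)} = f(\linClosure{S})$, and continuous images of irreducible sets are irreducible. Throughout, matrices act on the right of row vectors, so I write $L\mu(a)$ for the image $\set{x\mu(a) \mid x \in L}$.

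For the direction $\Rightarrow$, suppose $f$ is realized by a sequential \WA $\mathcal{R}_s$. By definition of sequentiality, its initial vector has a single nonzero entry and each transition matrix has at most one nonzero entry per row; multiplying a vector carrying at most one nonzero coordinate by such a matrix preserves this property, so every reachable vector $u\mu(w)$ has at most one nonzero coordinate. Hence $\lReachSet{\mathcal{R}_s}$ is contained in the union of coordinate axes $\bigcup_i \linSpan{e_i}$, a Z-linear set of dimension at most $1$, and therefore $\linHull{\mathcal{R}_s}$ has dimension at most $1$. It remains to transfer this bound to a \emph{minimal} \WA. By the reduction theory over a field there is a surjective linear reduction map $\pi$ from the state space of $\mathcal{R}_s$ to that of a minimal \WA $\mathcal{R}_{\min}$ which intertwines the transition morphisms and satisfies $u_{\min} = u\pi$; consequently $\lReachSet{\mathcal{R}_{\min}} = \lReachSet{\mathcal{R}_s}\,\pi$. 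Since $\pi$ is linear, hence closed, the hull of $\mathcal{R}_{\min}$ is the image under $\pi$ of $\linHull{\mathcal{R}_s}$, and a linear image of a union of lines is again a union of lines (each line maps to a line or to $\set{0}$), so the dimension stays at most $1$. By Remark~\ref{rmk:similarRep} this bound is independent of the chosen minimal \WA.

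For the converse $\Leftarrow$, suppose the \LH of a minimal \WA $\mathcal{R}=(u,\mu,v)$ has dimension at most $1$, so $\linHull{\mathcal{R}} = L_1 \cup \dots \cup L_p$ where the irreducible components $L_i$ are vector subspaces of dimension at most $1$; fix a direction vector $d_i$ with $L_i = \linSpan{d_i}$. I build an equivalent single-register linear \CRA, which by Remark~\ref{rk:seq} is exactly a sequential \WA. Its states are the components $L_1, \dots, L_p$, its single register $X$ holds a scalar, and the configuration ``state $L_i$, value $\lambda$'' is meant to represent the vector $\lambda d_i$. The crucial point is that transitions are deterministic: for each $i$ and each letter $a$, the set $L_i\mu(a)$ is a continuous image of an irreducible set, hence irreducible, and it lies in $\linHull{\mathcal{R}}$ by invariance; an irreducible subset of a finite union of lines is contained in a single component $L_j$, and when nonzero this $j$ is unique since distinct lines meet only at the origin. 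This yields $d_i\mu(a) = c_{i,a}\,d_j$ for a unique scalar $c_{i,a}$ and index $j = \delta(i,a)$ (routing to an arbitrary state with value $0$ when $L_i\mu(a) = \set{0}$). I set the update $X \coloneqq c_{i,a}\cdot X$, the output at state $L_i$ to $(d_i v)\cdot X$, and initialize at the component containing $u$ with the matching scalar.

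A straightforward induction then shows that after reading $w$ the configuration vector equals $u\mu(w)$, so the output equals $u\mu(w)v = \llbracket \mathcal{R}\rrbracket(w)$; the resulting \CRA is single-register and linear, whence $f$ is sequential. I expect the main obstacle to lie in the direction $\Rightarrow$, namely in making the reduction map $\pi$ precise and verifying that the hull of the minimal \WA is exactly the $\pi$-image of the hull of $\mathcal{R}_s$, so that its dimension cannot increase; by contrast the determinism argument in the converse reduces cleanly to the fact that an irreducible subset of a finite union of lines sits inside one line.
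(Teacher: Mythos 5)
Your proof is correct. The paper itself does not reprove this statement (it cites Bell--Smertnig) but instead establishes the generalization Theorem~\ref{thm:mainThm}, from which the present theorem follows via Remark~\ref{rk:seq}; your argument is precisely that machinery specialized to dimension one. Your forward direction (reachable vectors of a sequential \WA have at most one nonzero coordinate, hence the hull sits in the union of axes, then transfer to a minimal \WA along the reduction map) is the $\dInv=1$ case of Propositions~\ref{prop:CRATolinRep} and~\ref{prop:dimLHMin}, and the step you flag as delicate --- that the left reachability set of the minimal \WA is the image of that of $\mathcal{R}_s$ under a linear map, so the closure's dimension cannot increase --- is exactly what the paper's Lemma~\ref{lem:leftRightMin} proves via explicit change-of-basis matrices. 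Your converse (one state per line of the hull, determinism from irreducibility of $L_i\mu(a)$ inside a finite union of lines) is the $\dInv=1$ case of Proposition~\ref{prop:linRepToCRA}. No gaps beyond the deferred but standard reduction step.
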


The following result 
generalizes this theorem by linking linear $\CRA$ to Z-linear invariants.
It constitutes the key characterization that will allow us to solve the minimization problems.
\begin{theorem}[Characterization]
  \label{thm:mainThm}
  Let $f$ be a rational series.
  Then $f$ can be realized by a linear $\CRA$ with $\sCRA$ states and $\rCRA$ registers
  iff there exists a minimal \WA realizing $f$ that has a Z-linear invariant
  of length at most $\lInv$ and dimension at most $\dInv$.
\end{theorem}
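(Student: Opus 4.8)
The plan is to prove the two implications separately, exhibiting an explicit correspondence in which the \emph{length} of the invariant matches the number of \emph{states} and its \emph{dimension} matches the number of \emph{registers}.

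\textbf{From invariants to \CRA.} Suppose a minimal \WA $\mathcal{R} = (u,\mu,v)$ of dimension $\dWA$ admits a Z-linear invariant $I = \bigcup_{i=1}^{p} V_i$, written as its irreducible decomposition, so that $p \le \sCRA$ and $\dim V_i \le \rCRA$ for all $i$. I would build a linear \CRA whose states are the components $V_i$ and whose $\rCRA$ registers store the coordinates of the current reachable vector $u\mu(w)$ in a fixed basis $B_i = (b_{i,1},\dots,b_{i,d_i})$ of the component it currently lies in. The transitions use that each $V_i$ is irreducible and each $\mu(a)$ is continuous: $\mu(a)(V_i)$ is then an irreducible subspace contained in $I$, hence contained in a single component $V_j$, and I set $\delta_Q(i,a)=j$. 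The register update is the linear map expressing each $b_{i,l}\mu(a) \in V_j$ in the basis $B_j$, and the output in state $i$ is the linear form $\alpha \mapsto \sum_l \alpha_l\,(b_{i,l}\, v)$. A straightforward induction shows that after reading $w$ the automaton is in a component containing $u\mu(w)$, with registers equal to its coordinates, so the output equals $u\mu(w)v = f(w)$. This gives $p \le \sCRA$ states and $\max_i \dim V_i \le \rCRA$ registers, the initial state being chosen among the components containing $u$.

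\textbf{From \CRA to invariants.} Conversely, given a linear \CRA $\mathcal{A}$ with state set $Q$ ($|Q|\le\sCRA$) and registers $\varSet$ ($|\varSet|\le\rCRA$), I first build an intermediate (non-minimal) \WA $\mathcal{R}'=(u',\mu',v')$ of dimension $|Q|\cdot|\varSet|$ simulating $\mathcal{A}$: its coordinates are indexed by $Q\times\varSet$, the vector $u'$ places the initial valuation in the block of $q_0$, the matrix $\mu'(a)$ sends the block of $q$ into the block of $\delta_Q(q,a)$ via the update matrix $\delta_\varSet(q,a)$, and $v'$ places $\outFctCRA(q)$ in the block of $q$. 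Writing $x_w$ for the valuation reached on $w$, determinism of $\mathcal{A}$ yields that $u'\mu'(w)$ is supported on the single block $\delta_Q(q_0,w)$, where it equals $x_w$; hence $\llbracket\mathcal{R}'\rrbracket=f$. Setting $R_q=\{x_w\mid\delta_Q(q_0,w)=q\}$, $W_q=\linSpan{R_q}$, and letting $\tilde U_q$ be the subspace supported on block $q$ with block-coordinates ranging over $W_q$, the union $\bigcup_{q\in Q}\tilde U_q$ is a Z-linear invariant of $\mathcal{R}'$ of length $\le\sCRA$ and dimension $\le\rCRA$. I would check this directly: it contains $u'$ (since $x_\epsilon\in R_{q_0}$), and it is stable because $R_q\,\delta_\varSet(q,a)\subseteq R_{\delta_Q(q,a)}$, which passes to the spans by linearity.

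\textbf{Transporting the invariant to a minimal \WA, the main obstacle.} The remaining and most delicate step is that $\mathcal{R}'$ is not minimal, whereas the statement demands a minimal \WA. I would reduce $\mathcal{R}'$ to a minimal one by the standard two-stage construction: restrict to the forward-invariant reachable space $\mathcal{V}=\linSpan{\lReachSet{\mathcal{R}'}}$, then quotient by the forward-invariant unobservable space $K=\{x\mid \forall w,\ x\mu'(w)v'=0\}$. This produces a reachable and observable, hence minimal (Proposition~\ref{prop:caracMinRep}), \WA $\mathcal{R}=(\bar u,\bar\mu,\bar v)$ together with a surjective linear intertwiner $h\colon\mathcal{V}\twoheadrightarrow\mathbb{K}^{\dWA}$ satisfying $h(u')=\bar u$, $h(x\mu'(a))=h(x)\bar\mu(a)$, and $xv'=h(x)\bar v$. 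Since $h$ is linear, hence continuous and closed, each $S_q=h(\tilde U_q\cap\mathcal{V})$ is again a subspace of dimension $\le\rCRA$, and $\bigcup_{q}S_q$ is a Z-linear invariant of $\mathcal{R}$ of length $\le\sCRA$ and dimension $\le\rCRA$: containment of $\bar u$ and stability transfer through $h$ using the invariance of both $\bigcup_q\tilde U_q$ and $\mathcal{V}$. Finally, since the minimal \WA is unique up to similarity, Remark~\ref{rmk:similarRep} lets me carry this invariant to any prescribed minimal \WA without altering its length or dimension. The crux is precisely verifying that images under the minimization intertwiner preserve both the union structure (the length) and the component dimensions, which is exactly where the good topological behaviour of linear maps in the linear Zariski topology is needed.
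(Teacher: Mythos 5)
Your proposal is correct and follows essentially the same route as the paper: the two directions correspond to Propositions~\ref{prop:linRepToCRA} and~\ref{prop:CRATolinRep} (component-per-state \CRA construction, and the block-structured simulating \WA with the union-of-blocks invariant), and your transport of the invariant to a minimal \WA via the reachability restriction and observability quotient, intersecting with the reachable span before pushing forward, is exactly the content of Lemma~\ref{lem:leftRightMin} and Proposition~\ref{prop:dimLHMin}, finished off by Remark~\ref{rmk:similarRep}. The only cosmetic difference is that your invariant in the \CRA-to-\WA direction uses the spans of the actually reached valuations per state rather than the full coordinate blocks $\psi_q(\mathbb{K}^\rCRA)$, which yields a (harmlessly) stronger invariant.
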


As we will see in Subsection~\ref{subsec:affine},
this theorem can also be extended to affine \CRA.

Observe that, thanks to Remark~\ref{rmk:similarRep}, the property of the above characterization
is actually valid for \emph{every} minimal \WA realizing $f$.
Moreover, since the dimension of the \LH is minimal, finding this dimension
allows to solve the register minimization problem for linear $\CRA$.
This is formalized in the following result, which generalizes
Theorem~\ref{thm:seqWA} thanks to Remark~\ref{rk:seq}.

\begin{corollary}
  \label{cor:minRegLH}
  The register complexity of a rational series $f$ w.r.t.\ the class of linear \CRA
  is the dimension of the \LH of any minimal \WA realizing $f$.
\end{corollary}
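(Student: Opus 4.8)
The plan is to read off the result directly from the Characterization (Theorem~\ref{thm:mainThm}) by exploiting the fact that the \LH is the \emph{smallest} Z-linear invariant. First I would fix any minimal \WA $\mathcal{R}$ realizing $f$ and set $d = \dim(\linHull{\mathcal{R}})$; by Remark~\ref{rmk:similarRep} this value is independent of the chosen minimal \WA, so the dimension of the \LH is well-defined and the goal becomes showing that the register complexity of $f$ equals $d$.

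For the upper bound I would instantiate Theorem~\ref{thm:mainThm} with the \LH itself. Indeed $\linHull{\mathcal{R}}$ is a Z-linear invariant of dimension exactly $d$, and, being a Z-linear set, it is a finite union of subspaces and hence has some finite length $p$. The Characterization then produces a linear \CRA realizing $f$ with $p$ states and $d$ registers; since register complexity does not constrain the number of states, this gives register complexity $\le d$.

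For the lower bound I would start from an arbitrary linear \CRA realizing $f$ with $k$ registers and invoke Theorem~\ref{thm:mainThm}: it yields a minimal \WA $\mathcal{R}'$ realizing $f$ together with a Z-linear invariant $I$ of dimension at most $k$. Since $\linHull{\mathcal{R}'}$ is the strongest Z-linear invariant of $\mathcal{R}'$, we have $\linHull{\mathcal{R}'} \subseteq I$, and I would deduce $\dim(\linHull{\mathcal{R}'}) \le \dim(I) \le k$. Combining this with Remark~\ref{rmk:similarRep}, which gives $\dim(\linHull{\mathcal{R}'}) = d$, yields $d \le k$; minimizing over all realizing \CRA gives register complexity $\ge d$. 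Together with the upper bound this establishes the corollary.

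The only step requiring a genuine (if mild) argument is the monotonicity of Z-linear dimension under inclusion, i.e. that $\linHull{\mathcal{R}'} \subseteq I$ implies $\dim(\linHull{\mathcal{R}'}) \le \dim(I)$; this is where I expect the main obstacle to lie, since dimension is defined through irreducible components rather than by a purely set-theoretic relation. I would settle it by noting that each irreducible component of $\linHull{\mathcal{R}'}$ is an irreducible subset of $I$, hence is contained in a single irreducible component of $I$ (a subspace), so its dimension is bounded by $\dim(I)$; taking the maximum over all components then closes the argument.
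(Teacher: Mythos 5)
Your proposal is correct and follows essentially the same route as the paper, which derives the corollary directly from Theorem~\ref{thm:mainThm} together with the observations that the \LH, being the strongest Z-linear invariant, has minimal dimension among all Z-linear invariants, and that by Remark~\ref{rmk:similarRep} this dimension does not depend on the chosen minimal \WA. Your explicit verification that dimension is monotone under inclusion of Z-linear sets (via irreducible components landing inside single components) is exactly the detail the paper leaves implicit in the phrase ``since the dimension of the \LH is minimal.''
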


An immediate consequence of this result is that computing the strongest invariant
allows to decide the register minimization problem.

\begin{example}[Example~\ref{ex:WA} continued]
  As we have seen in Example~\ref{ex:linHullWA},
  $\linHull{\mathcal{R}}$ is 1-dimensional and has two irreducible components,
  thus $\llbracket \mathcal{R} \rrbracket$ can be realized by a
  $\CRA$ with two states and one register (depicted on the right of Figure~\ref{fig:CRA}).
\end{example}

\subsection{Invariants of minimal WA and correspondence with CRA}
\label{subsec:invWA_CRA}

\begin{proposition}
  \label{prop:dimLHMin}
  Let $\mathcal{R}$ be a \WA realizing a rational series $f$.
  If $\mathcal{R}$ has a Z-linear
  invariant of length $\lInv$ and dimension $\dInv$,
  then every minimal \WA realizing $f$ has a Z-linear
  invariant of length $\leq \lInv$ and dimension $\leq \dInv$.
\end{proposition}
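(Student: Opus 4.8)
The plan is to reduce the given \WA to a minimal one through the classical two-step minimization procedure---first restricting to the reachable subspace, then quotienting by the observability kernel---while tracking the image of the Z-linear invariant at each step, and finally to invoke uniqueness up to similarity to transfer the conclusion to \emph{every} minimal \WA. Write $\mathcal{R} = (u,\mu,v)$ of dimension $\dWA$, and let $I = \bigcup_{i=1}^{\lInv} V_i$ be the irreducible decomposition of the invariant, with each $V_i$ a vector subspace of dimension $\le \dInv$. Since $\lReachSet{\mathcal{R}}$ is the strongest invariant, we have $\lReachSet{\mathcal{R}} \subseteq I$, which is what lets us control things after reduction.

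\textbf{Restriction.} First I would set $V = \linSpan{\lReachSet{\mathcal{R}}}$, which is a $\mu$-stable subspace containing $u$. Restricting each transition map and the terminal form to $V$ yields a \WA $\mathcal{R}_1$ that still realizes $f$ and whose left reachability now spans the whole space. The candidate invariant for $\mathcal{R}_1$ is $I \cap V = \bigcup_{i=1}^{\lInv}(V_i \cap V)$: it contains $u$, it is stable since both $I$ and $V$ are $\mu$-stable, and each $V_i \cap V$ is a subspace of dimension $\le \dim V_i \le \dInv$. Hence $I \cap V$ is a Z-linear invariant of $\mathcal{R}_1$ of length $\le \lInv$ and dimension $\le \dInv$ (coincidences or inclusions among the $V_i \cap V$ can only lower the length).

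\textbf{Quotient.} Next I would take the observability kernel $W = \set{x \in V \mid x\mu(w)v = 0 \text{ for all } w \in \Sigma^*}$, a $\mu$-stable subspace on which the terminal form vanishes (take $w = \epsilon$), and let $\pi \colon V \to V/W$ be the quotient map. The induced \WA $\mathcal{R}_2$ realizes $f$, remains reachable (a quotient of a reachable \WA is reachable), and is now observable, so by Proposition~\ref{prop:caracMinRep} it is minimal. Pushing the invariant forward, $\pi(I \cap V) = \bigcup_{i=1}^{\lInv} \pi(V_i \cap V)$ is a Z-linear invariant of $\mathcal{R}_2$: it contains $\pi(u)$, stability follows from the intertwining identity relating $\pi$, $\mu(a)$ and the transition maps of $\mathcal{R}_2$, and each $\pi(V_i \cap V)$ is a subspace of dimension $\le \dim(V_i \cap V) \le \dInv$. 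Thus $\mathcal{R}_2$ is a minimal \WA realizing $f$ carrying a Z-linear invariant of length $\le \lInv$ and dimension $\le \dInv$. Since all minimal \WA realizing $f$ are similar, Remark~\ref{rmk:similarRep} lifts this to every minimal \WA, completing the argument.

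The only genuinely delicate point---and the one I would expect to need the most care in the write-up---is verifying that each reduction map sends the Z-linear invariant to a Z-linear invariant with non-increasing length and dimension. This rests entirely on the elementary fact that a linear image of a vector subspace is again a vector subspace (equivalently, that linear maps are closed in the linear Zariski topology), so that a finite union of subspaces maps to a finite union of subspaces of no larger dimension; no further topological machinery is required beyond this bookkeeping, together with the classical fact that the restriction--quotient procedure produces a minimal \WA.
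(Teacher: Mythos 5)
Your proof is correct and follows essentially the same route as the paper: reduce $\mathcal{R}$ to a minimal \WA via the standard two-step left/right reduction (the paper's Lemma~\ref{lem:leftRightMin} phrases the restriction and quotient as change-of-basis-plus-projection maps, but these are the same constructions), track the invariant by first intersecting with $\linSpan{\lReachSet{\mathcal{R}}}$ and then taking linear images of its components, and finally invoke Remark~\ref{rmk:similarRep} to transfer the conclusion to every minimal \WA. The key bookkeeping fact you isolate --- that linear images of subspaces are subspaces of no larger dimension, so length and dimension cannot increase --- is exactly the one the paper relies on.
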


Thanks to Remark~\ref{rmk:similarRep}, it suffices to show the existence of one
minimal \WA verifying the proposition, since they are all similar.
It is known (see Proposition~\ref{prop:caracMinRep}) that a minimal \WA can be obtained from a \WA by alternating
between two constructions which reduce the dimension to make it match
the one of the span of the left (resp. right) reachability set. The result then
follows from the next lemma, which states that both constructions
decrease the length and dimension of the invariants.
We prove it by considering an adequate change of basis, and
verifying that it preserves invariants.

\begin{lemma}
  \label{lem:leftRightMin}
  Let $\mathcal{R}$ be a \WA realizing a rational series $f$,
  let $S_{\mathcal{R}}$ be a Z-linear invariant of $\mathcal{R}$ of length $\lInv$ and dimension $\dInv$
  and let $r = \dim(\linSpan{\lReachSet{\mathcal{R}}})$.
  We can construct an $r$-dimensional \WA
  $\mathcal{R}'$ realizing $f$, with a Z-linear invariant $S_{\mathcal{R}'}$ of length $\leq \lInv$
  and dimension $\leq \dInv$.
  The same holds with $r = \dim(\linSpan{\rReachSet{\mathcal{R}}})$.
\end{lemma}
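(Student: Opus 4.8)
The plan is to prove the statement for $r = \dim(\linSpan{\lReachSet{\mathcal{R}}})$ and then obtain the case of the right reachability set by a dual argument. Write $\mathcal{R} = (u,\mu,v)$ and set $W = \linSpan{\lReachSet{\mathcal{R}}}$, so $\dim W = r$. Since $u\mu(w)\mu(a) = u\mu(wa) \in \lReachSet{\mathcal{R}}$, the subspace $W$ is stable under right multiplication by every $\mu(a)$, and it contains $u$. I would fix a basis $b_1, \dots, b_r$ of $W$, gather it as the rows of an $r \times \dWA$ matrix $B$, and define $\mathcal{R}' = (u',\mu',v')$ by $u'B = u$, $B\mu(a) = \mu'(a)B$ (which determines $\mu'(a) \in \sqmatrSet{\mathbb{K}}{r}$ uniquely, by stability of $W$), and $v' = Bv$. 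A straightforward induction gives $\mu'(w)B = B\mu(w)$ for all $w$, whence $u'\mu'(w)v' = u'\mu'(w)Bv = u'B\mu(w)v = u\mu(w)v = f(w)$, so $\mathcal{R}'$ is an $r$-dimensional \WA realizing $f$. Equivalently, this is the change of basis making $W$ the span of the first $r$ coordinates, after which the transition matrices are block triangular and $\mathcal{R}'$ is read off the top-left blocks.

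It remains to transport the invariant. The map $\phi \colon \mathbb{K}^r \to W$, $\phi(c) = cB$, is a linear isomorphism onto $W$, and by construction it is \emph{equivariant}: $\phi(c\mu'(a)) = c\mu'(a)B = cB\mu(a) = \phi(c)\mu(a)$. I would then set
\[
  S_{\mathcal{R}'} = \phi^{-1}(S_{\mathcal{R}}) = \set{c \in \mathbb{K}^r \mid cB \in S_{\mathcal{R}}}.
\]
Equivariance immediately yields that $S_{\mathcal{R}'}$ is an invariant of $\mathcal{R}'$: indeed $\phi(u') = u \in S_{\mathcal{R}}$, so $u' \in S_{\mathcal{R}'}$; and if $\phi(c) \in S_{\mathcal{R}}$ then $\phi(c\mu'(a)) = \phi(c)\mu(a) \in S_{\mathcal{R}}$ since $S_{\mathcal{R}}$ is stable, hence $c\mu'(a) \in S_{\mathcal{R}'}$. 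For length and dimension, write $S_{\mathcal{R}} = \bigcup_{i=1}^{\lInv} V_i$ with each $V_i$ a subspace of dimension at most $\dInv$. Since $\phi$ is linear and injective, $S_{\mathcal{R}'} = \bigcup_{i=1}^{\lInv} \phi^{-1}(V_i)$ is a finite union of at most $\lInv$ subspaces, and $\dim \phi^{-1}(V_i) = \dim(V_i \cap W) \le \dim V_i \le \dInv$ because $\phi$ restricts to an isomorphism onto $W$. Thus $S_{\mathcal{R}'}$ is a Z-linear invariant of length $\le \lInv$ and dimension $\le \dInv$.

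For $r = \dim(\linSpan{\rReachSet{\mathcal{R}}})$ I would argue dually. Set $W_r = \linSpan{\rReachSet{\mathcal{R}}}$ and $N = \set{x \in \vectSet{\mathbb{K}}{\dWA} \mid x\mu(w)v = 0 \text{ for all } w \in \Sigma^*}$, a subspace of dimension $\dWA - r$ that is stable under right multiplication by every $\mu(a)$ (if $x \in N$ then $x\mu(a)\mu(w)v = x\mu(aw)v = 0$) and that satisfies $N \subseteq \ker{v}$ (take $w = \epsilon$). The quotient map $q \colon \mathbb{K}^\dWA \to \mathbb{K}^\dWA / N \cong \mathbb{K}^r$ then intertwines the action, $v$ descends to a terminal vector, and the resulting triple $\mathcal{R}'$ is an $r$-dimensional \WA realizing $f$. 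Setting $S_{\mathcal{R}'} = q(S_{\mathcal{R}})$ and using that $q$ is an equivariant linear surjection, the same verification shows $S_{\mathcal{R}'}$ is an invariant; and the image of $\bigcup_{i=1}^{\lInv} V_i$ under the linear map $q$ is a union of at most $\lInv$ subspaces, each of dimension at most $\dInv$.

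The main point to get right in both halves is that the transfer map between the two automata must be equivariant. A naive coordinate projection onto $W$ does not commute with the restricted transition action — in the triangular form the off-diagonal block feeds the discarded coordinates back into the kept ones — so projecting $S_{\mathcal{R}}$ directly would produce spurious terms and fail stability. This is exactly why I would route the left reduction through the equivariant isomorphism $\phi$ (equivalently, intersect with $W$ before projecting, since $\phi$ lands in $W$) and the right reduction through the equivariant quotient $q$; in each case the verification that $S_{\mathcal{R}'}$ is stable is then a one-line consequence of equivariance together with stability of $S_{\mathcal{R}}$, and the length/dimension bounds follow from images and preimages of subspaces under linear maps again being subspaces of no larger dimension.
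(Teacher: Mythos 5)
Your proof is correct and follows essentially the same route as the paper: reduce to the span of the left (resp.\ right) reachability set via an equivariant linear map and transport the invariant along it, with the crucial step — intersecting $S_{\mathcal{R}}$ with $\linSpan{\lReachSet{\mathcal{R}}}$ before projecting, which you realize as a preimage under the equivariant isomorphism $\phi$ — matching the paper's ``assume $S_{\mathcal{R}}\subseteq\linSpan{\lReachSet{\mathcal{R}}}$'' step. Your right-hand case is phrased as a quotient by the annihilator of $\linSpan{\rReachSet{\mathcal{R}}}$ rather than the paper's transposed change of basis, but this is the same construction in coordinate-free language.
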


The next proposition allows to go from Z-linear invariants of \WA to \CRA.
This construction builds on the one of~\cite[Lemma 3.13]{BellS21}, in which they build an equivalent \WA from
the strongest Z-linear invariant of a \WA.
We show that an analogous construction is valid for any Z-linear invariant,
and that we can use states of \CRA to represent the different irreducible components of the invariant,
thus reducing the number of registers used to the dimension of the invariant.

\begin{proposition}
  \label{prop:linRepToCRA}
  Let $\mathcal{R}$ be a \WA.
  If $\mathcal{R}$ has a Z-linear invariant of length $\lInv$ and dimension $\dInv$,
  then there exists a linear \CRA $\mathcal{A}$, with $\sCRA$ states and $\rCRA$ registers,
  such that $\llbracket \mathcal{A} \rrbracket = \llbracket \mathcal{R} \rrbracket $.
\end{proposition}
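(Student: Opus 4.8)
The plan is to decompose the invariant into its irreducible components and to build a \CRA whose states record which component the currently reachable vector lives in, while its registers store the coordinates of that vector in a fixed basis of that component. This is the state-per-component refinement of the single-state construction of~\cite[Lemma 3.13]{BellS21}.

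First I would write the Z-linear invariant as $S = \bigcup_{i=1}^{\lInv} V_i$, where the $V_i$ are its irreducible components; each $V_i$ is then a vector subspace of $\mathbb{K}^{\dWA}$ of dimension $d_i \le \dInv$, and there are exactly $\lInv$ of them since $\lInv$ is the length. For each $i$ I fix a basis of $V_i$ and let $B_i \in \matrSet{\mathbb{K}}{d_i}{\dWA}$ be the matrix whose rows are these basis vectors, so $V_i = \linSpan{\text{rows of } B_i}$ and $B_i$ has full row rank. The states of $\mathcal{A}$ are $\{1, \dots, \lInv\}$, one per component, and $\mathcal{A}$ uses $\dInv$ registers $X_1, \dots, X_{\dInv}$; at a state $i$ only the first $d_i$ of them are meaningful (the coordinates in $B_i$), the remaining ones being kept at $0$.

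The key structural step, which I expect to be the main obstacle, is defining the underlying deterministic automaton. Since $S$ is an invariant we have $S\mu(a) \subseteq S$, hence $V_i\mu(a) \subseteq S$. As $V_i\mu(a)$ is the image of the irreducible set $V_i$ under the continuous linear map $x \mapsto x\mu(a)$, it is irreducible; and since $S = \bigcup_j V_j$ is a finite union of closed sets, irreducibility forces $V_i\mu(a) \subseteq V_j$ for some $j$. I set $\delta_Q(i,a) := j$ (picking one such $j$ if several exist). This containment is exactly what makes the register update linear: because the rows of $B_i\mu(a)$ lie in the row space of $B_j$, there is a unique $M_{i,a} \in \matrSet{\mathbb{K}}{d_i}{d_j}$ with $B_i\mu(a) = M_{i,a} B_j$ (uniqueness from $B_j$ having independent rows, existence from the containment). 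Concretely I take $M_{i,a} = B_i\mu(a) B_j^{+}$ for any right inverse $B_j^{+}$ of $B_j$, and check $B_i\mu(a) B_j^{+} B_j = B_i\mu(a)$ since $B_j^{+} B_j$ fixes the row space of $B_j$ and the rows of $B_i\mu(a)$ lie there. The update $\delta_{\varSet}(i,a)$ is then the linear substitution $X_m := \sum_{l=1}^{d_i} (M_{i,a})_{l,m}\, X_l$ for $m \le d_j$, and $X_m := 0$ for $m > d_j$.

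It remains to fix the initial data and the output and to verify correctness. Since $u \in S$, I choose a component $V_{i_0} \ni u$ as the initial state and set the initial valuation to the coordinate vector $\lambda_0$ with $\lambda_0 B_{i_0} = u$ (padded with zeros). At a state $i$ I set the output to the linear expression $\outFctCRA(i) = \sum_{m=1}^{d_i} (B_i v)_m\, X_m$. Correctness then follows by induction on $w$, the run invariant being that after reading $w$ the \CRA is in a state $q_w = i$ with valuation $\lambda_w$ satisfying $\lambda_w B_{q_w} = u\mu(w)$. The base case is the choice of $\lambda_0$, and the inductive step is $\lambda_{wa} B_j = \lambda_w M_{i,a} B_j = \lambda_w B_i \mu(a) = u\mu(w)\mu(a)$. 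Evaluating the output at the final state gives $\lambda_w (B_i v) = u\mu(w) v = \llbracket \mathcal{R} \rrbracket(w)$, as required. By construction $\mathcal{A}$ has $\sCRA$ states and $\rCRA$ registers.
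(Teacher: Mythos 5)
Your proof is correct and follows essentially the same route as the paper's: one state per irreducible component, registers holding coordinates in a chosen basis of the current component, transitions determined by the containment $V_i\mu(a)\subseteq V_j$ (justified, as you do, by irreducibility of the image of an irreducible set under a continuous map), and the same induction $\lambda_w B_{q_w}=u\mu(w)$. The only differences are cosmetic — you use $d_i\times\dWA$ basis matrices with right inverses where the paper completes each basis to one of $\mathbb{K}^{\dWA}$ and truncates, and the paper carries out the construction directly in the affine setting (components $p_i+V_i$) so that the linear case is the specialization $p_i=0$.
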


The next proposition shows the converse direction, from \CRA to invariants
of \WA.
The construction is the classical one from \CRA to \WA.
The existence of the adequate invariant follows from the determinism of the \CRA which ensures
that in any reachable configuration, only coordinates associated with the reachable state of the \CRA can be non-zero.

\begin{proposition}
  \label{prop:CRATolinRep}
  Let $\mathcal{A}$ be a linear \CRA.
  If $\mathcal{A}$ has $\sCRA$ states and $\rCRA$ registers, then there exists a \WA $\mathcal{R}$,
  with a Z-linear invariant of length $\lInv$ and dimension $\dInv$,
  such that $\llbracket \mathcal{A} \rrbracket = \llbracket \mathcal{R} \rrbracket $.
\end{proposition}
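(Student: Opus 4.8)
The plan is to apply the classical translation from a \CRA to a \WA, and then to read off the desired invariant directly from its block structure. Write $\mathcal{A} = (Q, q_0, \varSet, v_0, \outFctCRA, \delta)$ with $|Q| = \sCRA$ and $|\varSet| = \rCRA$, and use the identifications of Section~\ref{sec:prelim}: each valuation is a row vector in $\mathbb{K}^{\rCRA}$, each register update $\delta_{\varSet}(q,a)$ is a $\rCRA \times \rCRA$ matrix acting on the right, and each output $\outFctCRA(q)$ is a column vector in $\mathbb{K}^{\rCRA}$. I would build a \WA $\mathcal{R}$ of dimension $\sCRA \cdot \rCRA$, whose coordinate space $\mathbb{K}^{\sCRA\rCRA}$ is split into one block of $\rCRA$ coordinates per state. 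Writing $\iota_q \colon \mathbb{K}^{\rCRA} \to \mathbb{K}^{\sCRA\rCRA}$ for the embedding that places a vector in the block of $q$, and $B_q = \im{\iota_q}$ for the corresponding coordinate subspace, the intended meaning is that $\iota_q(x)$ represents the \CRA configuration $(q, x)$.

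Then I would set the initial vector to $u = \iota_{q_0}(v_0)$, let the terminal column vector carry $\outFctCRA(q)$ in the block of $q$ for each $q$, and define each transition matrix $\mu(a)$ by $\iota_q(x) \mapsto \iota_{\delta_Q(q,a)}\big(x\, \delta_{\varSet}(q,a)\big)$; equivalently, the unique nonzero $\rCRA \times \rCRA$ block in the row-block of $q$ is $\delta_{\varSet}(q,a)$, sitting in the column-block of $\delta_Q(q,a)$. By determinism of $\mathcal{A}$ there is exactly one target block per state, so $\mu$ is well defined. A routine induction on the length of $w$ then gives $u\,\mu(w) = \iota_{\delta_Q(q_0,w)}\big(v_0\, \delta_{\varSet}(q_0,w)\big)$; pairing with the terminal vector selects precisely the block of $\delta_Q(q_0,w)$ and yields $\llbracket \mathcal{R} \rrbracket(w) = v_0\, \delta_{\varSet}(q_0,w)\, \outFctCRA(\delta_Q(q_0,w)) = \llbracket \mathcal{A} \rrbracket(w)$, matching the closed form of the \CRA semantics.

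It then remains to exhibit the invariant. I would take $S = \bigcup_{q\in Q} B_q$, a finite union of $\sCRA$ vector subspaces each of dimension $\rCRA$, hence a Z-linear set of length $\sCRA = \lInv$ and dimension $\rCRA = \dInv$. It contains $u \in B_{q_0}$, and it is stable under every $\mu(a)$ precisely because $\mu(a)$ maps $B_q$ into $B_{\delta_Q(q,a)}$ by construction; thus if $w \in B_q \subseteq S$ then $w\mu(a) \in B_{\delta_Q(q,a)} \subseteq S$. Hence $S$ is a Z-linear invariant of $\mathcal{R}$ of the required length and dimension, which concludes the proof.

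I do not expect a genuine obstacle here: the only point requiring care is the bookkeeping of composition order (row vectors acted on the right, with substitution composition matching right matrix multiplication), so that the transition matrices faithfully encode the register updates. The essential structural input is determinism, which is exactly what confines every reachable vector to a single block $B_q$ and thereby keeps the length of the invariant bounded by the number of states.
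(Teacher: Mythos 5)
Your proof is correct and is essentially the paper's own argument: the classical block-structured \CRA-to-\WA translation with one $\rCRA$-dimensional coordinate block per state, the invariant being the union of these blocks, and determinism confining every reachable vector to a single block. The only difference is that the paper carries out the construction once in the more general affine setting (blocks of size $\rCRA+1$ with an appended constant-$1$ coordinate, and affine components $\psi_i(\mathbb{K}^{\rCRA})$) and obtains the linear case as the specialization you wrote directly.
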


Using the three previous propositions, we can finally prove the main characterization:
\begin{proof}[Proof of Theorem \ref{thm:mainThm}]
  Given a linear $\CRA$ with $\sCRA$ states and $\rCRA$ registers, we can construct,
  thanks to Proposition~\ref{prop:CRATolinRep},
  an equivalent \WA with a Z-linear invariant
  of length $\lInv$ and dimension $\dInv$.
  Then the desired minimal \WA exists thanks to Proposition~\ref{prop:dimLHMin}.

  Reciprocally, applying the construction of Proposition~\ref{prop:linRepToCRA}
  to any minimal \WA gives the desired linear $\CRA$.
\end{proof}

As we will discuss in the next subsection below,
the three propositions we used for this proof can also be adapted to yield the same result
for affine \CRA.

\subsection{Z-affine invariants and affine CRA} \label{subsec:affine}
All the results of Section~\ref{sec:characterization} can actually be extended to affine \CRA
using the \emph{affine Zariski topology} instead of the linear one.
It is a slight generalization of the linear Zariski topology where closed sets,
called \emph{Z-affine} sets, are finite unions of affine spaces instead of vector spaces,
with lengths and dimensions defined like in the linear case.
It is still a Noetherian topology coarser than the Zariski topology,
affine maps are continuous and closed maps in this topology
and, more broadly, it enjoys the same properties as the linear Zariski topology
we considered throughout this section.
For a set $S \subseteq \mathbb{K}^n$, we will denote by $\affClosure{S}$ it closure
in the affine Zariski topology and, similarly to the linear case,
for a \WA $\mathcal{R} = (u,\mu,v)$, we will call any invariant of $\mathcal{R}$ that is a Z-affine set
a \emph{Z-affine invariant} of $\mathcal{R}$.
Of course, the strongest Z-affine invariant of $\mathcal{R}$ is still the closure of its reachability set
\ie its ``affine hull'' $\affHull{\mathcal{R}}$ and Remark~\ref{rmk:similarRep} is still true
for Z-affine invariants.

We obtain the same characterization of Theorem~\ref{thm:mainThm} in the affine setting :
\begin{theorem}[Characterization]
  \label{thm:charAff}
  Let $f$ be a rational series.
  Then $f$ can be realized by an affine $\CRA$
  with $\sCRA$ states and $\rCRA$ registers iff there exists a minimal \WA realizing $f$ that has a Z-affine invariant
  of length at most $\lInv$ and dimension at most $\dInv$.
\end{theorem}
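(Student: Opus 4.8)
The plan is to mirror exactly the structure used in the linear case (Theorem~\ref{thm:mainThm}) by reproving the three supporting propositions — Propositions~\ref{prop:dimLHMin}, \ref{prop:linRepToCRA} and \ref{prop:CRATolinRep} — with ``linear'' replaced by ``affine'' throughout, and then assembling them in the same way. Since the excerpt already asserts that the affine Zariski topology is Noetherian, coarser than the Zariski topology, and has affine maps as continuous closed maps, with lengths and dimensions defined as in the linear case, all the topological facts invoked implicitly in the linear proofs transfer verbatim. Thus the only real work is to check that the three constructions are compatible with affine, rather than merely linear, structure.

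First I would establish the affine analogue of Proposition~\ref{prop:dimLHMin}. As in the linear case it suffices, by the affine version of Remark~\ref{rmk:similarRep}, to exhibit one minimal \WA with a Z-affine invariant of length $\leq \lInv$ and dimension $\leq \dInv$. The minimization of a \WA proceeds by the two reduction constructions of Proposition~\ref{prop:caracMinRep}, and I would re-examine the change-of-basis argument behind Lemma~\ref{lem:leftRightMin} to confirm that it sends Z-affine invariants to Z-affine invariants without increasing length or dimension. Since the constructions are themselves linear (projections/quotients on the ambient space) and linear maps are in particular affine and closed in the affine topology, the image of a finite union of affine subspaces is again a finite union of affine subspaces of no greater dimension, and the count of irreducible components cannot grow; this is the direct transcription of the linear argument.

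Next I would prove the affine versions of Propositions~\ref{prop:linRepToCRA} and \ref{prop:CRATolinRep}, which give the two directions of the equivalence. For the direction from invariants to \CRA, I would reuse the construction of~\cite[Lemma 3.13]{BellS21} adapted as in the linear case: each irreducible component of the Z-affine invariant becomes a state of the \CRA, and the affine register updates are read off from the affine maps $u\mapsto u\mu(a)$ restricted to each component, using the identification (promised in Appendix~\ref{apx:proofChar}) of affine expressions with affine forms and affine substitutions with affine maps. The number of states is the length and the number of registers is the dimension, exactly as before. For the converse, the classical \CRA-to-\WA construction applies unchanged, and the determinism of the \CRA again guarantees that in any reachable configuration only the coordinates attached to the current state are active; here the reachable configurations form a finite union of \emph{affine} images, so their affine-Zariski closure is a Z-affine invariant of the expected length and dimension.

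Finally I would assemble these as in the proof of Theorem~\ref{thm:mainThm}: from an affine \CRA with $\sCRA$ states and $\rCRA$ registers, the affine version of Proposition~\ref{prop:CRATolinRep} yields an equivalent \WA with a Z-affine invariant of length $\lInv$ and dimension $\dInv$, and the affine version of Proposition~\ref{prop:dimLHMin} transfers it to a minimal \WA; conversely, applying the affine version of Proposition~\ref{prop:linRepToCRA} to any minimal \WA produces the desired affine \CRA. The step I expect to be the main obstacle is the bookkeeping of the affine identifications: one must be careful that an affine substitution composed along a run still corresponds to the intended affine map, and that the ``constant'' parts $\affPart{f}$ compose correctly, so that the number of registers is genuinely the affine dimension and is not silently inflated by one (the homogenizing register of the remark equating linear and affine expressiveness). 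Verifying that the component-wise affine updates are well-defined on each irreducible piece — \emph{i.e.} that the relevant affine maps respect the decomposition of the invariant — is the only place where the affine case requires genuine checking rather than a mechanical copy of the linear argument.
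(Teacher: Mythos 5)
Your proposal follows essentially the same route as the paper: the paper proves Lemma~\ref{lem:leftRightMin} and Propositions~\ref{prop:linRepToCRA} and~\ref{prop:CRATolinRep} directly in the affine setting (via the same change-of-basis argument, the component-per-state construction with updates $x \mapsto ((p_q + x\,\cdot)\mu(a) - p_{q'})$ expressed in a basis of each $V_q$, and the block construction with one homogenizing coordinate per state whose image under the affine embeddings $\psi_i$ gives the invariant), and then assembles them exactly as in Theorem~\ref{thm:mainThm}. The subtleties you flag --- composition of the constant parts and the homogenizing coordinate not inflating the dimension --- are precisely the points the appendix checks, so your plan is correct and matches the paper's proof.
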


We can show that Propositions~\ref{prop:dimLHMin},~\ref{prop:linRepToCRA} and~\ref{prop:CRATolinRep}
are also true if we replace Z-linear invariants by Z-affine ones and linear \CRA by affine ones.
So, the proof of Theorem~\ref{thm:charAff} remains the same as Theorem~\ref{thm:mainThm}.
All the details can be found in Appendix~\ref{apx:proofChar}.

Of course, this theorem has the same consequences of its linear counterpart
and we obtain an affine version of Corollary~\ref{cor:minRegLH}
\begin{corollary}
  \label{cor:minRegLHAff}
  The register complexity of a rational series $f$ w.r.t.\ the class of affine \CRA
  is the dimension of the \AH of any minimal \WA realizing $f$.
\end{corollary}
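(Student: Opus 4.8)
The plan is to derive this statement as a direct consequence of the affine characterization (Theorem~\ref{thm:charAff}), mirroring exactly the way Corollary~\ref{cor:minRegLH} follows from Theorem~\ref{thm:mainThm} in the linear case. By definition, the register complexity of $f$ w.r.t.\ affine \CRA is the least $\rCRA$ for which some affine \CRA with $\rCRA$ registers (and an arbitrary number of states) realizes $f$. By Theorem~\ref{thm:charAff}, such a \CRA exists if and only if some minimal \WA realizing $f$ admits a Z-affine invariant of dimension at most $\rCRA$, the length (the number of states) being left unconstrained. Hence the register complexity equals the minimum dimension of a Z-affine invariant of a minimal \WA, and it remains to identify this minimum with $\dim(\affHull{\mathcal{R}})$.

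The key observation is that the \AH $\affHull{\mathcal{R}} = \affClosure{\lReachSet{\mathcal{R}}}$ is the \emph{smallest} Z-affine invariant: any Z-affine invariant $S$ is an affine Zariski closed set containing the reachability set $\lReachSet{\mathcal{R}}$, hence contains its closure $\affHull{\mathcal{R}}$. First I would establish that dimension is monotone under inclusion of Z-affine sets: if $S_1 \subseteq S_2$, then each irreducible component of $S_1$, being irreducible and contained in the finite union of components of $S_2$, must lie inside a single component of $S_2$, so its dimension is bounded by $\dim(S_2)$; taking the maximum over components of $S_1$ gives $\dim(S_1) \leq \dim(S_2)$. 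Applying this with $S_1 = \affHull{\mathcal{R}}$ and $S_2$ an arbitrary Z-affine invariant shows that the \AH achieves the minimum dimension among all Z-affine invariants of a minimal \WA.

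Combining the two facts yields both inequalities. On one hand, the \AH is itself a Z-affine invariant of the minimal \WA, of dimension $\dim(\affHull{\mathcal{R}})$ (and some length), so the backward direction of Theorem~\ref{thm:charAff} (the affine analogue of Proposition~\ref{prop:linRepToCRA}) produces an affine \CRA with $\dim(\affHull{\mathcal{R}})$ registers realizing $f$; thus the register complexity is at most $\dim(\affHull{\mathcal{R}})$. On the other hand, if an affine \CRA with $\rCRA$ registers realizes $f$, then Theorem~\ref{thm:charAff} provides a minimal \WA carrying a Z-affine invariant $S$ with $\dim(S) \leq \rCRA$, and monotonicity gives $\dim(\affHull{\mathcal{R}}) \leq \dim(S) \leq \rCRA$; thus $\dim(\affHull{\mathcal{R}})$ is a lower bound on the register complexity. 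Equality follows.

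Finally, the phrase ``any minimal \WA'' is justified by the affine version of Remark~\ref{rmk:similarRep}: all minimal \WA realizing $f$ are similar, and similarity induces a length- and dimension-preserving bijection between their Z-affine invariants, so $\dim(\affHull{\mathcal{R}})$ is independent of the chosen minimal \WA. I expect the only non-routine point to be the monotonicity of dimension under inclusion, which relies on the affine Zariski topology being Noetherian with irreducible components that are affine subspaces; the remainder is a bookkeeping reduction to Theorem~\ref{thm:charAff}.
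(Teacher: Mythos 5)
Your proposal is correct and follows essentially the same route as the paper: the corollary is obtained from the affine characterization theorem together with the observation that the strongest Z-affine invariant (being the closure of the reachability set, hence contained in every Z-affine invariant) has minimal dimension, with independence of the chosen minimal \WA supplied by the affine version of Remark~\ref{rmk:similarRep}. The paper only sketches this; your explicit justification of dimension monotonicity via irreducibility of components is the right way to fill in the one step it leaves implicit.
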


Working in the affine Zariski topology instead of the linear one can decrease
the dimension of the strongest invariant by one, as shown in the following example.
\begin{example}
  \label{ex:sumPow2}
  On the alphabet $\Sigma = \left\{ a \right\}$,
  let $\mathcal{R} = (u, \mu, v)$,
  where $u = (1, 2)$, $\mu(a) = \begin{pmatrix}
                                  1 & 0 \\
                                  1 & 2
  \end{pmatrix}$ and $v = (1, 0)^t$,
  be a \WA (over $\mathbb{R}$) realizing the rational series $f$ defined by
  $f(a^n) = \sum_{i=0}^{n} 2^i = 2^{n+1}-1$.

  The reachability set of $\mathcal{R}$ is $\lReachSet{\mathcal{R}}
  = \big\{ \left(\sum_{i=0}^{n} 2^i , 2^{n+1}\right) \,\big|\, n \in \mathbb{N} \big\}$.

  For the linear Zariski topology, $\lReachSet{\mathcal{R}}$ is dense in $\mathbb{R}^2$.
  So the \LH $\linHull{\mathcal{R}} = \mathbb{R}^2$ is two-dimensional.
  However, note that, for all $(x,y) \in \lReachSet{\mathcal{R}}$, $y = x+1$.
  So, by an argument of density in the affine Zariski topology,
  the \AH $\affHull{\mathcal{R}}$ is the affine line $y=x+1$, which is one-dimensional.
\end{example}

Thus, in the case where the dimensions of the affine and linear hulls doesn't match,
using affine \CRA instead of linear \CRA can allow to save one register :

\begin{example}[Example~\ref{ex:sumPow2} continued]
  \label{ex:sumPow2CRA}
  The two $\CRA$ depicted on Figure~\ref{fig:sumPow2CRA} both realize the function of Example~\ref{ex:sumPow2}.
  On the left we have a linear $\CRA$ with two registers and, on the right,
  an affine $\CRA$ with only one register.
  The characterization theorems show that both have the minimal number of
  registers for their respective classes of $\CRA$.
\end{example}


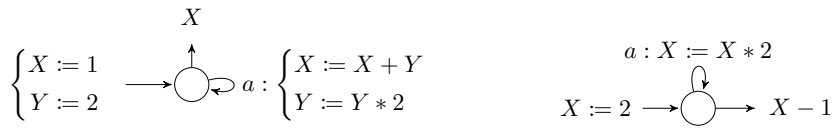
\begin{figure}[h!]
  \centering
  \vspace{-.5cm}
  \scalebox{.9}{
    \begin{tikzpicture}[->,>=stealth',shorten >=1pt,auto,node distance=1.8cm]
      \tikzstyle{every state}= [minimum size=5mm]

      \node[state] (p) at (0,0) {};

      \node[state,draw=none] (ps) at (-2,0) {$\left \{\begin{aligned}
                                                        X \coloneqq 1\\Y \coloneqq 2
      \end{aligned}\right. $};

      \node[state,draw=none] (pp) at (0,1) {$X$};

      \draw
      (p) edge [loop right] node {$a: \left \{\begin{aligned}
                                                & X \coloneqq X + Y \\& Y \coloneqq Y*2
      \end{aligned}\right. $} (p)
      (ps) edge (p)
      (p) edge (pp);
    \end{tikzpicture}
  }
  \hfil
  \scalebox{.9}{
    \begin{tikzpicture}[->,>=stealth',shorten >=1pt,auto,node distance=1.8cm]
      \tikzstyle{every state}= [minimum size=5mm]

      \node[state] (p) at (0,0) {};

      \node[state,draw=none] (ps) at (-1.5,0) {$X \coloneqq 2 $};

      \node[state,draw=none] (pp) at (1.5,0) {$X-1$};

      \draw
      (p) edge [loop above] node {$a : X \coloneqq X*2 $} (p)
      (ps) edge (p)
      (p) edge (pp);
    \end{tikzpicture}
  }
\vspace{-.3cm}
  \caption{Two CRA detailed in Example~\ref{ex:sumPow2CRA}.}
  \label{fig:sumPow2CRA}
\end{figure}

\section{Algorithms and complexity for the minimization problems}
\label{sec:algos}

We present two original algorithms to solve the minimization problems
we consider.
It is worth observing the difference between the two characterizations we have obtained: while the register complexity can be computed from a canonical object
(the strongest Z-linear invariant of the \WA), the state-register complexity is based on the existence
of a particular Z-linear invariant.
This explains why we derive a non-deterministic
procedure for the latter, and a deterministic for the former.

\subsection{Algorithm for the state-register minimization problem}
\label{subsec:sr}

We provide here a \nexptime algorithm for the state-register minimization problem,
hence proving Theorem~\ref{thm:state-reg-min}.
The algorithm runs in \nptime in $\sCRA$, $\rCRA$, and the size of the automaton.
The fact that $\sCRA$ is given in binary explains the exponential discrepancy.

\textbf{Small representations of Z-affine sets}
Let $\mathcal R=(u,\mu,v)$ be a \WA of dimension $\dWA$ over an alphabet $\Sigma$.
Let $L=A_1\cup\cdots\cup A_\lInv$ be a Z-affine set of length $\lInv$ of $\mathbb K^\dWA$.

An $\mathcal R$-representation $R$ of $L$ is a set of $\lInv$ finite sets of words $S_1,\ldots, S_\lInv$
such that
$\affSpan{\set{u\mu(w)|\ w\in S_i}}=A_i$ for all $i\in \set{1,\cdots,\lInv}$.
The \emph{size} of $R$ is the sum of the lengths of all words appearing in $R$.
The following key lemma shows that all Z-affine invariants of $\mathcal R$ have small
$\mathcal R$-representations, up to considering stronger invariants.

\begin{lemma}
  \label{lem:rep}
  Let $\mathcal R$ be a \WA.
  Let $I$ be a Z-affine invariant of $\mathcal R$ of
  length $\lInv$ and dimension $\dInv$.
  There exists an $\mathcal R$-representation $R$ 
  of size $\leq \lInv^2 \dInv^2$ of
  a Z-affine invariant $J\subseteq I$, of dimension $\leq \dInv$ and length $\leq \lInv$.
\end{lemma}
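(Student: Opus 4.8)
The plan is to build $J$ explicitly from the reachable vectors lying in each irreducible component of $I$, together with a saturation procedure that keeps the spanning sets of words small. Write $I = A_1 \cup \cdots \cup A_\lInv$ for the decomposition into irreducible components, each $A_i$ being an affine subspace with $\dim A_i \le \dInv$. The first step is to record how the components move under the dynamics. Since $I$ is an invariant we have $I\mu(a) \subseteq I$, hence $A_i\mu(a) \subseteq I$; as $A_i$ is irreducible and $\mu(a)$ is continuous and closed for the affine Zariski topology, $A_i\mu(a)$ is irreducible, so it lies inside a single component $A_{\sigma(i,a)}$. This produces a map $\sigma \colon \intInterv{1}{\lInv}\times\Sigma \to \intInterv{1}{\lInv}$ telling us where each component is routed by each letter.

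Next I would construct the sets $S_1,\ldots,S_\lInv$ by a worklist saturation. Since $u = u\mu(\epsilon) \in I$, choose $i_0$ with $u \in A_{i_0}$, initialise $S_{i_0} = \set{\epsilon}$ and all other $S_i$ empty, and write $B_i = \affSpan{\set{u\mu(w) \mid w \in S_i}}$. I then repeat the following: as long as there are $i$, $w \in S_i$ and $a \in \Sigma$ with $u\mu(wa) \notin B_{\sigma(i,a)}$, add the word $wa$ to $S_{\sigma(i,a)}$. A direct induction (using $A_i\mu(a) \subseteq A_{\sigma(i,a)}$) shows that $u\mu(w) \in A_i$ for every $w \in S_i$ at all times, so $B_i \subseteq A_i$ throughout. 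Each addition contributes an affinely independent point and thus strictly increases $\dim B_{\sigma(i,a)}$; since $\dim B_i \le \dim A_i \le \dInv$ this forces termination, with each nonempty $S_i$ ending with at most $\dInv + 1$ words.

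Setting $J := \bigcup_i B_i$, I would then verify the required properties. From $B_i \subseteq A_i$ we get $J \subseteq I$ and $\dim J \le \dInv$; and because each $B_i$ is a single affine subspace, $J$ is a union of at most $\lInv$ affine subspaces, so its length is $\le \lInv$. For invariance, $u \in B_{i_0} \subseteq J$, and at termination every $w \in S_i$ satisfies $u\mu(wa) \in B_{\sigma(i,a)}$; since $B_{\sigma(i,a)}$ is affine this yields $B_i\mu(a) = \affSpan{\set{u\mu(wa) \mid w \in S_i}} \subseteq B_{\sigma(i,a)} \subseteq J$, so $J$ is stable under every $\mu(a)$. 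Hence $(S_1,\ldots,S_\lInv)$ is an $\mathcal R$-representation of an invariant $J \subseteq I$ of dimension $\le \dInv$ and length $\le \lInv$. For the size bound, the words appearing in the $S_i$ form a prefix-closed set (every added word $wa$ has its parent $w$ already present, rooted at $\epsilon$), hence a tree; its number of nodes is the total word count $\sum_i(\dim B_i + 1) \le \lInv(\dInv+1)$, so each word has length bounded by that quantity, and summing lengths over all words yields a bound of order $\lInv^2\dInv^2$, with a careful count giving $\le \lInv^2\dInv^2$.

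I expect the main obstacle to be the invariance argument rather than the counting. The crucial insight is that the component-transition map $\sigma$ exists at all — this is where irreducibility of the $A_i$ and continuity/closedness of the affine maps in the affine Zariski topology are used — and that routing each newly reached point $u\mu(wa)$ specifically into $S_{\sigma(i,a)}$, instead of into an arbitrary component that happens to contain it, is exactly what makes the saturated family close up into an invariant while keeping every spanning set of affine dimension at most $\dInv$.
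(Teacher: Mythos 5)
Your proof is correct and follows essentially the same strategy as the paper's: grow the word sets $S_1,\ldots,S_\lInv$ greedily from $\set{\epsilon}$, charge each added word to a strict increase of the affine dimension of some component, and deduce both termination after at most $\lInv(\dInv+1)-1$ additions and the quadratic size bound from the fact that each addition lengthens words by at most one letter. The one genuine difference is your routing map $\sigma$. The paper adds $wa$ to an \emph{arbitrary} component $A_j$ of $I$ containing $u\mu(wa)$, and then asserts that once every one-step successor of a representative point lies in $J$, the set $J$ is an invariant; as stated this step requires justification, since the images $u\mu(wa)$ of the points represented by a single $S_i$ could a priori be scattered over several distinct $B_j$'s, and a union of affine subspaces need not contain the affine span of points taken from different members of the union. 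Your observation that $A_i\mu(a)$ is irreducible, hence contained in a single component $A_{\sigma(i,a)}$, combined with the decision to send every $wa$ with $w\in S_i$ into $S_{\sigma(i,a)}$, guarantees that all these images land in the \emph{same} affine subspace, so that $B_i\mu(a)=\affSpan{\set{u\mu(wa)\mid w\in S_i}}\subseteq B_{\sigma(i,a)}\subseteq J$ is immediate. In short, you buy an airtight invariance step at the cost of one extra (easy) irreducibility lemma, while the counting is unchanged; note that, exactly as in the paper's own count, the honest bound is of order $\lInv^2(\dInv+1)^2$ rather than literally $\lInv^2\dInv^2$, a harmless looseness you share with the original.
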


This property allows to derive the non-deterministic algorithm.
First, minimization of a \WA over a field can be performed in polynomial time
(see \eg~\cite[Corollary 4.17]{Sakarovitch09}).
Then, let $\mathcal R$ be a minimal \WA and let $\dInv,\lInv$ be positive integers.
  From Lemma~\ref{lem:rep}, we know that a Z-affine invariant of dimension $\dInv$ and length $\lInv$
  can be represented in size $O(\dInv^2 \lInv^2)$
  (up to finding a stronger invariant with smaller dimension and length).
  The algorithm works thusly: first step is to guess an $\mathcal R$-representation $R$ of a Z-affine set.
  The second step is to check that $R$ represents an invariant,
  which can be done easily using basic linear algebra.
  From this one can compute an affine \CRA with $\rCRA$ registers and $\sCRA$ states.
  Moreover, if we require that $R$ is Z-linear, we obtain a linear \CRA.
  If $R$ is not an invariant, the computation rejects.
  Note that different accepting computations may give rise to different invariants and thus different CRAs.

\subsection{Algorithm for the computation of Z-affine invariants}
\label{subsec:computation}

We describe a deterministic procedure which, given a \WA $\mathcal R$
and an integer $c$, returns a Z-affine invariant $J$ which is stronger
that any Z-affine invariant $I$ of $\mathcal R$ of length at most $c$.
When $c$ is chosen large enough, this procedure returns the strongest
Z-affine invariant of $\mathcal R$.
A similar procedure works as
well for the computation of Z-linear invariants.

\begin{wrapfigure}{R}{0.45\textwidth}
\begin{minipage}{0.45\textwidth}
   \vspace{-3ex}
\begin{algorithm}[H]
  \caption{Computing a Z-affine invariant}
  \label{algo:compute}
  \begin{algorithmic}[1]
    \Require{A \WA $\mathcal R = (u,\mu,v)$ of dimension $\dWA$, an integer $c$}
    \Ensure{A Z-affine invariant $J$ of $\mathcal R$ stronger than
    $I_c(\calR)$}
    \State $J \coloneqq \{u\}$
    \While{$J$ is not an invariant of $\mathcal R$}
      \State Pick some component $A$ of $J$, and some matrix $M$ of $\mathcal{R}$ s.t.
      $A\cdot M \not\subseteq J$
      \State $J \coloneqq J \cup A\cdot M$
      \If{$\mathrm{length}(J) > c^d$}
        \State $J \coloneqq \textsc{reduce}(J)$
      \EndIf
    \EndWhile
    \State \Return $J$
  \end{algorithmic}
\end{algorithm}
\end{minipage}
\end{wrapfigure}
Intuitively, this procedure will build a Z-affine set $J$ as follows:
it starts with a set containing only the initial vector of $\mathcal R$,
and incrementally extends it until it forms an invariant.
During this process, it should ensure that $J$ is included in every
Z-affine invariant $I$ of $\mathcal R$ of length at most $c$.
This relies on the following easy observation: if such an invariant $I$ contains at least
$c+1$ points on the same affine line (\emph{i.e.} a 1-dimensional affine space, denoted $D$),
then $I$ must have a component that contains $D$.
Indeed, as $I$ has length at most $c$, one of its components contains two such points.
As this component is irreducible, it is an affine subspace, hence contains $D$.
This reasoning can be lifted to higher dimensions as follows.

Given a \WA $\mathcal R$, and $c\in \mathbb N$, we denote by $I_c(\calR)=\bigcap_{\mathrm{length}(I)\leq c} I$
the intersection of all Z-affine invariants of $\mathcal{R}$ with at most $c$ components.

\begin{lemma}
  \label{lem:subspaces}
  Let $\mathcal R$ be a \WA and let $c,k\in \mathbb N$.
  Let $A_1,\ldots,A_{c^{k}+1}\subseteq I_c(\calR)$ be affine spaces such that:
  for any $P\subseteq \intInterv{1}{c^{k}+1}$ with $|P|\geq c^{k-1}+1$, $\affSpan{\cup_{i\in P}A_i}$ has
  dimension $k$.
  Then $\affSpan{\cup_{i\in \intInterv{1}{c^{k}+1}}A_i}\subseteq I_c(\calR)$.
\end{lemma}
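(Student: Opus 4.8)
The plan is to prove the inclusion directly against each candidate invariant. Since $I_c(\calR)=\bigcap_{\mathrm{length}(I)\leq c}I$, it suffices to fix an arbitrary Z-affine invariant $I$ of $\calR$ of length at most $c$ and show that $W\coloneqq \affSpan{\cup_{i\in\intInterv{1}{c^{k}+1}}A_i}$ is contained in $I$. First I would apply the hypothesis to the full index set $P=\intInterv{1}{c^{k}+1}$, which is legitimate since $|P|=c^{k}+1\geq c^{k-1}+1$ for $c\geq 1$; this records that $W$ is an affine space of dimension exactly $k$. I would also note that every $A_i$ is, by definition of the affine span, contained in $W$, and that each $A_i\subseteq I_c(\calR)\subseteq I$.

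Next I would decompose $I=C_1\cup\cdots\cup C_m$ into its irreducible components, so that $m\leq c$ (the length) and each $C_j$ is an affine space. The heart of the argument is a proof by contradiction: assume $W\not\subseteq I$. Then $W\not\subseteq C_j$ for every $j$; since $W$ and $C_j$ are both affine, the equality $\dim(W\cap C_j)=\dim W$ would force $W\cap C_j=W$, i.e. $W\subseteq C_j$, which is excluded. Hence $\dim(W\cap C_j)\leq k-1$ for all $j$ (with the empty intersection treated as dimension $-\infty$). Because each $A_i$ lies in $W\cap I=\bigcup_j(W\cap C_j)$ and is irreducible, the observation recalled just before the statement shows that $A_i$ is contained in a single $W\cap C_{j(i)}$.

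The conclusion then follows from pigeonhole. Distributing the $c^{k}+1$ spaces $A_i$ among the at most $c$ sets $W\cap C_j$, some component $C_{j_0}$ receives at least $\lceil(c^{k}+1)/c\rceil=c^{k-1}+1$ of them; collecting these indices into a set $P$ gives $|P|\geq c^{k-1}+1$ and $\cup_{i\in P}A_i\subseteq W\cap C_{j_0}$, whence $\affSpan{\cup_{i\in P}A_i}\subseteq W\cap C_{j_0}$ has dimension at most $k-1$. But by the hypothesis this span has dimension $k$, a contradiction. Therefore $W\subseteq I$, and as $I$ was an arbitrary Z-affine invariant of length at most $c$, we conclude $W\subseteq I_c(\calR)$.

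The main obstacle to handle cleanly is the dimension bookkeeping in the contradiction step: justifying that a strict inclusion $W\not\subseteq C_j$ of affine spaces yields $\dim(W\cap C_j)<\dim W$ (including the empty-intersection case), and verifying the ceiling computation $\lceil(c^{k}+1)/c\rceil=c^{k-1}+1$ so that the pigeonholed component really holds enough of the $A_i$ to trigger the dimension-$k$ hypothesis. The remaining ingredients, namely that $\dim W=k$ and that an irreducible set inside a finite union of closed affine sets lies in one of them, are immediate from the hypothesis and from the properties of the affine Zariski topology established in Subsection~\ref{subsec:affine}.
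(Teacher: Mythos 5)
Your proof is correct and follows essentially the same route as the paper's: argue against a single Z-affine invariant $I$ of length at most $c$, use irreducibility to place each $A_i$ in one component, and derive a contradiction from the fact that a component holding at least $c^{k-1}+1$ of the $A_i$ would have to contain the full $k$-dimensional span $\affSpan{\cup_i A_i}$. The paper states this counting step as ``each component holds at most $c^{k-1}$ of the $A_i$, hence $I$ needs more than $c$ components,'' which is your pigeonhole read in the opposite direction; your write-up just makes the dimension bookkeeping more explicit.
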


Using this lemma, we derive an effective procedure
to simplify a Z-affine set $J=A_1\cup\cdots\cup A_{c^{\dWA}+1}$ by ``merging''
two components.
We denote by $\textsc{reduce}(J)$
the resulting set.
\begin{claim}
  \label{claim:reduce}
  Let $\mathcal R=(u,\mu,v)$ be a \WA of dimension $\dWA$, let $c\in \mathbb N$.
  Let $A_1,\ldots,A_{c^{\dWA}+1}\subseteq I_c(\calR)$ be affine spaces.
  One can find $1\leq i<j\leq c^\dWA+1$ such that $\affSpan{A_i\cup A_j}\subseteq I_c(\calR)$, in time $O(c^{p(d)})$, for some fixed polynomial $p$.
\end{claim}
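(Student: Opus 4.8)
The goal is to find a pair of components $A_i, A_j$ among $c^{\dWA}+1$ affine spaces such that their affine span stays inside $I_c(\calR)$, within polynomial-in-$c^d$ time. The plan is to apply Lemma~\ref{lem:subspaces} in a descending induction on the dimension $k$, searching for a large subfamily whose affine span has \emph{strictly} smaller dimension than a generic one.

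First I would observe that by hypothesis we have $c^{\dWA}+1$ affine spaces all contained in $I_c(\calR)$, and $I_c(\calR) \subseteq \mathbb{K}^{\dWA}$, so any affine span of a subfamily has dimension at most $\dWA$. The strategy is to look for the smallest $k$ for which there exists a subfamily $P$ of size $\geq c^{k-1}+1$ whose affine span has dimension exactly $k$, while no subfamily of size $\geq c^{k}+1$ forces dimension $>k$ to collapse. Concretely, I would run the following test for increasing $k$ from $1$: check whether the $c^{\dWA}+1 \geq c^{k}+1$ spaces satisfy the premise of Lemma~\ref{lem:subspaces}, namely that every subfamily $P$ with $|P|\geq c^{k-1}+1$ has $\affSpan{\cup_{i\in P}A_i}$ of dimension exactly $k$. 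If this premise holds for some $k$, the lemma immediately yields $\affSpan{\cup_i A_i}\subseteq I_c(\calR)$, which in particular contains $\affSpan{A_i\cup A_j}$ for any two indices, so we are done.

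The more delicate case is when the uniform dimension premise fails. Then there is some subfamily of size $\geq c^{k-1}+1$ whose affine span has dimension strictly less than the dimension of a generic large subfamily. I would use a counting/pigeonhole argument: since among $c^{\dWA}+1$ spaces one cannot have all pairwise (and higher) spans behave ``generically'' at every scale, there must exist a level at which two components, or a small subcluster, have an affine span of smaller-than-expected dimension; by irreducibility of the components (they are affine subspaces) and the stability guaranteed by Lemma~\ref{lem:subspaces} applied at the appropriate level, this collapsed span lies in $I_c(\calR)$. Unwinding to the base case $k=1$: if two of the spaces already coincide or lie on a common lower-dimensional affine space that is forced into $I_c(\calR)$, we directly extract the pair $i<j$ with $\affSpan{A_i\cup A_j}\subseteq I_c(\calR)$.

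For the complexity, each dimensionality test reduces to computing ranks of matrices built from generating sets of the $A_i$ (bounded by $\dWA$-dimensional data each), and the number of subfamilies to inspect is controlled because Lemma~\ref{lem:subspaces} only requires checking spans of subfamilies of a fixed size threshold $c^{k-1}+1$; rather than enumerating all such subfamilies, I would compute the span of the whole family and of families obtained by dropping one component at a time, using that dimension is monotone. All linear-algebra operations are polynomial in the ambient dimension $\dWA$ and in the number $c^{\dWA}+1$ of spaces, giving a bound of the form $O(c^{p(\dWA)})$ for a fixed polynomial $p$. The main obstacle I anticipate is organizing the search so that the descending induction on $k$ provably terminates with an explicit pair $(i,j)$ rather than merely a large collapsed family: one must argue, via the pigeonhole bound on the number of distinct ``dimension signatures'' achievable among $c^{\dWA}+1$ components in dimension $\dWA$, that some level $k$ necessarily triggers either the premise of Lemma~\ref{lem:subspaces} or an explicit collapse, and that both outcomes exhibit the required pair.
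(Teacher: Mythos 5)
Your high-level plan --- try to satisfy the premise of Lemma~\ref{lem:subspaces} at some level, and when it fails recurse on a large subfamily whose affine span has collapsed dimension --- is the same skeleton as the paper's argument, which organises it as an induction on $k$: if $c^k+1$ affine spaces contained in $I_c(\calR)$ span an affine space of dimension $\leq k$, then either no subfamily of size $c^{k-1}+1$ spans dimension $\leq k-1$ (so the premise of Lemma~\ref{lem:subspaces} holds and the whole span, hence any pair, lies in $I_c(\calR)$), or some such subfamily does, and one applies the induction hypothesis to it, bottoming out at $k=0$ where the two remaining spaces coincide. Your write-up leaves this descent as an acknowledged ``obstacle'' rather than closing it, but that part is repairable; the genuine gap is algorithmic.

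The step you have not solved is how to \emph{find}, within time $O(c^{p(\dWA)})$, a subfamily of size $c^{k-1}+1$ among $c^{k}+1$ spaces whose affine span has dimension $\leq k-1$, or certify that none exists; naively this requires inspecting an astronomically large number of subfamilies. Your proposed shortcut --- compute the span of the whole family and of the families obtained by dropping one component at a time, using monotonicity of dimension --- fails: if $c^{k-1}+1$ of the components lie on a common line while the remaining ones sit in general position, removing any single component leaves the total dimension unchanged, so the greedy test detects nothing even though a collapsed subfamily exists. The paper's fix is to observe that any span $A_Q=\affSpan{\cup_{i\in Q}A_i}$ is already generated by a \emph{minimal} subset of cardinality at most $\dWA+1$ (each element of a minimal generating set must strictly increase the dimension); hence there are at most $(c^{\dWA}+1)^{\dWA+1}$ candidate spans, and for each one can count in time $O(c^{\dWA+1})$ how many of the $A_i$ it contains. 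Enumerating these minimal sets for proper subspaces of the total span and checking whether one of them is compatible with at least $c^{k-1}+1$ components produces the required subfamily, or its nonexistence, within the stated bound. Without this enumeration idea (or an equivalent one), the complexity claim $O(c^{p(\dWA)})$ is not established.
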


\begin{theorem}
  \label{thm:cpxAlgoDet}
  Algorithm~\ref{algo:compute} is correct and terminates in time $O(c^{p(d)})$.
\end{theorem}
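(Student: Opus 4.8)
The plan is to prove both correctness and the running time bound for Algorithm~\ref{algo:compute} by analyzing the loop invariant and bounding the number of iterations. For correctness, I would maintain throughout the execution the invariant that $J \subseteq I_c(\calR)$, that is, every component of $J$ is contained in $I_c(\calR)$. This holds initially since $J = \{u\}$ and $u$ belongs to every invariant (in particular every Z-affine invariant of length at most $c$), hence $u \in I_c(\calR)$. For the inductive step, there are two ways $J$ is modified. When we add $A \cdot M$ (line 4), since $A \subseteq I_c(\calR)$ and $I_c(\calR)$ is itself a Z-affine invariant (being an intersection of invariants, and Z-affine sets are closed under the relevant operations), we have $A \cdot M \subseteq I_c(\calR)$; here I would invoke that $I_c(\calR)$ is stable under the transition matrices. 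When \textsc{reduce} is called (line 6), Claim~\ref{claim:reduce} guarantees that merging the chosen components $A_i, A_j$ into $\affSpan{A_i \cup A_j}$ stays inside $I_c(\calR)$, so the invariant is preserved. Upon termination, $J$ is an invariant (the loop guard fails) that is stronger than $I_c(\calR)$ by the maintained inclusion, and conversely $J \subseteq I_c(\calR)$, so in fact $J$ equals $I_c(\calR)$ and is therefore stronger than any Z-affine invariant of length at most $c$, as required.

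For termination and the time bound, the key is to bound the number of loop iterations and the cost of each. First I would observe that after each \textsc{reduce} the length of $J$ stays at most $c^{\dWA}$, so $J$ always has at most $c^{\dWA}+1$ components at the point the length check is triggered, and \textsc{reduce} brings it back down; thus the length of $J$ is bounded by $c^{\dWA}$ throughout (except transiently). The dimension of the affine span of all components is at most $\dWA$, and I would set up a termination measure that strictly decreases. The natural candidate is a lexicographic measure combining the total dimension accumulated across components and the number of components at each dimension level: each execution of line 4 strictly increases the ``information'' in $J$ (since $A \cdot M \not\subseteq J$ means we genuinely add a new affine piece or increase a dimension), while \textsc{reduce} only merges components inside $I_c(\calR)$ without losing the inclusion. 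Since the ambient dimension is $\dWA$ and lengths are capped by $c^{\dWA}$, the total number of distinct ``states'' of $J$ that can occur is polynomially bounded in $c^{\dWA}$, giving at most $O(c^{p'(\dWA)})$ iterations for some polynomial $p'$.

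Each iteration costs $O(c^{p(\dWA)})$: the dominant cost is the invariance check on line 2 and the \textsc{reduce} call, the latter bounded by Claim~\ref{claim:reduce} at $O(c^{p(\dWA)})$, while the invariance check amounts to testing, for each of the at most $c^{\dWA}$ components and each of the finitely many transition matrices, whether $A \cdot M \subseteq J$, which reduces to basic linear-algebraic containment tests over affine spaces represented by spanning sets of size at most $\dWA$. Multiplying the iteration count by the per-iteration cost yields a total running time of $O(c^{p(\dWA)})$ after adjusting the polynomial $p$ to absorb $p'$ and the per-step linear algebra.

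The main obstacle I anticipate is the termination argument, specifically finding a measure that is genuinely monotone across both the extension step (line 4) and the merge step (\textsc{reduce}). The subtlety is that \textsc{reduce} reduces the length of $J$, which naively looks like it could allow line 4 to re-add previously merged information and cause cycling. The resolution is that \textsc{reduce} replaces two components by their affine span, which lies in $I_c(\calR)$ and has dimension strictly larger than at least one of the merged components; so after a merge the ``dimension profile'' of $J$ has strictly increased in the lexicographic sense, and since $J$ always remains inside the fixed Z-affine set $I_c(\calR)$ (whose length and dimension are bounded), the process cannot loop forever. Making this measure precise — and verifying that line 4, which adds a component not already contained in $J$, also strictly advances it — is the crux of the proof; once established, the polynomial-in-$\dWA$ exponent in the time bound follows directly from the cardinality constraints $\mathrm{length}(J) \leq c^{\dWA}$ and $\dim \leq \dWA$ together with Claim~\ref{claim:reduce}.
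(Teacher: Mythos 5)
Your proposal is correct and follows essentially the same route as the paper: the same loop invariant $J\subseteq I_c(\calR)$, preserved at line 4 because $I_c(\calR)$ is itself invariant under the transition matrices and at the merge step via Claim~\ref{claim:reduce}, and the same termination argument based on $J$ growing monotonically as a set under the caps $\mathrm{length}(J)\leq c^{\dWA}+1$ and $\dim(J)\leq \dWA$. The only blemish is your closing assertion that $J$ actually \emph{equals} $I_c(\calR)$: this does not follow (the returned $J$ may have length larger than $c$, so it need not be one of the invariants intersected in the definition of $I_c(\calR)$, and the reverse inclusion $I_c(\calR)\subseteq J$ is unjustified), but the theorem only requires $J\subseteq I_c(\calR)$, which you do establish.
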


\begin{proof}
  Let us first discuss termination.
  Because of line $5$-$7$, the length of $J$ is at most $c^d+1$.
  Moreover $J$ is an increasing Z-affine set, thus its value can be modified at most $(d+1)\cdot (c^d+1)$ times,
  thus from Claim~\ref{claim:reduce} the algorithm terminates in time $O(c^{p(d)})$.

  We now discuss correctness.
  We need to show that $J$ is stronger than $I_c(\calR)$.
  Initially, this holds.
  Moreover, if $A\subseteq I_c(\calR)$ is an affine set, then for any $M\in \mu(\Sigma$)$, A\cdot M\subseteq I_c(\calR)$, since $I_c(\calR)$ is invariant.
  Thus, line $4$ preserves the property that $J$ is stronger than $I_c(\calR)$.
  Using Claim~\ref{claim:reduce}, the \textsc{Reduce} subroutine also preserves this property,
  since it only merges components whose affine span is contained in $I_c(\calR)$.
\end{proof}

\subsection{Complexity of the register minimization problem}
\label{subsec:reg}

In order to compute the strongest Z-linear and Z-affine invariants of a \WA
using Algorithms~\ref{algo:compute}, it is sufficient to be able to bound their lengths.
The following result gives such bounds.

\begin{theorem}\label{thm:lengths}
Let $\mathcal{R} = (u,\mu,v)$ be a $\dWA$-dimensional \WA on a finite alphabet $\Sigma$.
We have the following upper bounds :
\begin{itemize}
  \item The lengths of $\linHull{\mathcal{R}}$ and $\affHull{\mathcal{R}}$ are
  at most doubly-exponential in $\dWA$.
  \item If $\genMono{\mu(\Sigma)}$ is commutative (\emph{e.g.} $\Sigma$ is unary), then the length of $\linHull{\mathcal{R}}$
  is at most exponential in $\dWA$.
\end{itemize}
  We also have the following lower bound (which also hold for \WA over a unary alphabet):
  \begin{itemize}
    \item For all $\dWA > 0$, there exist a $\dWA$-dimensional \WA having
    strongest Z-linear and Z-affine invariants with lengths exponential in $\dWA$.
  \end{itemize}
\end{theorem}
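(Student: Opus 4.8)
The plan is to prove the three bounds of Theorem~\ref{thm:lengths} separately, as the techniques differ substantially between upper and lower bounds, and between the general and commutative cases.

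\textbf{Doubly-exponential upper bound (general case).}
First I would invoke Algorithm~\ref{algo:compute} together with Theorem~\ref{thm:cpxAlgoDet}: when run with a parameter $c$ large enough that $\linHull{\mathcal{R}}$ (\resp $\affHull{\mathcal{R}}$) has length at most $c$, the algorithm returns an invariant stronger than $I_c(\calR)$, which equals the strongest invariant. Since the algorithm maintains the invariant $J$ with length bounded by $c^\dWA+1$ by construction (the \textsc{reduce} subroutine is called whenever the length exceeds $c^\dWA$), the output length is at most $c^\dWA+1$. Thus it suffices to find a value of $c$ that bounds the length of the strongest invariant and is itself at most singly-exponential in $\dWA$; then $c^\dWA+1$ is doubly-exponential, as claimed. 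The natural candidate is to take $c$ itself to be (roughly) the final length bound, yielding a fixed-point style argument: I would show that if the strongest invariant had length exceeding a singly-exponential threshold $c$, then Lemma~\ref{lem:subspaces} would force some components to merge inside every invariant of length $\le c$, contradicting minimality. The cleanest route is to argue that the strongest invariant is itself of length $\le c^\dWA+1$ for a suitable singly-exponential $c$, so one solves the self-referential bound and reads off the doubly-exponential estimate directly from the algorithm's termination analysis.

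\textbf{Exponential upper bound (commutative case).}
When $\genMono{\mu(\Sigma)}$ is commutative, the reachability set $\lReachSet{\mathcal{R}} = u\mu(\Sigma^*)$ is generated by commuting matrices, so the vector subspaces appearing as irreducible components of $\linHull{\mathcal{R}}$ can be organised according to simultaneous block structure (generalized eigenspaces / common invariant subspaces of a commuting family). The plan is to exploit that commuting matrices are simultaneously triangularizable over the algebraic closure, so that the reachable vectors decompose along a common flag; the distinct ``directions'' that the orbit can explore are controlled by the monomial exponents, and the number of irreducible components is bounded by counting the number of distinct limiting directions, which is singly-exponential in $\dWA$. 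Concretely I would show that each irreducible component of the closure corresponds to a face in a suitable combinatorial structure (a set of admissible supports on the $\dWA$ coordinates), whose count is at most $2^{O(\dWA)}$ or $\dWA!$-like, hence exponential. The unary alphabet is the special case of a single matrix, where this reduces to analysing the Jordan form of $\mu(a)$.

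\textbf{Exponential lower bound.}
For the matching lower bound I would construct, for each $\dWA>0$, an explicit unary \WA whose strongest Z-linear (and Z-affine) invariant has exponentially many components. The idea is to mimic Example~\ref{ex:linHullWA}, where a matrix with a root-of-unity-like rotation forced the orbit to split into coordinate axes, but scaled up so that the orbit cycles through exponentially many distinct one-dimensional (or low-dimensional) directions before any two become linearly dependent. A concrete choice is a diagonal matrix whose entries are multiplicatively independent (or whose ratios have large multiplicative order), combined with an initial vector having all coordinates nonzero: then the Zariski-linear closure of $\{u\mu(a)^n\}$ decomposes into one component per residue class, and by choosing the periods to be the first $\dWA$ primes (whose product is $2^{\Theta(\dWA)}$ by the prime number theorem) one obtains exponentially many components. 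I would then verify, as in Example~\ref{ex:linHullWA}, that each residue class is dense in its span and that distinct classes give distinct components, so the length is exactly this exponential quantity.

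\textbf{Main obstacle.}
I expect the hardest part to be the doubly-exponential upper bound, specifically justifying that a singly-exponential value of $c$ genuinely suffices to capture the strongest invariant via $I_c(\calR)$. The subtlety is the self-reference: the length bound we are trying to prove is the very quantity $c$ must dominate. Resolving this requires a careful monotonicity or fixed-point argument showing that once $c$ exceeds a computable singly-exponential threshold, no invariant of length $\le c$ can be strictly stronger than the true strongest invariant, so that $I_c(\calR)=\linHull{\mathcal R}$ and the algorithm's output length $c^\dWA+1$ is a legitimate bound. The commutative upper bound's obstacle is more technical than conceptual (making the ``count of directions'' argument rigorous over the algebraic closure and descending back to $\mathbb{K}$), and the lower bound is essentially an explicit construction once the density argument is in place.
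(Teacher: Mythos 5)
Your proposal has a genuine gap in the first (and hardest) item, and it is precisely the one you flag as the "main obstacle": the argument via Algorithm~\ref{algo:compute} is circular and cannot be closed the way you suggest. The algorithm run with parameter $c$ only returns the strongest invariant when $c$ already dominates the length of $\linHull{\mathcal{R}}$; if it does not, the output $J$ satisfies $\linHull{\mathcal{R}}\subseteq J\subseteq I_c(\calR)$, and bounding the length of the \emph{superset} $J$ tells you nothing about the length of $\linHull{\mathcal{R}}$ (a Z-linear set with many components can sit inside one with few). Your suggested contradiction --- that a too-long strongest invariant would have components forcibly merged "inside every invariant of length $\le c$" --- does not apply, because Lemma~\ref{lem:subspaces} constrains $I_c(\calR)$, not the closure of the reachability set, and the strongest invariant is not among the invariants being intersected in that scenario. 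The paper does not prove this item algorithmically at all: it imports the doubly-exponential bound from Bell and Smertnig, whose proof is algebro-geometric and ultimately rests on bounding the order of finite groups of invertible matrices over $\mathbb{K}$ (a property of $\mathbb{Q}$ and its finite extensions), then observes the argument adapts to $\affHull{\mathcal{R}}$. Some external input of this kind seems unavoidable; the termination analysis of the algorithm cannot bootstrap it.

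The other two items are closer but still not right as stated. For the commutative case, the paper's bound comes from a concrete number-theoretic fact (Lemma~\ref{lem:powN}): the components of $\linClosure{\genMono{A}}$ for invertible $A$ are the $N$ translates $A^i\linSpan{\genMono{A^N}}$, where $N$ is the lcm of all $n$ with $\phi(n)\leq[\mathbb{K}:\mathbb{Q}]\dWA$; the non-invertible and multi-matrix cases are reduced to this via $\mathbb{K}^\dWA=\im{M^\dWA}\oplus\ker{M^\dWA}$ and the unique component containing the identity. Your "count of admissible supports under simultaneous triangularization" does not obviously produce this bound, since the number of components is governed by multiplicative orders of eigenvalue ratios that are roots of unity, not by a combinatorial support structure. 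For the lower bound, your combinatorial idea (periods given by the first primes, length their product) matches the paper's, but the realization by a \emph{diagonal} matrix fails over $\mathbb{Q}$ or any fixed finite extension: such fields contain only finitely many roots of unity, so diagonal entries cannot have large finite multiplicative order, and multiplicatively independent entries make the closure irreducible (length $1$). The paper instead uses block-diagonal $p\times p$ cyclic permutation matrices, paying dimension $p$ per prime $p$, which is what makes the construction work over $\mathbb{Q}$.
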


\begin{proof}[Proof sketch]
The first item is shown in~\cite{BS23}, where the authors sketch a proof of a double-exponential upper bound
on the length of the \LH of a \WA, using tools from algebraic geometry,
which holds for $\mathbb{Q}$ in particular and for any field $\mathbb{K}$
where there is a double-exponential bound on the maximal order of finite groups of invertible matrices
(see~\cite[Proposition 48 and Remark 41]{BS23}). Their proof can be adapted
to $\affHull{\mathcal{R}}$.
The proof of the second item relies on basic linear algebra
and on results and ideas from~\cite{BS23} for invertible matrices
(see~\cite[Lemma 13 and Theorem 10]{BS23}).
Last, the lower bound is shown using a family of \WA
$(\mathcal{R}_i)_{i\in \mathbb{N}}$ whose dimension is polynomial in $i$
and \LH has a length that is exponential in $i$.
It is defined, using permutation matrices
of dimension $p$, for some prime number $p$, which generate 
cyclic groups.
The family is obtained by
using block matrices composed of such permutation matrices.
All the details are given in Appendix~\ref{app:length}.
\end{proof}

Thanks to this theorem, using Algorithm~\ref{algo:compute} with a large enough $c$
(at most doubly-exponential in the dimension of the given \WA),
and thanks to Theorem~\ref{thm:cpxAlgoDet}, we can prove the following result:
\begin{theorem}
  \label{thm:cpxLH}
  The \LAH of a \WA is computable in 2-\exptime.
\end{theorem}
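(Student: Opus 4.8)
The goal is to prove Theorem~\ref{thm:cpxLH}, namely that the \LAH of a \WA is computable in 2-\exptime. The plan is to combine the two main ingredients already established: the correctness and complexity of Algorithm~\ref{algo:compute} (Theorem~\ref{thm:cpxAlgoDet}) and the length bounds on the strongest invariants (Theorem~\ref{thm:lengths}). The key observation is that Algorithm~\ref{algo:compute}, run with parameter $c$, returns a Z-affine invariant $J$ stronger than $I_c(\calR)$, the intersection of all Z-affine invariants of length at most $c$. To turn this into a computation of the \emph{strongest} invariant, I must choose $c$ large enough that the strongest invariant itself has length at most $c$.

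First I would invoke Theorem~\ref{thm:lengths}: the lengths of $\linHull{\mathcal{R}}$ and $\affHull{\mathcal{R}}$ are at most doubly-exponential in $\dWA$. Fix $c$ to be this doubly-exponential bound, so that $c = 2^{2^{q(\dWA)}}$ for some fixed polynomial $q$. With this choice, the strongest Z-affine invariant $\affHull{\mathcal{R}}$ is among the invariants of length at most $c$, hence $I_c(\calR) \subseteq \affHull{\mathcal{R}}$. But $\affHull{\mathcal{R}}$ is by definition the strongest Z-affine invariant, so it is contained in \emph{every} Z-affine invariant, and in particular $\affHull{\mathcal{R}} \subseteq I_c(\calR)$; thus $I_c(\calR) = \affHull{\mathcal{R}}$. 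Since Algorithm~\ref{algo:compute} returns an invariant $J$ stronger than $I_c(\calR)$, and $J$ is itself a Z-affine invariant (hence $\affHull{\mathcal{R}}\subseteq J$), we get $J = \affHull{\mathcal{R}}$. The identical argument works for the linear case, using the linear variant of the algorithm and the linear length bound.

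It then remains to bound the running time. By Theorem~\ref{thm:cpxAlgoDet}, Algorithm~\ref{algo:compute} terminates in time $O(c^{p(\dWA)})$ for a fixed polynomial $p$. Substituting $c = 2^{2^{q(\dWA)}}$ gives a running time of $O\!\left(2^{2^{q(\dWA)} \cdot p(\dWA)}\right)$, which is still doubly-exponential in $\dWA$, i.e.\ 2-\exptime. One subtlety to check is that the bit-complexity of the field elements (entries of the vectors $u\mu(w)$) stays controlled: the matrices manipulated have entries of size polynomial in the input and in the lengths of the words involved, and since all words generated have length at most doubly-exponential, the numbers stay within a doubly-exponential bound on their bit-size, so each elementary field operation remains compatible with the overall 2-\exptime budget (cf.\ Remark~\ref{rmk:cpxOnField}).

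The main obstacle I anticipate is not the high-level argument, which is essentially a substitution, but justifying that the parameter $c$ can be made \emph{explicit and uniform}: Theorem~\ref{thm:lengths} only asserts the \emph{existence} of a doubly-exponential bound, and the algorithm needs to be fed a concrete value of $c$. I would address this by noting that the proof in~\cite{BS23} yields an effective doubly-exponential function of $\dWA$ (derived from the bound on the order of finite groups of invertible matrices over $\mathbb{K}$), so one can compute a suitable $c$ in advance from $\dWA$ alone. With $c$ fixed, the equality $I_c(\calR) = \affHull{\mathcal{R}}$ and the time bound follow directly, completing the proof.
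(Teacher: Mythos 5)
Your proposal is correct and follows essentially the same route as the paper: fix $c$ to the doubly-exponential length bound from Theorem~\ref{thm:lengths}, observe that then $I_c(\calR)$ coincides with the strongest invariant so Algorithm~\ref{algo:compute} returns exactly $\affHull{\mathcal{R}}$ (resp.\ $\linHull{\mathcal{R}}$), and conclude via the $O(c^{p(\dWA)})$ bound of Theorem~\ref{thm:cpxAlgoDet}. Your added remarks on the effectiveness of the constant $c$ and on bit-complexity are sensible refinements of what the paper leaves implicit, but do not change the argument.
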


This allows us to prove Theorem~\ref{thm:reg-min}.
Indeed, given a \WA $\calR$,
we first compute an equivalent minimal \WA, which can be done in polynomial time
(see \eg~\cite[Corollary 4.17]{Sakarovitch09}).
  Then, using Algorithm~\ref{algo:compute}, we compute the 
  \LrespAH of $\mathcal{R}$.
  Corollary~\ref{cor:minRegLH} (\resp Corollary~\ref{cor:minRegLHAff})
ensures that its dimension is the register complexity
  of $f$ w.r.t.\ the class of linear (\resp affine) \CRA, and
  the effectiveness follows from Proposition~\ref{prop:linRepToCRA}
(\resp its affine version).

Moreover, thanks to Theorem~\ref{thm:cpxLH} and the results of~\cite{BellS21},
we also have:
\begin{theorem}
  The sequentiality and unambiguity of a rational series are in 2-\exptime.
\end{theorem}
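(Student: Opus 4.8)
The plan is to reduce both decision problems to a single computation, namely that of the \LH of a minimal \WA, and then to read off a structural property of the resulting Z-linear set. The whole point is that the heavy lifting has already been done: Theorem~\ref{thm:cpxLH} provides a \texptime algorithm to compute the \LH, and \cite{BellS21} reduces both sequentiality and unambiguity to inspecting it.

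First, given an arbitrary \WA realizing $f$, I would compute a minimal equivalent \WA $\mathcal{R}$. As already used in the proof of Theorem~\ref{thm:reg-min}, this can be done in polynomial time (see \eg~\cite[Corollary 4.17 (Chapter III)]{Sakarovitch09}). By Remark~\ref{rmk:similarRep} the \LH of $\mathcal{R}$ is canonical up to a change of basis that preserves its length and dimension, so the characterizations of~\cite{BellS21} apply to any such minimal $\mathcal{R}$ and the answer does not depend on the chosen representative. Second, I would invoke Theorem~\ref{thm:cpxLH} to compute $\linHull{\mathcal{R}}$ in \texptime, obtaining it as an explicit finite union of vector subspaces. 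For sequentiality, Theorem~\ref{thm:seqWA} states that $f$ is realizable by a sequential \WA iff $\dim(\linHull{\mathcal{R}})\leq 1$, and the dimension can be read directly off the computed union-of-subspaces representation. For unambiguity, I would apply the analogous characterization of~\cite{BellS21}, which likewise expresses realizability by an unambiguous \WA as a structural property of $\linHull{\mathcal{R}}$ that is tested directly on the same representation.

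Putting the three costs together: minimization is polynomial, the computation of the \LH is in \texptime (Theorem~\ref{thm:cpxLH}), and the final structural test is polynomial in the size of the computed invariant. The point one must check is that this last test does not blow up: by Theorem~\ref{thm:lengths} the representation of $\linHull{\mathcal{R}}$ may be doubly-exponentially large, so it is important that the relevant property (the dimension, and the structural criterion for unambiguity) be read off the representation produced by Algorithm~\ref{algo:compute} in time polynomial in that representation's size, rather than requiring a fresh expensive computation. Since a polynomial-time post-processing of a \texptime-computable object stays within \texptime, the overall procedure runs in \texptime, which is the claimed bound. I expect this verification — that the criterion for unambiguity from~\cite{BellS21} is indeed a cheap syntactic check on the computed invariant — to be the only nontrivial point, the rest being a routine composition of the already established bounds.
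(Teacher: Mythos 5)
Your proposal matches the paper's (very terse) argument exactly: the paper derives this theorem in one line from Theorem~\ref{thm:cpxLH} together with the characterizations of sequentiality and unambiguity via the \LH from~\cite{BellS21}, which is precisely the pipeline you describe (minimize in polynomial time, compute the hull in \texptime, read off the structural criterion). Your additional remark that the final test must be polynomial in the size of the computed representation is a reasonable point of care, but it does not change the route.
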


Note that the complexities of the last two theorems drop down to \exptime
when we  have a simply exponential bound
on the length of the strongest invariant.
This is the case when one considers
unary alphabets or \WA with commuting transition matrices in the linear setting,
as stated in Theorem~\ref{thm:lengths}.
In these cases, the bound is sharp.
It is still not clear however whether it is possible to close the gap between
the bounds in the general case.

\begin{remark}\label{rmk:cpxOnField}
  It is also worth noting that, while the characterizations that we obtained
  are valid for any field, the complexities of the algorithms are given in terms of number of
  elementary operations over the considered field.
  Which means that they hold for fields where we can perform basic operations in
  polynomial time (such as $\mathbb{Q}$ or its finite extensions).
  Moreover, the general upper bounds on the lengths given by Theorem~\ref{thm:lengths}
  were proven only for fields verifying a specific property (which is verified by $\mathbb{Q}$).
  See the proof for more details.
\end{remark}

\subsection{State/register tradeoff}
Reducing the number of registers may increase the number of states and vice-versa.
The following theorem summarizes what we know on this tradeoff.

\begin{theorem}\label{thm:tradeoff}
Let $f$ be a rational series realized by some $d$-dimensional \WA $\calR$.
Consider some pair of integers $(n,k)$ optimal for $f$ w.r.t.\ the class of linear \CRA.
The inequalities
$1\leq n \leq \mathrm{length}(\linHull{\mathcal{R}}) = O(2^{2^d})$
and $\dim(\linHull{\mathcal{R}}) \leq k \leq d$ hold true.

(They are valid in the affine setting as well)
\end{theorem}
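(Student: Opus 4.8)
The plan is to establish the four inequalities separately, invoking the characterization results of Section~\ref{sec:characterization} together with the two optimality conditions. Throughout I take $\calR$ to be a minimal \WA realizing $f$ of dimension $d$ (for a non-minimal input one first minimizes); by Remark~\ref{rmk:similarRep} the length $\ell \coloneqq \mathrm{length}(\linHull{\mathcal{R}})$ and dimension $\kappa \coloneqq \dim(\linHull{\mathcal{R}})$ of the strongest Z-linear invariant do not depend on the chosen minimal representative. The doubly-exponential estimate $\ell = O(2^{2^d})$ is exactly the first item of Theorem~\ref{thm:lengths}, and $n \geq 1$ is immediate since every \CRA has at least its initial state. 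It remains to prove $n \leq \ell$, $\kappa \leq k$, and $k \leq d$.

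For the bound $k \leq d$, I would use Proposition~\ref{prop:linRepEquivCRA1Stt} to translate $\calR$ into a single-state linear \CRA with $d$ registers realizing $f$. Since $n \geq 1$, this \CRA has at most $n$ states, so the first optimality condition (no \CRA with at most $n$ states uses strictly fewer than $k$ registers) forces $d \geq k$. For the bound $\kappa \leq k$, I would invoke Corollary~\ref{cor:minRegLH}: the quantity $\kappa$ is precisely the register complexity of $f$, that is, the least number of registers over all linear \CRA realizing $f$; since the optimal \CRA realizes $f$ with $k$ registers, $k \geq \kappa$.

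The remaining bound $n \leq \ell$ is where the ordering matters, as it relies on the previously obtained $\kappa \leq k$. Feeding the strongest Z-linear invariant $\linHull{\mathcal{R}}$ (of length $\ell$ and dimension $\kappa$) into the construction of Proposition~\ref{prop:linRepToCRA} yields a linear \CRA realizing $f$ with $\ell$ states and $\kappa$ registers. Because $\kappa \leq k$, this \CRA has at most $k$ registers, so the second (\emph{vice-versa}) optimality condition---no \CRA with at most $k$ registers has strictly fewer than $n$ states---forces $\ell \geq n$, as desired.

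Finally, the affine case is handled by the same four arguments, replacing $\linHull{\mathcal{R}}$ by $\affHull{\mathcal{R}}$, Corollary~\ref{cor:minRegLH} by Corollary~\ref{cor:minRegLHAff}, and Proposition~\ref{prop:linRepToCRA} by its affine counterpart; the bound $k \leq d$ is unchanged since a single-state linear \CRA is in particular affine, and the affine length bound is again $O(2^{2^d})$ by Theorem~\ref{thm:lengths}. I expect no genuine obstacle here: the content is entirely carried by the earlier propositions, and the only care needed is to match each inequality with the correct one of the two optimality conditions and to respect the dependency of the step $n \leq \ell$ on the step $\kappa \leq k$.
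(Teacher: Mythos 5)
Your proof is correct and follows essentially the same route as the paper's: both arguments rest on the two witness constructions, namely the single-state linear \CRA with $d$ registers from Proposition~\ref{prop:linRepEquivCRA1Stt} and the \CRA with $\mathrm{length}(\linHull{\mathcal{R}})$ states and $\dim(\linHull{\mathcal{R}})$ registers obtained from the strongest invariant via Proposition~\ref{prop:linRepToCRA}, combined with Corollary~\ref{cor:minRegLH} and the length bound of Theorem~\ref{thm:lengths}. Your version is merely more explicit than the paper's sketch about which of the two optimality conditions yields each inequality and about the dependency of $n\leq\ell$ on $\kappa\leq k$, which is a welcome clarification rather than a different method.
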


\begin{remark}\label{rk:length}
Building the \CRA from the strongest invariant is not always optimal.
There are
some cases where it is possible to reduce
the number of states of a \CRA exponentially, while keeping the minimal number of registers,
by choosing an invariant that is weaker than the strongest Z-linear/Z-affine invariant but shorter.
\end{remark}

\section{Conclusion}

We have shown how to decide variants of \CRA minimization problems,
and have given complexity for the respective algorithms.
There are several ways in which these algorithms could be improved.
First, it would be worth reducing the gap between the lower and
the upper bounds on the length of the \LH.
Second, identifying a canonical invariant associated with the state-register
minimization problem would allow to derive a deterministic
algorithm for this problem.
Third, one could hope for better complexity if one only considers
the existence of equivalent CRA.
For instance, in~\cite{JeckerMP23} the authors give a \pspace algorithm
for the determinization problem (\ie $1$-register minimization problem)
in the case of a polynomially ambiguous automaton, via a quite different approach.

Another line of research consists in trying to use the techniques we developed
to solve the register minimization problem for other classes of \CRA,
for instance copyless \CRA
(which correspond to multi-sequential \WA).
Another ambitious goal is to consider register minimization in the context
of different semirings, but there all the linear algebra tools which are crucial
to solving these problems completely break down.
Similarly, it seems that register minimization for polynomial automata would be very difficult:
it was shown recently in~\cite{HrushovskiOPW23} that the strongest algebraic
invariant of a polynomial automaton is not computable.
One possibility may be to bound the ``degree'' of the invariants, where Z-affine sets
would correspond to algebraic sets of degree one.

\bibliography{biblio}

\clearpage
\appendix

\section{Complements to Section~\ref{sec:characterization}}

\subsection{Examples of Subsection~\ref{subsec:Zariski}}

\noindent
\textbf{Complement to Remark~\ref{rk:lin_not_alg}}

Consider a \WA
over a one-letter alphabet $\{a\}$ with $u = (1, 1)$ and $\mu(a) =
\begin{pmatrix}
  2 & 0 \\
  0 & 4
\end{pmatrix} $. One can verify that the strongest algebraic invariant is described by a polynomial
of degree $2$ (the second coordinate is the square of the first one),
hence is not linear.

\noindent
\textbf{Complement to Example~\ref{ex:sumPow2}}

A graphical representation of the reachability set of the WA described in
Example~\ref{ex:sumPow2} is depicted in Figure~\ref{fig:reachSet}.

All the reachable vectors are on the affine line $y = x+1$ so the affine hull is one dimensional
but, since the line does not cross the origin, the linear hull is two dimensional.


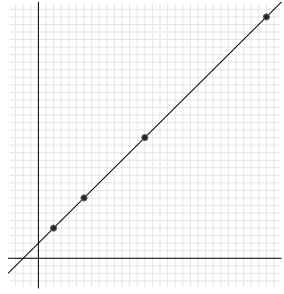
\begin{figure}[h!]
  \centering
  \scalebox{.2}{
    \begin{tikzpicture}
      \draw[step=0.5cm, gray!20,very thin] (-1.9,-1.9) grid (15.9,16.9);
      \draw [domain=-2:16, very thick] plot (\x, \x+1);
      \draw [domain=-2:16, thick] plot (\x, 0);
      \draw [domain=-2:17, thick] plot (0, \x);
      \filldraw[color=black!60, fill=black!85, very thick](1,2) circle (0.2);
      \filldraw[color=black!60, fill=black!85, very thick](3,4) circle (0.2);
      \filldraw[color=black!60, fill=black!85, very thick](7,8) circle (0.2);
      \filldraw[color=black!60, fill=black!85, very thick](15,16) circle (0.2);
    \end{tikzpicture}
  }
  \caption{Reachability set of the $\WA$ of Example~\ref{ex:sumPow2}.}
  \label{fig:reachSet}
\end{figure}

\subsection{Proofs of the building blocks of Theorem~\ref{thm:mainThm}} \label{apx:proofChar}
We will prove Lemma~\ref{lem:leftRightMin} (wich yields Proposition~\ref{prop:dimLHMin})
and Propositions~\ref{prop:linRepToCRA} and~\ref{prop:CRATolinRep} in the affine setting
(\ie using Z-affine sets and affine \CRA) as it is more general,
and all the results for the linear setting follow directly from the constructions of this section.

To simplify dealing with affine \CRA, we make the same observations we made in Section~\ref{sec:prelim}
about linear expressions (bringing back the underline notation to avoid confusion):
for a finite set of variables $\varSet=\left\{X_1, \dots, X_\rCRA \right\}$ that we can assume to be ordered,
we identify any affine expression
$e = \sum_{i=1}^{\rCRA} \alpha_i X_i + \beta$
with the affine form $\expToMap{e} \colon \mathbb{K}^\rCRA \to \mathbb{K}$
defined by $\expToMap{e}(u) = u \linPart{\expToMap{e}} + \affPart{\expToMap{e}}$
with $\linPart{\expToMap{e}} = (\alpha_1, \dots, \alpha_\rCRA)^t$ and $\affPart{\expToMap{e}} = \beta$.
We can then identify any affine substitution $s \colon \varSet \to \affExpr{\varSet}$
with the affine map $\expToMap{s} \colon \mathbb{K}^\rCRA \to \mathbb{K}^\rCRA$
defined by $\expToMap{s}(u) = u \linPart{\expToMap{s}} + \affPart{\expToMap{s}}$
with $\linPart{\expToMap{s}} = \linPart{(\expToMap{s(X_1)}} | \cdots| \linPart{\expToMap{s(X_\rCRA)}})$
and $\affPart{\expToMap{s}} = (\affPart{\expToMap{s(X_1)}}, \cdots, \affPart{\expToMap{s(X_\rCRA)}})$,
and we can identify any valuation $v \colon \varSet \to \mathbb{K}$ with the point
$\expToMap{v}=(v(X_1), \cdots, v(X_\rCRA))$ of the affine space $\mathbb{K}^\rCRA$.

We now drop the underline notation and observe that the registers of a linear (\resp affine)
\CRA and their updates can be characterized by the values of the vector (\resp point)
associated with $\noUnderline{v_0}$, and the linear (\resp affine) maps associated with the
$\noUnderline{\delta_{\mathcal{X}}(q,a)}$ and $\noUnderline{\outFctCRA(q)}$,
for all $q \in Q$ and $a \in \Sigma$, and we can check that
\[
  \llbracket \mathcal{A} \rrbracket (w) = \noUnderline{\outFctCRA(\delta_Q(q_0,w))}
  \left(\noUnderline{\delta_{\varSet}(q_0,w)} \left(\noUnderline{v_0}\,\right)\right)
\]

\subsubsection{Proof of Lemma~\ref{lem:leftRightMin}}

In the following, for all $\dWA \in \mathbb{N}$,
let $\canonBasis{\dWA} = \left\{ e_1, \dots, e_\dWA \right\}$
denote the canonical basis of $\mathbb{K}^\dWA$.
For two bases $B$ and $B'$ of the same vector space,
let $\chgBaseMatr{B}{B'}$ denote the change of basis matrix from $B$ to $B'$
(whose lines are the coordinates of the vectors of $B'$ in the basis $B$).
And finally, for two integers $i$ and $j$, let $I_i$ denote the identity matrix of size $i$
and let $\resizIdMatr{i}{j}$ denote the $i$ by $j$ matrix $(I_i \ |\ 0)$ if $i \leq j$
and $\resizIdMatr{j}{i}^t$ otherwise.

\begin{claim}
  \label{clm:chgBase}
  If $V$ is an $r$-dimensional vector subspace of $\mathbb{K}^\dWA$ and
  $B$ is a basis of $\mathbb{K}^\dWA$ whose first $r$ vectors form a basis of $V$,
  then, for all $v \in V$, since the $\dWA-r$ last entries of the vector
  $v \chgBaseMatr{\canonBasis{\dWA}}{B}$
  are all zeros, we note that
  $v \chgBaseMatr{\canonBasis{\dWA}}{B} \resizIdMatr{\dWA}{r} \resizIdMatr{r}{\dWA}
  \chgBaseMatr{B}{\canonBasis{\dWA}} = v$.
\end{claim}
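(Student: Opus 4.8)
The plan is to simply unwind the definitions of the two change-of-basis matrices and of the rectangular identity matrices, and to observe that the whole product $\chgBaseMatr{\canonBasis{\dWA}}{B}\,\resizIdMatr{\dWA}{r}\,\resizIdMatr{r}{\dWA}\,\chgBaseMatr{B}{\canonBasis{\dWA}}$ acts as the identity on $V$: it amounts to \emph{pass to the basis $B$, keep only the first $r$ coordinates, then come back}. There is no deep content here; the whole statement is essentially the observation already spelled out in the claim, and the only thing requiring care is the row-vector convention together with the exact direction of each change of basis.

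First I would record the shape of the composite projection. Since $r\leq \dWA$, by definition $\resizIdMatr{r}{\dWA}=(I_r\mid 0)$ is the $r\times\dWA$ matrix, and $\resizIdMatr{\dWA}{r}=\resizIdMatr{r}{\dWA}^t$ is its $\dWA\times r$ transpose, so their product is the $\dWA\times\dWA$ matrix $\resizIdMatr{\dWA}{r}\resizIdMatr{r}{\dWA}=\left(\begin{smallmatrix} I_r & 0\\ 0 & 0\end{smallmatrix}\right)$. Acting on a row vector (matrix on the right), this keeps its first $r$ entries and zeros out the last $\dWA-r$; in particular it leaves unchanged any row vector whose last $\dWA-r$ entries already vanish.

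Next I would justify the key vanishing, which is exactly the assertion in the claim. Right-multiplying $v$ by $\chgBaseMatr{\canonBasis{\dWA}}{B}$ produces the coordinate row of $v$ in the basis $B$; since the first $r$ vectors of $B$ form a basis of $V$ and $v\in V$, this coordinate row is supported on its first $r$ positions, i.e.\ the last $\dWA-r$ entries of $v\,\chgBaseMatr{\canonBasis{\dWA}}{B}$ are zero. Combined with the previous step, the factor $\resizIdMatr{\dWA}{r}\resizIdMatr{r}{\dWA}$ therefore fixes $v\,\chgBaseMatr{\canonBasis{\dWA}}{B}$.

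Finally I would multiply back: as $\chgBaseMatr{B}{\canonBasis{\dWA}}$ is the inverse of $\chgBaseMatr{\canonBasis{\dWA}}{B}$ (the two change-of-basis matrices between $B$ and the canonical basis are mutually inverse), right-multiplying this fixed coordinate row by $\chgBaseMatr{B}{\canonBasis{\dWA}}$ returns to canonical coordinates and recovers $v$, giving $v\,\chgBaseMatr{\canonBasis{\dWA}}{B}\,\resizIdMatr{\dWA}{r}\,\resizIdMatr{r}{\dWA}\,\chgBaseMatr{B}{\canonBasis{\dWA}}=v$. The only genuine difficulty is purely bookkeeping: keeping straight which of $\chgBaseMatr{\canonBasis{\dWA}}{B}$ and $\chgBaseMatr{B}{\canonBasis{\dWA}}$ sends canonical coordinates to $B$-coordinates and which does the reverse, so that the first factor indeed produces the $B$-coordinate row carrying the trailing zeros.
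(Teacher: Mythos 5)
Your proof is correct and matches the paper's treatment: the paper states this claim without a separate proof, its justification (that the last $\dWA-r$ entries of $v\,\chgBaseMatr{\canonBasis{\dWA}}{B}$ vanish and that the two change-of-basis matrices are mutually inverse) being exactly the inline observation you unwind. Your explicit identification of $\resizIdMatr{\dWA}{r}\resizIdMatr{r}{\dWA}$ with the block-diagonal projector fixing the first $r$ coordinates is precisely the bookkeeping the paper leaves implicit.
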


\begin{proof}[Proof of Lemma \ref{lem:leftRightMin}]
  Let us first prove the lemma for the left reachability set.
  Let $\mathcal{R} = (u,\mu,v)$, let $\dWA$ be the dimension of $\mathcal{R}$ and
  let $B$ be a basis of $\vectSet{\mathbb{K}}{\dWA}$ obtained by completing a basis
  of $\linSpan{\lReachSet{\mathcal{R}}}$ with arbitrary vectors.

  We define $\mathcal{R}'$ as $(u',\mu',v')$ where, for all $a \in \Sigma$,
  \begin{gather*}
    u' = u \chgBaseMatr{\canonBasis{\dWA}}{B} \resizIdMatr{\dWA}{r} \quad
    v' = \resizIdMatr{r}{\dWA} \chgBaseMatr{B}{\canonBasis{\dWA}} v \quad
    \mu'(a) = \resizIdMatr{r}{\dWA} \chgBaseMatr{B}{\canonBasis{\dWA}} \mu(a)
    \chgBaseMatr{\canonBasis{\dWA}}{B} \resizIdMatr{\dWA}{r}
  \end{gather*}

  We can show by induction, thanks to Claim~\ref{clm:chgBase}, that
  $\lReachSet{\mathcal{R}'} = \lReachSet{\mathcal{R}}\chgBaseMatr{\canonBasis{\dWA}}{B} \resizIdMatr{\dWA}{
    r}$,
  and then $\llbracket \mathcal{R}' \rrbracket = \llbracket \mathcal{R} \rrbracket$.

  It remains to show that, if $\mathcal{R}$ has a Z-affine invariant $S_{\mathcal{R}}$
  of length $\lInv$ and dimension $\dInv$, then $\mathcal{R}'$ has a Z-affine invariant
  $S_{\mathcal{R}'}$ of length $\leq \lInv$ and dimension $\leq \dInv$.
  Note that we can assume that $S_{\mathcal{R}} \subseteq \linSpan{\lReachSet{\mathcal{R}}}$,
  since $S_{\mathcal{R}} \cap \linSpan{\lReachSet{\mathcal{R}}}$ is also a Z-affine
  invariant of $\mathcal{R}$, with a length $\leq \lInv$ and a dimension $\leq \dInv$.

  Let us take $S_{\mathcal{R}'} = S_{\mathcal{R}} \chgBaseMatr{\canonBasis{\dWA}}{B} \resizIdMatr{\dWA}{r}$
  and show that it has the right properties.
  First, it is clear that $S_{\mathcal{R}'}$ is a Z-affine invariant of $\mathcal{R}'$,
  since $u \in S_{\mathcal{R}}$ then $u' = u \chgBaseMatr{\canonBasis{\dWA}}{B} \resizIdMatr{\dWA}{r} \in
  S_{\mathcal{R}'}$
  and, for all $w' \in S_{\mathcal{R}}'$ and $a \in \Sigma$,
  there exists a $w \in S_{\mathcal{R}}$ such that
  $w' = w \chgBaseMatr{\canonBasis{\dWA}}{B} \resizIdMatr{\dWA}{r}$,
  thus, since $S_{\mathcal{R}} \subseteq \linSpan{\lReachSet{\mathcal{R}}}$
  and thanks to Claim~\ref{clm:chgBase}:
  \begin{align*}
    w' \mu'(a) & = w \chgBaseMatr{\canonBasis{\dWA}}{B} \resizIdMatr{\dWA}{r}
    \resizIdMatr{r}{\dWA} \chgBaseMatr{B}{\canonBasis{\dWA}} \mu(a)  \chgBaseMatr{\canonBasis{\dWA}}{B}
    \resizIdMatr{\dWA}{r} \\
    & = w \mu(a) \chgBaseMatr{\canonBasis{\dWA}}{B} \resizIdMatr{\dWA}{r} \in S_{\mathcal{R}}'
  \end{align*}
  and the image of a Z-affine set by a linear map is still a Z-affine set.
  Moreover, the irreducible components of $S_{\mathcal{R}'}$ are images of
  the irreducible components of $S_{\mathcal{R}}$ by a linear map.
  Thus $\dim(S_{\mathcal{R}'}) \leq \dim(S_{\mathcal{R}})$ and
  $\textup{\textsf{length}}(S_{\mathcal{R}'}) \leq \textup{\textsf{length}}(S_{\mathcal{R}})$.

  The case of the right reachability set is proven similarly,
  using a basis of $\linSpan{\rReachSet{\mathcal{R}}}$.
  Let $C$ be a basis of $\vectSet{\mathbb{K}}{\dWA}$ obtained by completing a basis
  of $\linSpan{\rReachSet{\mathcal{R}}}^t$ with arbitrary vectors.

  We define $\mathcal{R}''$ as $(u'',\mu'',v'')$ where, for all $a \in \Sigma$,
  \begin{gather*}
    u'' = u \chgBaseMatr{C}{\canonBasis{\dWA}}^t \resizIdMatr{\dWA}{r} \quad
    v'' = \resizIdMatr{r}{\dWA} \chgBaseMatr{\canonBasis{\dWA}}{C}^t v \quad
    \mu''(a) = \resizIdMatr{r}{\dWA} \chgBaseMatr{\canonBasis{\dWA}}{C}^t \mu(a)
    \chgBaseMatr{C}{\canonBasis{\dWA}}^t \resizIdMatr{\dWA}{r}
  \end{gather*}

  We can show by induction, thanks to Claim~\ref{clm:chgBase}, that
  $\rReachSet{\mathcal{R}'} = \resizIdMatr{r}{\dWA} \chgBaseMatr{\canonBasis{\dWA}}{C}^t
  \rReachSet{\mathcal{R}}$,
  and then $\llbracket \mathcal{R}' \rrbracket = \llbracket \mathcal{R} \rrbracket$.

  Like in the previous case, let $S_{\mathcal{R}}$ be a Z-affine invariant of $\mathcal{R}$,
  of length $\lInv$ and dimension $\dInv$ and let
  $S_{\mathcal{R}''} = S_{\mathcal{R}} \chgBaseMatr{C}{\canonBasis{\dWA}}^t \resizIdMatr{\dWA}{r}$.
  Obviously, $u'' \in S_{\mathcal{R}''}$ and,
  if $w'' = w \chgBaseMatr{C}{\canonBasis{\dWA}}^t \resizIdMatr{\dWA}{r} \in S_{\mathcal{R}''}$,
  then we can show that, for all $a \in \Sigma$, $w'' \mu''(a) \in S_{\mathcal{R}''}$
  by observing that $\chgBaseMatr{\canonBasis{\dWA}}{C}^t \mu(a)
  \chgBaseMatr{C}{\canonBasis{\dWA}}^t \resizIdMatr{\dWA}{r}$
  is a $\dWA$ by $r$ matrix whose last $\dWA-r$ rows are all null.
  Thus $ \resizIdMatr{\dWA}{r} \resizIdMatr{r}{\dWA} \chgBaseMatr{\canonBasis{\dWA}}{C}^t \mu(a)
  \chgBaseMatr{C}{\canonBasis{\dWA}}^t \resizIdMatr{\dWA}{r}
  = \chgBaseMatr{\canonBasis{\dWA}}{C}^t \mu(a) \chgBaseMatr{C}{\canonBasis{\dWA}}^t \resizIdMatr{\dWA}{r}$.


  Note that the construction only uses linear maps and, thus, still holds for Z-linear invariants.
\end{proof}

\subsubsection{Proof of Proposition~\ref{prop:linRepToCRA}}

\begin{proof}
  Let $\mathcal{R} = (u,\mu,v)$, let $\dWA$ be the dimension of $\mathcal{R}$
  and let $W_1, \dots, W_\lInv$ be the irreducible components of a Z-affine invariant of $\mathcal{R}$
  of length $\lInv$ and dimension $\dInv$.
  We assume, without loss of generality, that $u \in W_1$.

  For all $i \in \intInterv{1}{\lInv}$, let $W_i = p_i + V_i$ with $p_i \in \mathbb{K}^\dWA$
  and $V_i$ a vector subspace of $\mathbb{K}^\dWA$ and let $B_i$ be a basis of $\mathbb{K}^\dWA$
  obtained by completing a basis of $V_i$ with arbitrary vectors.

  We define $\mathcal{A}$ as $(Q, q_0, \varSet, v_0, \outFctCRA, \delta)$ where:
  \begin{itemize}
    \item $Q = \intInterv{1}{\sCRA}$ and $q_0 = 1$.
    \item $\varSet = \left\{ X_1, \dots, X_\rCRA \right\}$ and
    $\noUnderline{v_0} = (u-p_{q_0}) \chgBaseMatr{\canonBasis{\dWA}}{B_{q_0}} \resizIdMatr{\dWA}{\rCRA}$
    \item for all $q \in Q$, $x \in \mathbb{K}^\rCRA$,
    $\noUnderline{\outFctCRA(q)}(x) = \left( p_q + x
    \resizIdMatr{\rCRA}{\dWA} \chgBaseMatr{B_q}{E_\dWA} \right)v$.
    \item for all $q \in Q$ and $a \in \Sigma$, let $q' \in
    \left\{ p \in \llbracket 1,\sCRA \rrbracket \,\middle|\, W_q \mu(a) \subseteq W_p \right\}$
    (chosen arbitrarily).
    $\delta(q,a)$ will be defined by $\delta_Q(q,a) = q'$ and, for all $x \in \mathbb{K}^\rCRA$,
    \[\noUnderline{\delta_{\mathcal{X}}(q,a)}(x) =
    \left( \left( p_q + x \resizIdMatr{\rCRA}{\dWA}\chgBaseMatr{B_q}{E_\dWA} \right) \mu(a) - p_{q'}\right)
    \chgBaseMatr{E_\dWA}{B_{q'}} \resizIdMatr{\dWA}{\rCRA}\]
  \end{itemize}

  We can show by induction, using Claim~\ref{clm:chgBase}, that, for all $w \in \Sigma^*$,
  \[\noUnderline{\delta_{\mathcal{X}}(q_0,w)} \left( \noUnderline{v_0} \right) =
  \left( u \mu(w) - p_{\delta_Q(q_0,w)} \right)
  \chgBaseMatr{E_\dWA}{B_{\delta_Q(q_0,w)}} \resizIdMatr{\dWA}{\rCRA}\]

  Thus, for all $w \in \Sigma^*$,
  \[\llbracket \mathcal{A} \rrbracket(w) = \noUnderline{\outFctCRA(\delta_Q(q_0,w))}
  \left(\noUnderline{\delta_{\varSet}(q_0,w)} \left(\noUnderline{v_0}\,\right)\right) \allowbreak
  = \llbracket \mathcal{R} \rrbracket (w)\]

  The same construction can be applied to obtain the linear version of the proposition
  (it is the case where $p_i = 0$ for all $i \in \intInterv{1}{n}$).
\end{proof}

\subsubsection{Proof of Proposition~\ref{prop:CRATolinRep}}

\begin{proof}
  Let's assume, without loss of generality, that
  $\mathcal{A} = (Q, q_0, \allowbreak \varSet,  v_0, \outFctCRA, \delta)$
  is an affine \CRA where $Q = \intInterv{1}{\sCRA}$, $q_0 = 1$,
  $\mathcal{X} = \left\{ X_1, \dots, X_\rCRA \right\}$.

  We define $\mathcal{R}$ as $(u,\mu,v)$, where:
  \begin{itemize}
    \item $u = (\noUnderline{v_0}, 1 , 0, \dots, 0) \in \vectSet{\mathbb{K}}{\sCRA(\rCRA+1)}$
    \item $v = \left(\begin{array}{c}
                       \outFctCRA_1 \\
                       \vdots       \\
                       \outFctCRA_n
    \end{array}\right) \in \matrSet{\mathbb{K}}{\sCRA(\rCRA+1)}{1}$.
    where, for all $i \in \intInterv{1}{\sCRA}$,
    $\outFctCRA_i = \left(\begin{array}{c}
                            \linPart{\noUnderline{\outFctCRA(i)}} \\
                            \affPart{\noUnderline{\outFctCRA(i)}}
    \end{array}\right)$.
    \item for all $a \in \Sigma$, $\mu (a) =
    \left( \begin{array}{c|c|c}
             \delta_{1,1}(a)     & \dots  & \delta_{1,\sCRA}(a)     \\
             \hline
             \vdots              & \ddots & \vdots                  \\
             \hline
             \delta_{\sCRA,1}(a) & \dots  & \delta_{\sCRA,\sCRA}(a)
    \end{array} \right) $
    where, for all $i,j \in \intInterv{1}{\sCRA}$, $\delta_{i,j}(a) = \left(
    \begin{array}{c|c}
      \linPart{\noUnderline{\delta_{\mathcal{X}}(i,a)}} & 0 \\[5pt]
      \hline
      \affPart{\noUnderline{\delta_{\mathcal{X}}(i,a)}} & 1
    \end{array} \right)\in \sqmatrSet{\mathbb{K}}{\rCRA+1}$ if $\delta_{Q}(i,a) = j$ and 0 otherwise.
  \end{itemize}

  We can show by induction that the definition of the $\delta_{i,j}$
  extends to words and thus, for all $w \in \Sigma^*$,
  \begin{align*}
    \llbracket \mathcal{R} \rrbracket (w)
    &= \left( \noUnderline{v_0}\ \linPart{\noUnderline{\delta_{\mathcal{X}}(i,w)}}
    + \affPart{\noUnderline{\delta_{\mathcal{X}}(i,w)}} \right)
    \linPart{\noUnderline{\outFctCRA(\delta_Q(q_0,w))}}
    + \affPart{\noUnderline{\outFctCRA(\delta_Q(q_0,w))}}\\
    &= \noUnderline{\outFctCRA(\delta_Q(q_0,w))}
    \left(\noUnderline{\delta_{\varSet}(q_0,w)} \left(\noUnderline{v_0}\,\right)\right)\\
    \llbracket \mathcal{R} \rrbracket (w) &= \llbracket \mathcal{A} \rrbracket (w)
  \end{align*}

  Let $S = \bigcup_{i=1}^{\sCRA} \psi_i(\mathbb{K}^\rCRA)$ where, for all $i \in \intInterv{1}{\sCRA}$,
  $\psi_i : \mathbb{K}^\rCRA \to \mathbb{K}^{\sCRA(\rCRA+1)}$
  maps every vector $v \in \mathbb{K}^\rCRA$ to the vector of $\mathbb{K}^{\sCRA(\rCRA+1)}$
  that has $(v,1)$ as its $i$-th ``block'' of size $\rCRA+1$ and zeros everywhere else.

  We show that $S$ is the desired Z-affine invariant of $\mathcal{R}$ :
  First, $S$ is an invariant of $\mathcal{R}$, since
  $u = \psi_1(\noUnderline{v_0}) \in S$ and
  $\psi_i(x) \mu(a) = \psi_{\delta_Q (i,a)}(\delta_{\varSet}(i,a)(x)) \in S$,
  for all $\psi_i(x) \in S$ and $a \in \Sigma$.
  Then, since the $\psi_i$ are affine maps, $S$ is a Z-affine set,
  whose irreducible components are the $\psi_i(\mathbb{K}^\rCRA)$, which have a dimension
  $\dInv$ as required.

  Observe that the proposition is also proved in the linear setting,
  since applying this construction to a linear \CRA yields a \WA with the desired Z-linear invariant.
\end{proof}

\section{Complements to Section~\ref{sec:algos}}

\subsection{Proof of Lemma~\ref{lem:rep}}

\begin{proof}
  Let $\mathcal R=(u,\mu,v)$ be a \WA over alphabet $\Sigma$.
  Let $I$ be a Z-affine invariant of $\mathcal R$ of length $\lInv$ and dimension $\dInv$, with irreducible
  components $A_1,\ldots, A_\lInv$.

  We will define sets $S_1,\ldots, S_\lInv$ such that, for all $i\in \set{1,\cdots,\lInv}$,
  $\affSpan{\set{u\mu(w)|\ w\in S_i}}\subseteq A_i$, representing a Z-affine set $J$.
  Given a word $w$, we will write $u_w=u\mu(w)$.
  Since $I$ is an invariant, there is an index $i$ such that $u_\epsilon=u\in A_i$.
  Thus, the set $S_i$ is initialized as $\set{\epsilon}$ and all other sets are initially empty.

  If $J$ is not an invariant, this means there exist $i\in \set{1,\ldots, \lInv}$,  $w\in S_i$ and
  $a\in\Sigma$ such that $u_{wa}\notin J$.
  However, we know that since $I$ is an invariant, it must at least contain the reachability set of
  $\mathcal R$ and thus $u_{wa}\in I$.
  Hence, there exists $j$ such that $u_{wa}\in A_j$.
  Thus, we update $S_j\coloneqq S_j\cup \set{w}$.
  Note that we still have that $\set{u_s|\ s\in S_j}\subseteq A_j$ thus the invariant\footnote{We heard you
  liked invariants, so we put invariants in your invariants.} $J\subseteq I$ is preserved.

  A first observation is that each time we add a new vector to the representation of $J$, the dimension of
  one of the components increases by $1$, hence this can happen at most $(\dInv+1)\lInv-1$ times (assuming the
  dimension of the empty set is $-1$). Similarly, each time this happens, the maximum length of a word in
  the representation increases by at most $1$.

  Thus, after at most $(\dInv+1)\lInv-1$ steps, $J$ is an invariant included in $I$, of length $\leq n$
  , with a representation
  of size
  at most $O(\dInv^2 \lInv^2)$.
\end{proof}

\subsection{Proof of Lemma~\ref{lem:subspaces}}

\begin{proof}
  Let $\mathcal R=(u,\mu,v)$ be a \WA, let $c,k\in \mathbb N$ and
  let $A_1,\ldots, \allowbreak A_{c^{k}+1}\subseteq I_c(\calR)$ be spaces satisfying the assumptions.

  Assume by contradiction that
  $\affSpan{\cup_{i\in \intInterv{1}{c^{k}+1}}A_i}\not\subseteq I_c(\calR)$.
  Then, by definition, there must exist an invariant $I$ of length $\leq c$
  such that $\affSpan{\cup_{i\in \intInterv{1}{c^{k}+1}}A_i}\not\subseteq I$.
  This means that any component of $I$ can contain at most $c^{k-1}$ $A_i$s
  (or else it would contain a $k$-dimensional component included in
  $\affSpan{\cup_{i\in \intInterv{1}{c^{k}+1}}A_i}$, which is $k$-dimensional as well).
  As a consequence, $I$ must have at least $c+1$ components, which yields a contradiction.
\end{proof}

\subsection{Proof of Claim~\ref{claim:reduce}}

\begin{proof}
  Let $\mathcal R=(u,\mu,v)$ be a \WA, of dimension $\dWA$, let $c\in \mathbb N$.
  We show by induction on $k$ that if $A_1,\ldots,A_{c^{k}+1}\subseteq I_c(\calR)$ generates a subspace of
  dimension $\leq k$ then there exist $1\leq i<j\leq c^k+1$ such that
  $\affSpan{A_i\cup A_j}\subseteq I_c(\calR)$, and these indices can be found in time $O(c^{p(d)})$.

  For $k=0$, if two subspaces generate an affine subspace of dimension $0$, it means that $A_1=A_2$, which
  are singletons.
  Thus $\affSpan{A_1\cup A_2}=A_1\subseteq I_c(\calR)$.
  Assume the result holds for some $k$, let us show it holds for $k+1$.
  Let $A_1,\ldots,A_{c^{k+1}+1}\subseteq I_c(\calR)$ be subspaces generating a space
  of dimension $\leq k+1$.
  We want to find, if it exists, a set $P\subseteq \intInterv{1}{c^{k+1}+1}$ with $|P|= c^{k}+1$, such that
  $\affSpan{\cup_{i\in P}A_i}$ has
  dimension $\leq k$.
  If it exists then we use the induction assumption.
  Otherwise, Lemma~\ref{lem:subspaces} yields the result.
  We have to show that we can find such a set $P$, if it exists, in the given time constraint.
  Let $Q\subseteq \intInterv{1}{c^{k+1}+1}$, we denote by $A_Q=\affSpan{\cup_{i\in Q}A_i}$ the affine space
  associated with $Q$.
  We say that $Q$ is \emph{minimal} if it is minimal, inclusion-wise, among the sets associated with $A_Q$.
  Note that a minimal set has cardinality at most $d$.
  Thus there are at most $(c^d +1)^{d+1}\leq c^{d^2+2d+1}$ such sets.
  We say that $i\leq c^{k+1}+1$ is \emph{compatible} with $Q$ if $A_i\subseteq A_Q$.
  Given a set $Q$ one can obtain the set of indices compatible with $Q$ in time $c^{k+1}$.
  To find a set $P$ as above, one simply has to enumerate the minimal sets for strict subsets of
  $A_{\intInterv{1}{c^{k+1}+1}}$,
  and check if one of them has a set of compatible indices of size $\geq c^k +1$.
\end{proof}

\subsection{Proof of Theorem~\ref{thm:lengths}}
\label{app:length}

The first item of the statement has already been explained in the core of the paper.

We consider the proof of the simply exponential upper bound in the case where the transition matrices
of the given \WA commute (and in particular, for any \WA on a unary alphabet).
We will show it, using results and ideas from~\cite{BS23} and some basic linear algebra:

\begin{theorem}
  \label{thm:casCommut}
  If $\mathcal{R} = (u,\mu,v)$ is a $\dWA$-dimensional \WA on a finite alphabet $\Sigma$
  such that $\genMono{\mu(\Sigma)}$ is commutative, then the length of $\linHull{\mathcal{R}}$
  is, at most, exponential in $\dWA$.

  Thus, in this case, both the computation of the \LH and
  the resolution of the register minimization problem for linear \CRA can be done in \exptime.
\end{theorem}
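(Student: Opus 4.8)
The plan is to reduce the commutative case to the analysis of a single orbit closure under a commutative family of matrices in multiplicative Jordan form, and then to separate the contribution of the semisimple part (which carries all the length) from that of the unipotent part (which contributes nothing). First I would apply the multiplicative Jordan decomposition to each generator: since the matrices $\mu(a)$ commute, their semisimple and unipotent parts all commute with one another, so $\genMono{\mu(\Sigma)}$ factors as the product of a commutative monoid of semisimple matrices and one of unipotent matrices. To deal with possible non-invertibility, I would first split $\mathbb{K}^{\dWA}$ via a simultaneous Fitting decomposition into a part on which the $\mu(a)$ act nilpotently and a part on which they act invertibly; the nilpotent part only produces a transient prefix of the orbit, contributing at most $O(\dWA)$ singleton components, while the main term comes from the invertible part, where the Jordan decomposition applies.

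For the unipotent factor I would argue that it never increases the length. Indeed, for a commuting family of unipotent matrices the orbit of any vector stays in its own linear span, and since the full Zariski closure of such an orbit is irreducible (it is the image of a connected unipotent group), its linear-Zariski closure is a single vector subspace. Here I use the elementary fact that an irreducible set $X$ is contained in a finite union of subspaces only if it lies in one of them, so $\linClosure{X}=\linSpan{X}$. The same observation shows more generally that the orbit under any connected commutative algebraic group (a torus times a unipotent group) has a single subspace as its linear hull.

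The length therefore comes entirely from the \emph{torsion} of the semisimple part. After simultaneously diagonalizing the commuting semisimple matrices over $\overline{\mathbb K}$, the monoid acts through a finitely generated abelian group of eigenvalue characters. Writing $m$ for the order of its torsion (finite-order) subgroup, I would decompose the orbit according to the residue of the exponent modulo the torsion exponent; for each residue class the remaining action goes through a connected group, so by the previous paragraph its linear hull is a single subspace. Hence the length of $\linHull{\mathcal R}$ is bounded by $m$, up to the additive transient term. This torsion subgroup is precisely a finite abelian subgroup of $\mathrm{GL}_{\dWA}$ over a number field, and the estimates of \cite{BS23} for invertible matrices supply the needed bound.

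The main obstacle is the quantitative bound on $m$. The key point is that every ratio $\lambda_i/\lambda_j$ which is a root of unity, say of order $m_{ij}$, is an algebraic number of degree $\le (\dWA\,[\mathbb K:\mathbb Q])^2$ over $\mathbb Q$, so $\varphi(m_{ij}) \le D$ for some $D$ polynomial in $\dWA$; the torsion exponent is then the least common multiple of such $m_{ij}$ and the torsion order is at most its $\dWA$-th power. By the prime number theorem, the lcm of all integers with Euler totient at most $D$ is $2^{O(D)}$, so $m = 2^{\mathrm{poly}(\dWA)}$, singly exponential in $\dWA$ — and this is exactly where the argument diverges from the general non-commuting case, which only yields a doubly exponential bound; it also matches the exponential lower bound from the permutation-matrix family, so the estimate is tight. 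Finally, for the algorithmic consequence I would feed this length bound $c = 2^{\mathrm{poly}(\dWA)}$ into Algorithm~\ref{algo:compute}: by Theorem~\ref{thm:cpxAlgoDet} it then computes $\linHull{\mathcal R}$ in time $O(c^{p(\dWA)}) = 2^{\mathrm{poly}(\dWA)}$, and by Corollary~\ref{cor:minRegLH} its dimension solves the register minimization problem, both in \exptime.
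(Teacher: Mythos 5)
Your reduction to the invertible case is where the argument breaks. For $|\Sigma|\geq 2$ there is in general no ``simultaneous Fitting decomposition'' of $\mathbb{K}^{\dWA}$ into a part on which all the $\mu(a)$ act nilpotently and a part on which they all act invertibly: take $\mu(a)=\mathrm{diag}(1,0)$ and $\mu(b)=\mathrm{diag}(0,1)$, which commute but are each invertible exactly where the other is nilpotent. Correspondingly, the ``transient'' part of the orbit is not a finite prefix contributing $O(\dWA)$ singleton components: writing $M_a=\mu(a)$, the points $u\mu(w)$ that escape the invertible core $\bigcap_{a}\im{M_a^{\dWA}}$ are those coming from words in which \emph{some} letter occurs fewer than $\dWA$ times, an infinite family whose closure can itself have exponential length (e.g.\ two commuting block matrices $\mathrm{diag}(P,0)$ and $\mathrm{diag}(0,Q)$ with $P$ a permutation matrix of exponential order). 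The paper handles this with a recursive decomposition of $\genMono{M_1,\dots,M_n}$ as a union, over subsets $I$ of the alphabet, of terms in which the exponents of the letters in $I$ are capped at $\dWA-1$ and one recurses on the remaining letters; this multiplies the length bound by a still singly-exponential factor rather than adding $O(\dWA)$. As written, your accounting of the non-invertible part would fail; the fix is precisely the step you skipped.

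Your treatment of the invertible commutative case is sound and is essentially an unpacking of what the paper uses as a black box: the paper invokes Lemma~\ref{lem:powN} (from \cite{BS23}) for a single invertible matrix, combines it with the uniqueness of the irreducible component containing the identity (Lemma~\ref{lem:uniqueCompoId}), and observes that each generator permutes the components with period dividing $N$, yielding length at most $N^{|\Sigma|}$; you instead re-derive the structure via the multiplicative Jordan decomposition, kill the unipotent part by irreducibility of orbits of connected groups, and bound the component group of the closure of the eigenvalue group by orders of roots of unity of polynomially bounded degree. That route is legitimate and arguably more transparent, with one imprecision: the length is \emph{not} bounded by the order of the torsion subgroup of the finitely generated group of eigenvalue tuples (that group can be torsion-free, e.g.\ generated by $(-2,2)$, while its closure has two components); the relevant quantity is the exponent of the group of torsion \emph{character values}, which is what your subsequent paragraph on root-of-unity ratios $\lambda_i/\lambda_j$ of degree at most $(\dWA\,[\mathbb{K}:\mathbb{Q}])^2$ correctly controls. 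Your concluding algorithmic step, plugging $c=2^{\mathrm{poly}(\dWA)}$ into Algorithm~\ref{algo:compute} and invoking Theorem~\ref{thm:cpxAlgoDet} and Corollary~\ref{cor:minRegLH}, matches the paper.
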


The proof of this theorem relies on the two following lemmas:
The first provides a constant we will use to bound the length of the \LH
and characterizes it for minimal \WA on a unary alphabet
(see~\cite[Lemma 13 and Theorem 10]{BS23}):
\begin{lemma}
  \label{lem:powN}
  For all $\dWA \in \mathbb{N}$, there exists an integer $N \in \mathbb{N}$
  such that, for all invertible matrix $A \in \sqmatrSet{\mathbb{K}}{d}$,
  \begin{enumerate}
    \item for all vector subspace $V \subseteq \mathbb{K}^\dWA$,
    if $V A^n = V$ for some $n > 0$ then $V A^N = V$.
    \item \(\linClosure{\genMono{A}} = \bigcup_{i=0}^{N-1} A^i \linSpan{\genMono{A^N}}\).
  \end{enumerate}
\end{lemma}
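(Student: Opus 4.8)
The plan is to treat the two items separately, the first being the genuinely delicate one, and to deduce the second from it by an elementary closure computation. Throughout, I view the powers $A^i$ as points of the matrix space $\sqmatrSet{\mathbb{K}}{\dWA}$ (identified with $\mathbb{K}^{d^2}$), so that $\linClosure{\cdot}$ denotes the linear Zariski closure in that ambient space and each left multiplication $X\mapsto A^iX$ is an invertible linear map, hence continuous and closed.

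For item~1 the key is that the periods of periodic subspaces are governed by eigenvalue ratios that are roots of unity, and that the orders of such roots of unity are bounded in terms of $d$ alone. First I would reduce, via the Jordan decomposition $A=SU$ with $S$ semisimple and $U$ unipotent commuting, to the semisimple part: a subspace satisfies $VA^n=V$ exactly when it is $A^n$-invariant, and one checks the periods are dictated by $S$. Diagonalising $S$ over $\overline{\mathbb{K}}$ with eigenvalues $\lambda_1,\ldots,\lambda_d$, a subspace is $A^m$-invariant iff it respects the eigenspace decomposition of $A^m$, and two eigendirections get merged by $A^m$ precisely when $(\lambda_i/\lambda_j)^m=1$; hence the period of any periodic $V$ divides the least common multiple of the orders of those ratios $\lambda_i/\lambda_j$ that happen to be roots of unity. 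It therefore suffices to bound these orders uniformly, and this is where I would invoke the cyclotomic/Minkowski input of \cite[Lemma 13]{BS23}: if $\lambda_i/\lambda_j$ is a primitive $r$-th root of unity, then $\zeta_r\in\mathbb{K}(\lambda_i,\lambda_j)$, a field lying in the splitting field of the degree-$d$ characteristic polynomial of $A$; as this splitting field has degree bounded by a function of $d$ and of the fixed field $\mathbb{K}$, $\varphi(r)=[\mathbb{Q}(\zeta_r):\mathbb{Q}]$ is bounded, and since $\varphi(r)\to\infty$ only finitely many $r$ can occur. Taking $N$ to be the least common multiple of all these finitely many admissible orders yields a single $N=N(d)$ with $VA^N=V$ for every periodic $V$ and every invertible $A$.

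For item~2 I would start from the partition of the monoid by residues modulo $N$,
\[
  \genMono{A}=\set{A^k \mid k\geq 0}=\bigcup_{i=0}^{N-1} A^i\,\genMono{A^N}.
\]
Since $\linClosure{\cdot}$ commutes with finite unions and with the invertible linear maps $X\mapsto A^iX$, applying it gives
\[
  \linClosure{\genMono{A}}=\bigcup_{i=0}^{N-1} A^i\,\linClosure{\genMono{A^N}},
\]
so the whole statement reduces to the single identity $\linClosure{\genMono{A^N}}=\linSpan{\genMono{A^N}}$, i.e.\ that the powers of $B\coloneqq A^N$ are linearly Zariski dense in their span $U\coloneqq\linSpan{\set{B^j \mid j\geq 0}}$ (the algebra $\mathbb{K}[B]$). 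The point of raising $A$ to the power $N$ is that $B$ has no two distinct eigenvalues with a root-of-unity ratio: if $\beta_i/\beta_j=(\lambda_i/\lambda_j)^N$ were a root of unity then $\lambda_i/\lambda_j$ would be one of order dividing $N$ by item~1, forcing $\beta_i=\beta_j$.

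Finally, to establish the density I would argue by contradiction: a proper Z-linear subset of $U$ containing all $B^j$, being a finite union of proper subspaces, would force some hyperplane $\set{\ell=0}$ of $U$ to contain infinitely many $B^j$. The sequence $j\mapsto\ell(B^j)$ is a linear recurrence sequence whose characteristic roots are among the distinct eigenvalues $\beta_i$; by the Skolem--Mahler--Lech theorem its zero set is a finite set together with finitely many arithmetic progressions, and the progressions can only arise from root-of-unity ratios $\beta_i/\beta_j$. As these are absent for $B$, the zero set is finite, so $\ell(B^j)=0$ for infinitely many $j$ forces $\ell\equiv 0$ on $U$, a contradiction. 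Hence $\linClosure{\genMono{A^N}}=U$, completing the proof. I expect the uniform cyclotomic bound of the first item, together with the clean reduction to the semisimple part in the presence of Jordan blocks, to be the main obstacle; the remainder is bookkeeping with the closure operator and a standard linear-recurrence argument.
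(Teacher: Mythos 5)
The paper does not actually prove this lemma: it is imported wholesale from Bell and Smertnig (\cite[Lemma~13 and Theorem~10]{BS23}), with only a remark on the size of the resulting $N$. Your argument is therefore not competing with an in-paper proof but reconstructing the cited one, and the reconstruction is sound. For item~1, the reduction to the semisimple part is legitimate in characteristic zero (the unipotent factor is a polynomial in its $n$-th power via $\log/\exp$ of a nilpotent, so $U^n$-invariance already forces $U$-invariance); the identification of the periods of periodic subspaces with orders of root-of-unity ratios $\lambda_i/\lambda_j$ is correct because invariant subspaces of a semisimple operator are exactly those respecting its eigenspace decomposition, and the eigenspace partition of $A^N$ is coarser than that of $A^n$ once $N$ is the lcm of all admissible orders; the cyclotomic degree bound then makes the set of admissible orders finite. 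For item~2, the splitting modulo $N$, the commutation of $\linClosure{\cdot}$ with finite unions and with the invertible closed maps $X\mapsto A^iX$, and the Skolem--Mahler--Lech density argument for $\set{B^j}$ in $\linSpan{\genMono{B}}$ when $B=A^N$ has no root-of-unity eigenvalue ratios all check out (the case where $B$ has finite order collapses to $B=\beta I$ and is harmless, and the pigeonhole onto a single hyperplane needs the $B^j$ to be pairwise distinct, which holds exactly when $B$ has infinite order). The one place you lose something is quantitative: bounding $\varphi(r)$ by the degree of the splitting field gives $\varphi(r)\leq [\mathbb{K}:\mathbb{Q}]\,\dWA!$, whereas using $[\mathbb{K}(\lambda_i,\lambda_j):\mathbb{K}]\leq \dWA^2$ directly (or the bound $\varphi(r)\leq[\mathbb{K}:\mathbb{Q}]\dWA$ that the paper quotes from \cite{BS23}) is what keeps $N$ singly exponential in $\dWA$; this does not affect the lemma as stated, but the sharper estimate is what the \exptime claim of Theorem~\ref{thm:casCommut} relies on downstream.
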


Note that the value of $N$ depends only on $\mathbb{K}$ and $\dWA$ and is exponential in $\dWA$.

Indeed, in the proof of~\cite[Theorem 10]{BS23} (see also~\cite[Lemma 12]{BS23})
$N$ is taken to be the least common multiple of all $n \in \mathbb{N}$
such that $\phi(n) \leq [\mathbb{K}:\mathbb{Q}] \dWA$,
where $\phi$ is Euler's totient function and $[\mathbb{K}:\mathbb{Q}]$
is the degree of the field extension and a well-known property of $\phi$ is that,
for all $n \geq 1$, $\sqrt {\frac{n}{2}} \leq \phi(n) \leq n-1$.
Thus $N \leq \left( 2 [\mathbb{K}:\mathbb{Q}]^2 \dWA^2\right)!$

The second lemma is a direct consequence of~\cite[Lemma 9]{BS23}:
\begin{lemma}
  \label{lem:uniqueCompoId}
  If $S \subseteq \sqmatrSet{\mathbb{K}}{\dWA}$ is a monoid,
  then $\linClosure{S}$ has a unique irreducible component that contains the identity matrix $I_{\dWA}$.
\end{lemma}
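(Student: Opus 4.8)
Write $C=\linClosure{S}=V_1\cup\cdots\cup V_p$, where the $V_i$ are the irreducible components of $C$, that is, vector subspaces of $\sqmatrSet{\mathbb{K}}{\dWA}$. Since $S$ is a monoid, $I_\dWA\in S\subseteq C$, so $I_\dWA$ lies in at least one component. I would argue by contradiction: assuming that two distinct components $V_i,V_j$ both contain $I_\dWA$, I would show $V_i=V_j$, which yields the claimed uniqueness.

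\textbf{Step 1 (density in each component).} First I would record that $V_i=\linClosure{S\cap V_i}$ for every $i$. This follows from $C=\linClosure{\bigcup_i(S\cap V_i)}=\bigcup_i\linClosure{S\cap V_i}$ with each $\linClosure{S\cap V_i}\subseteq V_i$: since $V_i$ is irreducible and contained in this finite union of closed sets, it sits inside some $\linClosure{S\cap V_j}\subseteq V_j$, and incomparability of distinct components forces $j=i$, hence $\linClosure{S\cap V_i}=V_i$.

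\textbf{Step 2 (the product stays closed).} The heart of the argument exploits that left and right multiplications by a fixed matrix are linear maps, hence continuous and closed, and satisfy $\linClosure{f(X)}=f(\linClosure{X})$. For $B\in S\cap V_j$, the map $\rho_B\colon M\mapsto MB$ sends $S\cap V_i$ into $S\cdot S\subseteq S\subseteq C$, so $V_iB=\rho_B(\linClosure{S\cap V_i})=\linClosure{\rho_B(S\cap V_i)}\subseteq C$. Fixing $A\in V_i$ and applying the same reasoning to $\lambda_A\colon N\mapsto AN$ on $S\cap V_j$, I would obtain $AV_j\subseteq C$, and taking the union over $A\in V_i$ yields $V_iV_j\subseteq C$. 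Because $I_\dWA\in V_i\cap V_j$, this product contains both $V_i=V_i\{I_\dWA\}$ and $V_j=\{I_\dWA\}V_j$.

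\textbf{Step 3 (irreducibility and conclusion).} Finally I would show that $V_iV_j$ is irreducible in the linear Zariski topology, so that $\linClosure{V_iV_j}$ lies inside a single component $V_k$. The clean route is to compare topologies: every linear-Zariski-closed set is a finite union of subspaces, hence closed in the classical Zariski topology, so the linear topology is coarser and any set irreducible for the classical topology is a fortiori irreducible for the linear one. Now $V_iV_j$ is the set-theoretic image of the affine space $V_i\times V_j$ under the bilinear multiplication morphism; as $V_i\times V_j$ is (geometrically) irreducible, its image is irreducible classically, hence linearly. Therefore $\linClosure{V_iV_j}\subseteq C$ is a single subspace, contained in some component $V_k$; then $V_i,V_j\subseteq V_k$, and maximality of components forces $V_i=V_k=V_j$, the desired contradiction. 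The main obstacle is precisely this irreducibility: a product of two subspaces is generally not a subspace, so it is not automatically irreducible within the linear topology itself, and the comparison with the classical topology together with the geometric irreducibility of affine spaces over $\mathbb{Q}$ and its finite extensions (which are not algebraically closed) is what makes the step go through.
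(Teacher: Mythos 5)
Your proof is correct. Note that the paper itself does not prove this lemma: it is stated as a direct consequence of Lemma~9 of Bell and Smertnig's paper on computing the linear hull, so your argument is a self-contained reconstruction rather than a variant of an in-paper proof. What you have written is essentially the classical ``identity component'' argument for algebraic monoids, adapted to the linear Zariski topology, and all three steps check out. Step~1 (each component $V_i$ equals $\linClosure{S\cap V_i}$) is the standard density-in-components fact, correctly derived from irreducibility plus incomparability of distinct components. Step~2 uses exactly the property recorded in Section~\ref{subsec:Zariski} of the paper, namely $\linClosure{f(X)}=f\left(\linClosure{X}\right)$ for linear $f$, applied to left and right translations; the order of your two applications matters ($V_iB\subseteq \linClosure{S}$ for all $B\in S\cap V_j$ must come first, so that $\lambda_A(S\cap V_j)\subseteq\linClosure{S}$ is available before closing up on the right factor), and you have it in the right order. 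Step~3 correctly identifies the only delicate point: matrix multiplication is bilinear, not linear, so one cannot argue irreducibility of $V_iV_j$ inside the linear topology directly (a union $\bigcup_{A\in V_i}AV_j$ of subspaces through a common point need not be irreducible), and the detour through the classical Zariski topology is the right fix. The coarseness comparison goes in the correct direction (irreducible for a finer topology implies irreducible for a coarser one), and the irreducibility of the affine space $V_i\times V_j$ holds over any \emph{infinite} field, which covers the paper's setting of $\mathbb{Q}$ and its finite extensions; this is worth stating explicitly, since over a finite field both this step and the identification of irreducible Z-linear sets with subspaces would fail. Finally, the conclusion $V_i\cup V_j\subseteq V_iV_j\subseteq V_k$, using $I_{\dWA}\in V_i\cap V_j$ and maximality of components, is exactly how the uniqueness of the component through the identity is obtained for algebraic groups. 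In short: the paper buys the lemma by citation, while your proof supplies the underlying argument, using only facts the paper has already stated about the linear Zariski topology.
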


Last, we focus on the third statement of Theorem~\ref{thm:lengths}, showing that
the exponential bound is sharp. This is shown by the following example,
where we define a sequence of \WA $(\mathcal{R}_i)_{i\in \mathbb{N}}$
with a dimension that is polynomial in $i$ and a \LH with a length that is exponential in $i$.
\begin{example}
  For all $i \in \mathbb{N}$, let $\mathbb{P}_{\leq i}$ denote the set of prime numbers up to $i$,
  let $M_p =
  \left( \begin{array}{ccccc}
           0      & 1 & 0      & \dots & 0      \\
           0      & 0 & 1      & \dots & 0      \\
           \vdots &   & \ddots &       & \vdots \\
           0      & 0 & 0      & \dots & 1      \\
           1      & 0 & 0      & \dots & 0
  \end{array} \right) \in \sqmatrSet{\mathbb{K}}{p}$,
  and for all $p \in \mathbb{P}_{\leq i}$
  (it is the permutation matrix associated with the $p$-cycle $(1\, 2\, \dots\, p)$).

  Let $d = \sum_{p \in \mathbb{P}_{\leq i}} p$ and $\mathcal{R}_i = (u_i,\mu_i,v_i)$
  be a $d$-dimensional \WA, on the unary alphabet $\Sigma = \left\{ a \right\} $,
  where $u_i = v_i^t = (1\, 2\, \dots\, d)$ and $\mu_i(a) = \left( \begin{array}{c|c|c|c}
                                                                     M_2    & 0     & \dots  & 0
                                                                     \\
                                                                     \hline
                                                                     0      & M_3   & \dots  & 0
                                                                     \\
                                                                     \hline
                                                                     \vdots &       & \ddots & \vdots
                                                                     \\
                                                                     \hline
                                                                     0      & \dots & 0      & M_
                                                                       {\max(\mathbb{P}_{\leq i})}
  \end{array} \right)$ is a $d$-dimensional block diagonal matrix.

  Observe that the monoid generated by $\mu_i(a)$ is a cyclic group of order
  $n =\textup{\textsf{lcm}}(\mathbb{P}_{\leq i}) = \prod_{p \in \mathbb{P}_{\leq i}} p$.
  Thus, the \LH of $\mathcal{R}_i$ has a length of $n$, since it is the union of $n$ lines.

\end{example}

\subsection{Proof of Theorem~\ref{thm:casCommut}} \label{apx:casCommut}

First, observe that, for all matrices $M \in \sqmatrSet{\mathbb{K}}{\dWA}$,
$(\ker{M^i})_{i\in \mathbb{N}}$ is an increasing sequence of vector subspaces of $\mathbb{K}^\dWA$.
Thus, for all $i \in \mathbb{N}$, $\ker{M^\dWA} = \ker {M^{\dWA+i}}$ and
$\mathbb{K}^\dWA = \im{M^\dWA} \oplus \ker{M^\dWA}$.

Let $r$ be the rank of $M^\dWA$ and $B$ be a basis of $\mathbb{K}^\dWA$
whose first $r$ (\resp last $\dWA-r$) vectors form a basis of $\im{M^\dWA}$ (\resp $\ker{M^\dWA}$).
Then, $\chgBaseMatr{B}{\canonBasis{\dWA}} M^\dWA \chgBaseMatr{\canonBasis{\dWA}}{B}$
is a matrix of the form $\left(
\begin{array}{c|c}
  M' & 0 \\
  \hline
  0  & 0
\end{array} \right)$ with $M' \in \sqmatrSet{\mathbb{K}}{r}$ an invertible matrix.

We can then bound the length of
$\linClosure{\genMono{M}} = \bigcup_{i=0}^{\dWA-1} M^i \linClosure{\genMono{M^\dWA}}$
using the bound for a single invertible matrix of Lemma~\ref{lem:powN}
by observing that $\genMono{M^\dWA} = \chgBaseMatr{\canonBasis{\dWA}}{B}
\genMono{\chgBaseMatr{B}{\canonBasis{\dWA}} M^\dWA \chgBaseMatr{\canonBasis{\dWA}}{B}}
\chgBaseMatr{B}{\canonBasis{\dWA}}$,
and $\genMono{M'}$ is isomorphic (as a semigroup) to
$\genMono{\chgBaseMatr{B}{\canonBasis{\dWA}} M^\dWA \chgBaseMatr{\canonBasis{\dWA}}{B}}$
by an isomorphism which is also a linear map.
Thus, $\textup{\textsf{length}}(\linClosure{\genMono{M^\dWA}}) \leq
\textup{\textsf{length}}(\linClosure{\genMono{M'}})$
which is exponential in $r \leq \dWA$.

This approach can be generalized to an arbitrary number of matrices $M_1, \dots, M_n$,
provided that $\genMono{M_1, \dots, M_n}$ is commutative, as we show next.

If  $M_1, \dots, M_n$ are invertible $\linClosure{\genMono{M_1, \dots, M_n}}$
has a unique irreducible component $W$ containing the identity matrix
(thanks to Lemma~\ref{lem:uniqueCompoId}) and, for all $i \in \intInterv{1}{n}$,
$M_i$ acts by permutation on a finite number of irreducible components of
$\linClosure{\genMono{M_1, \dots, M_n}}$, thus there exists $N_i > 0$
such that $W M_i^{N_i} = W$.
We can take $N_1 = \dots = N_n = N$ given by Lemma~\ref{lem:powN}
(which is exponential in $\dWA$).
Thus, the length of $\linClosure{\genMono{M_1, \dots, M_n}} =
\bigcup_{j_i \in \intInterv{0}{N-1}} M_1^{j_1}M_2^{j_2} \cdots M_n^{j_r} W $
is at most $N^n$.
Like previously, we can use this result to bound the length in the general case.

For all $I \subseteq \intInterv{1}{n}$, let $M_I = \left\{ M_i \,\middle|\, i \in I \right\}$
and let, for all $k \in \mathbb{N}$,
\[\genMono{M_I}^{\leq k} =
\left\{ \prod_{i \in I} M_i^{j_i} \,\middle|\, \forall i \in I, j_i \in \intInterv{0}{k}  \right\}\]
then
\[
  \genMono{M_1, \dots, M_n} = \left(\left(\prod_{i=1}^n M_i \right)^\dWA
  \genMono{M_1, \dots, M_n} \right)
  \bigcup
  \left( \bigcup_{k=1}^{n-1} \left(
  \bigcup_{\substack{I \subseteq \intInterv{1}{n}\\|I| = k}}
  \genMono{M_I}^{\leq \dWA-1} \genMono{M_{\intInterv{1}{n}\setminus I}} \right)\right)
\]
Thus, it suffices to show that
$\linClosure{\left(\prod_{i=1}^n M_i \right)^\dWA \genMono{M_1, \dots, M_n}}$
is of length at most exponential in $\dWA$.

Let $r$ be the rank of $\left( \prod_{i=1}^n M_i \right)^\dWA$ and let $B$ be a basis of $\mathbb{K}^\dWA$
whose first $r$ (\resp last $\dWA-r$) vectors form a basis of
$\im{\left( \prod_{i=1}^n M_i \right)^\dWA}$ (\resp $\ker{\left(\prod_{i=1}^n M_i \right)^\dWA}$).

Then, $\chgBaseMatr{B}{\canonBasis{\dWA}} \left(\prod_{i=1}^n M_i \right)^\dWA \chgBaseMatr{\canonBasis{
  \dWA}}{B} = \left(
\begin{array}{c|c}
  M' & 0 \\
  \hline
  0  & 0
\end{array} \right)$ for some invertible matrix $M' \in \sqmatrSet{\mathbb{K}}{r}$.

For all $i \in \intInterv{1}{n}$, $\im{\left( \prod_{j=0}^n M_j \right)^\dWA} =
\im{\left( \prod_{j=0}^n M_j \right)^\dWA M_i}$.
Thus, $\chgBaseMatr{B}{\canonBasis{\dWA}} M_i \chgBaseMatr{\canonBasis{\dWA}}{B}$
has the form $\left(
\begin{array}{c|c}
  M'_i  & 0      \\
  \hline
  M''_i & M'''_i
\end{array} \right)$
with $M'_i \in \sqmatrSet{\mathbb{K}}{r}$ invertible
because $M'$ is invertible and $\min_{i \in \intInterv{1}{n}}\textsf{rank}(M_i') \geq
\textsf{rank}\left(\left(\prod_{i=1}^n M_i'\right)^\dWA\right)
= \textsf{rank}(M') =r$.

Since
\[
  \linClosure{\left(\prod_{i=1}^n M_i \right)^\dWA \genMono{M_1, \dots, M_n}}
  =
  \chgBaseMatr{\canonBasis{\dWA}}{B}\linClosure{\chgBaseMatr{B}{\canonBasis{\dWA}}
  \left(\prod_{i=1}^n M_i \right)^\dWA \genMono{M_1, \dots, M_n}
  \chgBaseMatr{\canonBasis{\dWA}}{B}}\chgBaseMatr{B}{\canonBasis{\dWA}}
\]

we conclude by noting that  $M' \genMono{M_1', \dots, M_n'}$
is isomorphic (as a semigroup)
to\\  $\chgBaseMatr{B}{\canonBasis{\dWA}}
\left(\prod_{i=1}^n M_i \right)^\dWA \genMono{M_1, \dots, M_n}
\chgBaseMatr{\canonBasis{\dWA}}{B}$ by an isomorphism which is a linear map.

\subsection{Proof of Theorem~\ref{thm:tradeoff}}

The inequalities stated in Theorem~\ref{thm:tradeoff}
are graphically depicted on Figure~\ref{fig:tradeoff}.


\begin{figure}[h!]
  \centering
  \scalebox{.5}{
    \begin{tikzpicture}
	\draw[help lines, step=.5cm, color=gray!90, dashed] (0.5,-1.9) grid (14.9,4.9);
      	\draw[->,ultra thick] (-.5,-2)--(15,-2) node[right]{nb of states};
	\draw[->,ultra thick] (0.5,-3)--(0.5,5) node[above]{nb of registers};

      \draw [domain=1:14, very thick, variable=\x] plot ({\x}, {1/\x*4});

	\draw[very thick] (1,-1.8) -- (1,-2.2) node[below]{$1$};
	\draw[very thick] (14,-1.8) -- (14,-2.2) node[below]{$n'\le \textsf{length}(\linHull{\mathcal{R}})$};

	\draw[very thick] (.7,4) -- (.3,4) node[left]{$d' \le d$};
	\draw[very thick] (.7,.25) -- (.3,.25) node[left]{$\textsf{dim}(\linHull{\mathcal{R}})$};

      \filldraw[color=black!60, fill=black!85,  thick](1,4) circle (0.1);
      \filldraw[color=black!60, fill=black!85, very thick](14,.25) circle (0.1);
    \end{tikzpicture}
  }
  \caption{Tradeoff between number of states and registers: drawing the shape of the set of optimal pairs.}
  \label{fig:tradeoff}
\end{figure}
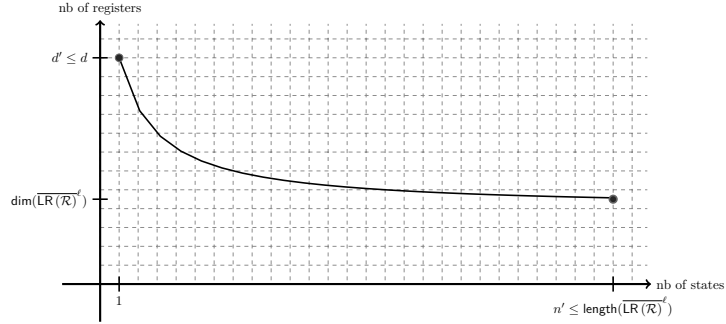

Regarding the proof of the theorem, we know, thanks to
Proposition~\ref{prop:linRepEquivCRA1Stt}, that if one wants to minimize the
number of states of the \CRA, then it is possible to build a \CRA with
a single state, and the minimal number of registers is
the dimension $d'$ of any minimal \WA realizing $f$ (hence $d'\le d$).

On the other side, if one wants to minimize the number of registers, then
Theorem~\ref{thm:mainThm} ensures that the minimal number of
registers is exactly $\textsf{dim}(\linHull{\mathcal{R}})$,
and for this value, the minimal number of states is upper bounded
by $\textsf{length}(\linHull{\mathcal{R}})$.

(This is valid in the affine setting as well thanks to Theorem~\ref{thm:charAff})

\subsection{Details of Remark~\ref{rk:length} } \label{apx:permutMerge}

To illustrate Remark~\ref{rk:length}, we give an example of rational series
which illustrates the expected situation.

\begin{example}
  \label{ex:perm}
  Let $n \in \mathbb{N}$.
  An $n$-dimensional permutation matrix is an $n$ by $n$ matrix that has exactly
  one 1 in each row and each column and zeros everywhere else.
  The set $\permutMatrSet{n}$ of permutation
  matrices together with matrix multiplication form a group of order $n!$
  that is isomorphic to the symmetric group which can be generated using two elements.
  $\linClosure{\permutMatrSet{n}}$ is one-dimensional and has $n!$ irreducible components,
  as it is a union of $n!$ lines.

  It is then possible, using a four-letter alphabet and these permutation matrices,
  to define a $2n$-dimensional \WA with a \LH that is decomposed into the union of
  $n!$ one-dimensional irreducible components and a $(n-1)$-dimensional one,
  such that all the one-dimensional components can be merged together without raising the dimension
  of the invariant.
  We proceed as follows:

  Take two generators of $\permutMatrSet{n}$, \eg let $C_n$ and $T_n$ be the matrices of $\permutMatrSet{n}$
  corresponding to the cycle $(1\, 2\, \dots\, \allowbreak n)$ and the transposition $(1\, 2)$ respectively
  and let $\mathcal{R} = (u,\mu,v)$ be the $2n$-dimensional \WA,
  on $\Sigma = \left\{ a, b, c, d \right\}$, defined by:
  $u = (1\, 2\, \dots\, n\,|\, 0\, \dots\, 0\, 1) \in \vectSet{\mathbb{K}}{2n}$,
  $\mu(a) = \left(
  \begin{array}{c|c}
    C_n & 0   \\
    \hline
    0   & I_n
  \end{array} \right)$, $\mu(b) = \left(
  \begin{array}{c|c}
    T_n & 0   \\
    \hline
    0   & I_n
  \end{array} \right)$, $\mu(c) = \left(
  \begin{array}{c|c}
    0 & 0 \\
    \hline
    0 & M
  \end{array} \right)$ and $\mu(d) = \left(
  \begin{array}{c|c}
    0 & 0  \\
    \hline
    0 & M'
  \end{array} \right)$
  where $M = \left(
  \begin{array}{c|c}
    I_{n-1} & 0 \\
    \hline
    e_1     & 1
  \end{array} \right)$ and $M' = \left(
  \begin{array}{c|c}
    C_{n-1} & 0 \\
    \hline
    0       & 1
  \end{array} \right)$.
  $v \in \matrSet{\mathbb{K}}{n}{1}$ can be arbitrary.

  Note that $\genMono{\mu(a),\mu(b)} = \left\{ \left(
  \begin{array}{c|c}
    P & 0   \\
    \hline
    0 & I_n
  \end{array} \right) \,\middle|\, P \in \permutMatrSet{n} \right\}$ and,
  for all $x \in \vectSet{\mathbb{K}}{n}$
  and $y = (y_1, y_2, \dots, y_{n-1}, 1)  \in \vectSet{\mathbb{K}}{n}$
  \begin{gather*}
  (x \,|\, y)
    \mu(c) = (0 \,|\, (y_1+1), y_2, \dots, y_{n-1}, 1)\\
    (x \,|\, y) \mu(d) = (0 \,|\, y_{n-1}, y_1, \dots, y_{n-2}, 1)
  \end{gather*}

  Then, $\lReachSet{\mathcal{R}} = S_1 \cup S_2$ where
  \begin{gather*}
    S_1 = \left\{ \Bigl( (1\, 2\, \dots\, n) P \,|\, 0\, \dots\, 0\, 1 \Bigr) \,\middle|\,
    P \in \permutMatrSet{n} \right\}\\
    S_2 = \left\{ \Bigl(0\, \dots\, 0 \,|\, l_1\, \dots\, l_{n-1} \, 1 \Bigr)
    \,\middle|\, (l_1\, \dots\, l_{n-1}) \in \mathbb{N}^{n-1} \right\}
  \end{gather*}

  Thus, $\linHull{\mathcal{R}}$ is the union of $\linSpan{S_2}$ and the $n!$ lines,
  going through the origin, directed by the vectors of $S_1$.

  Since $\dim \left( \linSpan{\permutMatrSet{n}} \right) = n-1$
  \ie $\dim \left( \linSpan{S_1} \right) = n-1 = \dim \left( \linSpan{S_2} \right)$,
  we can merge together the one-dimensional irreducible
  components into a single one (which is $\linSpan{S_1}$). We get this way
  a Z-linear invariant
  $I = \linSpan{S_1} \cup \linSpan{S_2}$
  (which is an invariant because for all $a \in \Sigma$ and $i \in \left\{ 1,2 \right\}$,
  there exists $j \in \left\{ 1,2 \right\}$ such that $S_i \mu(a) \subseteq S_j$)
  with the same dimension as $\linHull{\mathcal{R}}$
  but a length of 2, thus reducing the number of states of the corresponding $\CRA$
  to 2 while keeping the same number of registers.
\end{example}

\end{document}